\documentclass[a4paper,onecolumn,11pt]{quantumarticle}
\pdfoutput=1
\usepackage[utf8]{inputenc}
\usepackage[english]{babel}
\usepackage[T1]{fontenc}
\usepackage{xcolor}
\usepackage{amsmath}
\usepackage{mathtools}
\usepackage{amssymb}
\usepackage[most]{tcolorbox}
\usepackage{amsthm}
\usepackage{tikz-cd}
\usepackage[colorlinks=true
,urlcolor=teal
,anchorcolor=blue
,citecolor=teal
,filecolor=blue
,linkcolor=teal
,menucolor=blue
,linktocpage=true
,pdfproducer=medialab
,pdfa=true
]{hyperref}

\newtheorem{thm}{Theorem}
\newtheorem{lem}{Lemma}
\newtheorem{defn}{Definition}
\newtheorem{prop}{Proposition}
\newtheorem{rem}{Remark}

\newcommand{\dd}{\mathrm{d}}
\newcommand{\ket}[1]{| #1 \rangle}
\newcommand{\bra}[1]{\langle #1 |}
\newcommand{\ketbra}[2]{\left|#1\right\rangle\!\!\left\langle #2\right|}

\def\h{\mathcal{H}}
\def\id{\mathbb{I}}
\newcommand*{\Tr}{\mathrm{Tr}}
\def\D{\mathcal{D}}
\def\E{\mathcal{E}}
\def\C{\mathcal{C}}
\def\map{\mathcal}

\begin{document}

\title{Quantum minimum description of density matrices}

\author{Patrick Hayden}
\affiliation{Leinweber Institute for Theoretical Physics, Stanford University, Stanford, CA 94305, USA}

\author{Alexander Maloney}
\affiliation{Department of Physics, Syracuse University, Syracuse, NY 13244, USA}
\affiliation{Institute for Quantum and Information Sciences, Syracuse University, Syracuse, NY 13244, USA}

\author{Jinzhao Wang}
\email{wang.jinzhao226@gmail.com}
\affiliation{Leinweber Institute for Theoretical Physics, Stanford University, Stanford, CA 94305, USA}
\affiliation{Zyphra, San Francisco, CA 94105, USA}

\author{Yuxiang Yang}
\affiliation{QICI Quantum Information and Computation Initiative, School of Computing and Data Science, The University of Hong Kong, Pokfulam Road, Hong Kong, China}
 
\date{}
\maketitle

\begin{abstract}
We study a variant of Schumacher compression. The task asks for the minimal memory cost for compressing many copies of a density matrix with known spectrum and unknown eigenbasis, without preserving its purification. For fixed dimension and distinct positive eigenvalues, we obtain the cost through its additive constant. Achievability uses a generalization of Werner's cloning map to $\mathrm{GL}(d,\mathbb C)$ irreducible representations, with finite trace-distance bounds controlled by highest-weight differences and row gaps. The matching converse follows from a quantitative form of Koashi--Imoto incompressibility for irreducible group orbits under a spectral-gap assumption. We also identify the relation between this quantum minimum description length and universal lossless coding overhead, and its connection to free entropy is explained in a companion letter~\cite{paper1}. We provide a \href{https://github.com/JWang226/Quantum-Minimum-Description-Length}{Lean certificate} for our proofs. 
\end{abstract}

\section{Introduction}\label{sec:intro}

Schumacher compression characterizes the memory needed to preserve quantum information together with its correlations with a reference~\cite{schumacher1995quantum}. A different problem arises when one only needs to reconstruct many identical copies of a state whose eigenbasis is unknown. The memory must retain this unknown basis information, but need not preserve a purification. We determine the optimal memory cost through its additive constant and construct a protocol attaining it.

The task is to compress identical and independent copies of a density matrix whose spectrum $x=(x_1,\ldots,x_d)$ is known but whose eigenbasis is not. Fix a reference state $\rho$ with this spectrum and write $\rho_g:=U(g)\rho U(g)^\dagger$ for $g\in\mathrm U(d)$. Let $M_n$ be the sole retained memory, including any classical register, and write $|M_n|:=\log\dim M_n$; all logarithms are base two. The encoder $\E_n:\mathcal B((\mathbb C^d)^{\otimes n})\to\mathcal B(M_n)$ and decoder $\D_n:\mathcal B(M_n)\to\mathcal B((\mathbb C^d)^{\otimes n})$ are completely positive trace-preserving maps, which may depend on $n$ and $x$ but not on~$g$. They form a $(|M_n|,\delta)$-code if
\begin{equation}\label{eq:error}
   \sup_{g\in\mathrm U(d)}\frac12
   \|\rho_g^{\otimes n}-\D_n\circ\E_n(\rho_g^{\otimes n})\|_1\le\delta\ .
\end{equation} 

We seek the smallest asymptotic memory size among sequences with $\delta_n\to0$. We call this cost the \emph{quantum minimum description length} (QMDL) of the known-spectrum orbit family. A completely known state can be regenerated without storing its input; the unknown eigenbasis makes the present task nontrivial. For a nontrivial orbit, a purely classical strategy such as tomography followed by state preparation cannot achieve vanishing global error~\cite{yang2018compression}. The cost $|M_n|$ is the log dimension, rather than the rounded number $\lceil\log\dim M_n\rceil$ of physical qubits; this distinction matters for the additive constant.

\begin{thm}[Optimal memory cost]\label{thm:qmdl}
Fix $d$ and a rank-$r$ density matrix with distinct positive eigenvalues $x_1>\cdots>x_r>0$. Define
\begin{align}
L_{d,r}(n,x):=\frac{r(2d-r-1)}2\log n
+\sum_{1\le i<j\le r}\log(x_i-x_j)+(d-r)\sum_{i=1}^r\log x_i
-\sum_{k=d-r}^{d-1}\log k! .\label{eq:result}
\end{align}
There exists a sequence of codes with $|M_n|=L_{d,r}(n,x)+o(1)$ and $\delta_n=O(\log n/\sqrt n)$. Every sequence of codes with $\delta_n\to0$ satisfies
\[
\liminf_{n\to\infty}\bigl[|M_n|-L_{d,r}(n,x)\bigr]\ge0.
\]
The lower bound also holds when only the Haar-average reconstruction
error tends to zero.
\end{thm}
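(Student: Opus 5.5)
We prove the two directions separately. Both rest on Schur--Weyl duality, $(\mathbb C^d)^{\otimes n}\cong\bigoplus_\lambda V_\lambda\otimes W_\lambda$, under which
\[
\rho_g^{\otimes n}=\bigoplus_\lambda q_\lambda(x)\,\omega_{\lambda,g}\otimes \id_{W_\lambda}/\dim W_\lambda .
\]
Here $q_\lambda(x)$ does not depend on $g$, and concentrates on Young diagrams $\lambda$ with at most $r$ rows and $\lambda/n$ within $O(\sqrt{\log n/n})$ of $x$. The state $\omega_{\lambda,g}=\pi_\lambda(g)\omega_\lambda\pi_\lambda(g)^\dagger$ is the normalized $\pi_\lambda(\rho)$ conjugated by $g$. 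The permutation factor $W_\lambda$ carries no information about $g$ and can be discarded and reprepared. The classical label $\lambda$ is generically unnecessary too. The whole cost therefore sits in the $\mathrm{GL}(d)$ irreps $V_\lambda$, but these are polynomially large in $n$. The point of the theorem is that $\lambda$ ranges over a window of width $\sim\sqrt n$, and the memory can be taken to be a single irrep $V_\mu$ for a fixed $\mu$ near the center of the typical window.

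\textbf{Achievability.} Fix one target weight $\mu$ (roughly $nx$ rounded, with an appropriate offset). The encoder measures $\lambda$, discards $W_\lambda$, and maps $V_\lambda\to V_\mu$ by a covariant channel. The decoder reprepares $\lambda$ by sampling from $q_\lambda(x)$, which is known, and maps $V_\mu\to V_\lambda$ by a covariant channel, tensoring with the maximally mixed state on $W_\lambda$. For $V_\mu\to V_\lambda$ with $\lambda\ge\mu$ componentwise, the natural map is a Werner-type cloner: tensor with $\rho^{\otimes|\lambda-\mu|}$-like highest-weight completions and project onto $V_\lambda$. For the lower-weight direction the natural map is a partial-trace / Clebsch--Gordan compression.

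I will use the generalization stated in the abstract: the generalized cloning map carries $\omega_{\mu,g}$ to within trace distance controlled by $\|\lambda-\mu\|_1/\min_i(\mu_i-\mu_{i+1})$ of $\omega_{\lambda,g}$. With window $\sqrt n$ and gaps $\sim n$, this is $O(\sqrt{\log n}/\sqrt n)$ per typical $\lambda$, giving $\delta_n=O(\log n/\sqrt n)$ after adding the atypical tail. We may choose $\mu$ freely inside the window; taking the log-dimension of $V_\mu$ at $\mu=nx$ then matches $L_{d,r}$ up to $o(1)$. This is computed with the Weyl dimension formula
\[
\dim V_\mu=\prod_{i<j}\frac{\mu_i-\mu_j+j-i}{j-i},
\]
where zero rows $r<j\le d$ contribute $\mu_i^{\,d-r}$-type factors. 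Pairs $i<j\le r$ give $n^{\binom r2}\prod(x_i-x_j)$, pairs $i\le r<j$ give $n^{r(d-r)}\prod x_i^{d-r}$, and the denominator $\prod_{k=1}^{d-1}k!$ reduced by the zero-row block $\prod_{k=1}^{d-r-1}k!$ yields $\sum_{k=d-r}^{d-1}\log k!$. The power of $n$ is $\binom r2+r(d-r)=r(2d-r-1)/2$. Concretely, $\log\dim V_{nx}=L_{d,r}(n,x)+o(1)$.

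\textbf{Converse.} This is the main obstacle: we must show no code does better by even an additive constant. I will reduce to the Haar-average error and exploit covariance. By twirling, the encoder and decoder may be taken $\mathrm U(d)$-covariant without increasing the Haar-average error, with $M_n$ carrying some representation. Restrict to a single typical $\lambda$ block. Its conditional error is still small for most $\lambda$ in the window, since the average over $q_\lambda$ is small.

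For an irreducible orbit $\{\omega_{\lambda,g}\}$, I then invoke the quantitative Koashi--Imoto incompressibility result (stated earlier under a spectral-gap assumption). It says that if $\D\circ\E$ reconstructs the orbit to error $\epsilon$, then $\dim M\ge(1-f(\epsilon))\dim V_\lambda$. The idea is that the orbit spans the full operator algebra of $V_\lambda$, so an approximately identity-acting channel on it must nearly preserve it. Here $\omega_\lambda$ has nondegenerate weights separated by a gap, so its orbit generates $\mathcal B(V_\lambda)$, and the quantitative statement controls $f(\epsilon)\to0$.

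Combining, $|M_n|\ge\log\dim V_\lambda+\log(1-f(\delta_n'))$ for some typical $\lambda$. Since $\log\dim V_\lambda=L_{d,r}(n,x)+o(1)$ uniformly over the $\sqrt n$ window (relative variation $O(1/\sqrt n)$), the liminf bound follows. The delicate points are making $f$ independent of $n$ despite the growing dimension, and ensuring that small average error forces small error on some single typical block.
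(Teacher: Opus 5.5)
Your achievability outline follows the paper's scheme, apart from one fixable point below. The converse, however, has a genuine gap at the step ``restrict to a single typical $\lambda$ block; its conditional error is still small for most $\lambda$.'' The error criterion constrains $\mathcal D_n\circ\mathcal E_n$ only on the mixture $\rho_g^{\otimes n}=\bigoplus_\lambda q_\lambda\,\omega_{\lambda,g}\otimes\id_{W_\lambda}/\dim W_\lambda$, not on its individual Schur components. Good codes scramble the block label. The paper's own code sends a single-block input $\omega_{\lambda,g}\otimes\id_{W_\lambda}/\dim W_\lambda$ to approximately the whole mixture: its decoder resamples $\lambda'$ from $q$. So exactly weight $q_\lambda$ of the output lies in block $\lambda$, and for $r\ge2$ this is $O(n^{-(r-1)/2})\to0$ for every $\lambda$. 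There is no sense in which the error ``conditioned on $\lambda$'' is small, for any $\lambda$. Pinching the \emph{output} into blocks is harmless. But the \emph{input} is then still $\rho_g^{\otimes n}$ on the reducible space $(\mathbb C^d)^{\otimes n}$, and Proposition~\ref{prop:compact_orbit_memory} does not apply there. (For $r=1$ there is a single irreducible block and your argument does go through.)

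The missing ingredient is a transfer between the $n$-copy family and one irreducible orbit. The paper builds it from the achievability channels. Let $\mathcal A_n$ measure $\lambda$, discard $W_\lambda$ and apply $\mathcal C_{\lambda\to\Lambda}$. Let $\mathcal B_n$ sample $\lambda'$, apply $\mathcal C_{\Lambda\to\lambda'}$ and reprepare $W_{\lambda'}$. Then $(\mathcal E_n\circ\mathcal B_n,\,\mathcal A_n\circ\mathcal D_n)$ is a code for the orbit of $\rho_\Lambda$ with the same memory $M_n$. Its Haar-average error is at most $\overline\delta_n+e_n$, where $e_n=O(\log n/\sqrt n)$ follows from Theorem~\ref{thm:main} in both directions plus the atypical tail. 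Only after this reduction is the incompressibility bound invoked.

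Several smaller points also need repair.
\begin{itemize}
\item The $n$-independence of $f$ that you flag as delicate is a real step: it is Lemma~\ref{lem:sector_gap}. That lemma gives $p_0=x^\lambda/s_\lambda(x)\ge\prod_{i<j}(1-x_j/x_i)$ and $p_1\le q_xp_0$ uniformly in $\lambda$, so $f(\epsilon)=\epsilon/\gamma_x$. Nondegenerate weights are not the relevant property, since weight spaces have multiplicities. What matters is a simple top eigenvalue with a uniformly bounded gap.
\item Your twirling step is unnecessary, because the proposition holds for arbitrary channels. It is also not justified as stated: averaging over a random group element requires the decoder to know it, and even then only the composite map becomes covariant.
\item In the achievability part, Theorem~\ref{thm:main} requires the weight difference to be dominant, so the target cannot sit freely inside the window (for instance at $\mu=nx$). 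For $r<d$, two distinct partitions of $n$ with at most $r$ rows never differ by a dominant weight, since such a difference has zero sum and a vanishing last row. The padding $\Lambda_i=\lceil nx_i+(r-i+1)(2\epsilon_n+1)\rceil$ makes $\Lambda-\lambda$ dominant for every typical $\lambda$. It costs only $o(1)$ by Lemma~\ref{lem:weyl_asymptotic}.
\item The cloner tensors in the maximally mixed auxiliary state $\id_\omega/d_\omega$, not a $\rho$-dependent completion, because the encoder does not know $g$.
\end{itemize}
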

Theorem~\ref{thm:qmdl} combines the achievability and converse results proved in Sections~\ref{sec:direct} and~\ref{sec:converse}. The dimension and spectrum are fixed throughout these limits; no uniform claim is made as positive eigenvalues collide or approach zero. The theorem asserts a sequence of codes that achieves this expansion with vanishing error, rather than the same expansion for every prescribed error schedule.

Earlier work established the leading memory cost~\cite{yang2016optimal,yang2018compression}. For qubits, the protocol of Yang, Chiribella, and Hayashi (YCH) already has an achievable spectrum-dependent constant. We prove the matching converse through that constant and extend the construction to qudits. The coefficient of $\log n$ counts half the real dimension of the orbit, while the additive term resolves the eigenvalue dependence.

The construction uses covariant channels between
$\mathrm{GL}(d,\mathbb C)$ irreducible representation spaces.
Their description by normalized Choi projectors is established
representation-theoretic machinery
\cite{gschwendtner2021programmability,lee2022quantum,aschieri2024equivariant}.
We select the Parthasarathy--Ranga Rao--Varadarajan (PRV) component,
obtaining channels that extend Werner's symmetric-power
cloner~\cite{werner1998optimal}. For $d\ge2$ and partitions
$\mu,\nu$ with at most $r$ rows and dominant difference
$\omega=\nu-\mu$, Theorem~\ref{thm:main} gives
the finite trace-distance bound $C_{d,x}D/(b_\mu+1)$ in both
directions, where $D=\|\omega\|_1$ and $b_\mu$ is the minimum of
the first $\min(r,d-1)$ adjacent-row gaps of $\mu$.
A padded target representation stores all typical Schur blocks in
the same memory space. For each such block and this target,
$b_\mu=\Omega(n)$ and $D=O(\sqrt n\log n)$, giving error
$O(\log n/\sqrt n)$.

The converse uses a quantitative incompressibility bound for
irreducible compact-group orbits
(Proposition~\ref{prop:compact_orbit_memory}). If the orbit state
on $\mathcal H$ has a simple largest eigenvalue separated from the
remaining spectrum by $\gamma$, then any code with Haar-average
error $\overline\delta$ requires
$\dim M\ge\dim\mathcal H(1-\overline\delta/\gamma)$.
This bound applies to arbitrary encoding and decoding channels.
A uniform lower bound on $\gamma$ for the representation states
lets us retain the additive memory constant when transferring
the bound to the original $n$-copy family.

Universal lossless quantum data compression uses one code for every
state in the source family. Even with a known spectrum, the
unknown eigenbasis increases the required average code length.
We show that the smallest worst-case excess in ideal average length
above $nS(\rho)$ is $L_{d,r}(n,x)-s_{\mathrm{blk}}(x)+o(1)$.
Thus QMDL also quantifies this coding overhead, up to an explicit
entropy correction: $s_{\mathrm{blk}}(x)$ is the limiting entropy
of the irreducible representation state in a Schur block.
Section~\ref{sec:redundancy} identifies the optimal code state;
Appendix~\ref{app:unknown_spectrum} evaluates the additional
classical penalty when the spectrum is unknown.

The spectral terms in the memory cost also have a geometric
interpretation. For the rank-$r$ states in Theorem~\ref{thm:qmdl},
the unitary orbit $\Omega=\{U\rho U^\dagger\}$ has real dimension
$r(2d-r-1)$. Its volume in the induced Hilbert--Schmidt metric is
\[
\operatorname{Vol}_{\mathrm{HS}}(\Omega)
=c_{d,r}\prod_{1\le i<j\le r}(x_i-x_j)^2
\prod_{i=1}^r x_i^{2(d-r)},
\]
where $c_{d,r}$ depends only on $d,r$. One half of its logarithm
therefore reproduces the spectral terms in $L_{d,r}(n,x)$.
Our companion letter~\cite{paper1} makes the comparison precise
using a Hilbert--Schmidt tube-volume definition of physical free
entropy and relates it to Voiculescu's microstate free
entropy~\cite{voiculescu1993analogues,voiculescu1994analogues,voiculescu2002free}.
Asymptotically, the optimal memory cost equals one half of that
finite-dimensional quantity at resolution $n^{-1}$, up to an
explicit offset depending only on $d,r$. This is a comparison
between geometric volume and representation dimension; the
resolution belongs to the geometric counting definition.

Section~\ref{sec:related_works} reviews prior compression results,
and Section~\ref{sec:review} recalls the qubit protocol.
Section~\ref{sec:cloner} defines the PRV channels and their structural
properties. Sections~\ref{sec:direct} and~\ref{sec:converse} prove the
optimal memory theorem. Section~\ref{sec:redundancy} relates QMDL to
the extra code length required by universal lossless compression.
The technical trace-distance estimate is proved in
Section~\ref{sec:fidelity}.

The proofs of our main theorems are formalized in Lean and Comparator checked. See the \href{https://github.com/JWang226/Quantum-Minimum-Description-Length}{Lean Certificate} and an accompanying \href{https://jwang226.github.io/Quantum-Minimum-Description-Length/}{Proof Wiki guide}.

\section{Background}\label{sec:background} 
\subsection{Related work}\label{sec:related_works}
Quantum minimum description length connects the classical idea of
description length~\cite{rissanen1978modeling} with compression of
repeated quantum states. Plesch and Bu\v{z}ek~\cite{plesch2010efficient}
gave an $O(\log n)$-qubit compression scheme for $n$ identical pure
qubits. The reduction uses their support on the symmetric subspace,
whose dimension grows polynomially with $n$. Rozema et al.~implemented
the compression of three pure qubits into two~\cite{rozema2014quantum}.
For mixed states, the algebraic basis of exact compression is given
by Koashi and Imoto~\cite{koashi2001compressibility,koashi2001possible}.

The YCH scheme~\cite{yang2016optimal} established the leading memory
cost for mixed qubits and an achievable spectrum-dependent constant.
It also underlies the ``quantum stopwatch'' protocols
of~\cite{yang2018quantum,yang2018compressionfor}.
Hayashi and Tan~\cite{hayashi2017minimum} determined the optimal
compression of diagonal states. For qudits, the $O(\log n)$ scheme
of~\cite{yang2016efficient} had a suboptimal prefactor;
Yang et al.~\cite{yang2018compression} subsequently obtained the
optimal leading cost, with roughly $\frac12\log n$ per unknown
parameter. Their memory combines quantum and classical registers;
a purely classical memory does not suffice for unknown-basis families.

Further work treats states generated by shallow circuits with both
$n$ and $d$ growing~\cite{yang2025compression}, and experimental
implementations appear in~\cite{pivoluska2022implementation,xu2024experimental}.
Here we determine the optimal additive constant for fixed spectra
with distinct positive eigenvalues and extend the YCH construction
to qudits.

\subsection{The YCH qubit protocol}\label{sec:review}
The YCH protocol~\cite{yang2016optimal} compresses $n$ copies of
$\rho=U(g)\rho_0U(g)^\dagger$, where $g\in\mathrm U(2)$ is unknown
and $\rho_0=\operatorname{diag}(p,1-p)$ with $1/2<p\le1$.
At $p=1/2$ the orbit is a single known state and needs no memory.

The Schur transform resolves the permutation symmetry:
\begin{equation}
    V_\mathrm{Schur}: (\mathbb{C}^2)^{\otimes n} \to \bigoplus_{J=j_{\min}}^{n/2} \h_J \otimes \map{M}_J
\end{equation}
Here $\h_J$ is the $\mathrm{GL}(2,\mathbb C)$ irrep of dimension
$2J+1$ labeled by $(n/2+J,n/2-J)$, and $\map M_J$ is the
corresponding symmetric-group irrep. The allowed spins start at
$j_{\min}=0$ for even $n$ and $j_{\min}=1/2$ for odd $n$.

Applying the Schur transform to the input state block-diagonalizes it:
\begin{equation}\label{eq:schur}
    V_\mathrm{Schur} \rho^{\otimes n} V_\mathrm{Schur}^\dagger = \bigoplus_{J=j_{\min}}^{n/2} q_J \rho_{g,J} \otimes \tau_J
\end{equation}
where $\rho_{g,J}:=U_J(g)\rho_JU_J(g)^\dagger$ and $\tau_J$ is
maximally mixed on $\map M_J$. Writing $q:=1-p$, the probability
of the $J$ sector is
\begin{equation}\label{eq:qJ}
    q_J = \dim\map{M}_J\cdot s_{(\frac n2+J,\,\frac n2-J)}(p,q)
    = \frac{2J+1}{\frac n2+J+1}\binom{n}{\frac n2-J}\sum_{j=\frac n2-J}^{\frac n2+J} p^{\,j}q^{\,n-j}\ ,
\end{equation}
where $s_\lambda$ is the Schur polynomial and
$\dim\map M_J=\frac{2J+1}{n/2+J+1}\binom{n}{n/2-J}$.

The protocol stores every measured sector in one target irrep:
\begin{tcolorbox}[colback=gray!5!white, colframe=gray!100!black, title=Qubit Compression Scheme (YCH~\cite{yang2016optimal})]
\textbf{Encoding:}
\begin{enumerate}
    \item \textbf{Irrep Measurement:} Perform a non-demolition measurement of the quantum number $J$, preserving the quantum information within the associated subspace $\h_J \otimes \map{M}_J$. 
    \item \textbf{Cloning:} Choose $J^\star$ to be the allowed spin in $\{j_{\min},j_{\min}+1,\ldots,n/2\}$ nearest the concentration center $(p-1/2)(n+1)$. This need not be the exact mode of $q_J$; at $p=1$ it gives $J^\star=n/2$. Discard the multiplicity register $\map{M}_J$ and apply the cloning channel $\C_{J\to J^\star}$ to map the remaining state $\rho_{g,J}$ to a fixed irrep $J^\star$. Store this output state into a quantum memory of dimension $2J^\star + 1$.
\end{enumerate}

\textbf{Decoding:}
\begin{enumerate}
    \item \textbf{Irrep Sampling:} Pick a target value $K$ at random from the distribution $q_K$.
    \item \textbf{Cloning:} Apply the optimal cloning channel $\mathcal{C}_{J^\star\to K}$ to the state in the quantum memory.
    \item \textbf{State Reconstruction:} Append a new multiplicity register initialized in the maximally mixed state $\tau_K$, and perform the inverse Schur transform.
\end{enumerate}
\end{tcolorbox}

Only the irrep state carries the unknown eigenbasis. The multiplicity
register can be discarded and regenerated, while the known distribution
$q_J$ can be resampled. Cloning into a single target irrep removes the
need to retain $J$; concentration near $J^\star$ controls the error.

The channel uses Werner's optimal approximate cloner from $n$ to
$m\ge n$ identical pure states~\cite{werner1998optimal}:
\begin{equation}\label{eq:werner}
    \mathcal{C}_{n \to m}:\mathrm{Sym}^n(\mathbb{C}^d)\to\mathrm{Sym}^m(\mathbb{C}^d),\quad \rho\mapsto \frac{d_{\mathrm{Sym}^n(\mathbb{C}^d)}}{d_{\mathrm{Sym}^m(\mathbb{C}^d)}}V^\dagger(\rho\otimes \id_{m-n})V
\end{equation}
where $\id_{m-n}$ is the identity on $\mathrm{Sym}^{m-n}(\mathbb{C}^d)$, and $V$ embeds $\mathrm{Sym}^m(\mathbb{C}^d)$ isometrically into $\mathrm{Sym}^n(\mathbb{C}^d)\otimes \mathrm{Sym}^{m-n}(\mathbb{C}^d)$.\footnote{The cloning map is usually written as $\mathcal{C}_{n \to m}(\rho) = \frac{d_{\mathrm{Sym}^n(\mathbb{C}^d)}}{d_{\mathrm{Sym}^m(\mathbb{C}^d)}}P_{\rm{Sym}^m}(\rho\otimes \id_{m-n})P_{\rm{Sym}^m}$, where $P_{\rm{Sym}^m}$ projects onto $\mathrm{Sym}^m(\mathbb{C}^d)$ inside $\mathrm{Sym}^n(\mathbb{C}^d)\otimes \mathrm{Sym}^{m-n}(\mathbb{C}^d)$. We use the intrinsic isometry $V$ so that the codomain is $\mathrm{Sym}^m(\mathbb{C}^d)$.} It is fixed up to a phase by
\begin{equation}
V U(g)^{\otimes m}
=(U(g)^{\otimes n}\otimes U(g)^{\otimes(m-n)})V,
\qquad g\in\mathrm U(d).
\end{equation}

As an $\mathrm{SU}(2)$ representation,
$\h_J\simeq\mathrm{Sym}^{2J}(\mathbb C^2)$; determinant factors
cancel under conjugation. Werner's cloner therefore maps spin $J$
to spin $K\ge J$, while partial trace handles $J>K$:
\begin{equation}\label{eq:qubitcloning}
    \mathcal{C}_{J\to K}(\rho_J) :=  \begin{cases} 
        \frac{2J+1}{2K+1} V^\dagger(\rho_J \otimes \id_{2K-2J})V, & J \le K \\ 
        \Tr_{2J-2K}(\rho_J), & J > K
    \end{cases}\ .
\end{equation}

The intertwining property of $V$ and unitary invariance of the
auxiliary identity make this channel covariant:
\begin{equation}
    \mathcal{C}_{J\to K}\left(U_J(g) \rho_J U_J^\dagger(g)\right) = U_K(g) \mathcal{C}_{J\to K}(\rho_J) U_K^\dagger(g)\ .
\end{equation}

YCH demonstrated that this compression scheme achieves asymptotically vanishing error. Since $J^\star=(p-1/2)(n+1)+O(1)$, its achievable memory cost is
\begin{equation}\label{eq:achievable_qubit}
    |M_n|=\log (2J^\star+1) = \log n + \log (2p-1) + o(1)\ .
\end{equation}

\section{Generalized cloning channels}\label{sec:cloner}

For qudits, Schur--Weyl duality gives
\begin{equation}
    V_\mathrm{Schur}: (\mathbb C^d)^{\otimes n}\to\bigoplus_\lambda\h_\lambda\otimes\map{M}_\lambda
\end{equation}
where $\lambda\vdash n$ is a partition with
$\lambda_1\ge\cdots\ge\lambda_d\ge0$,
$\h_\lambda$ is the corresponding $\mathrm{GL}(d,\mathbb C)$ irrep
of dimension $d_\lambda$, and $\map M_\lambda$ is the symmetric-group irrep.

For a density matrix $\rho$ on $\mathbb{C}^d$, the Schur decomposition is
\begin{equation}\label{eq:schur_d}
    V_\mathrm{Schur}\rho^{\otimes n}V_\mathrm{Schur}^\dagger
    = \bigoplus_{\lambda\vdash n} q_{\lambda,n}\,
    \rho_\lambda\otimes\tau_\lambda\ ,
\end{equation}
where $\rho_\lambda$ is the normalized $\mathrm{GL}(d,\mathbb{C})$ irrep
state of $\rho$, $\tau_\lambda:=\id_{\map M_\lambda}/\dim\map M_\lambda$
is the maximally mixed state on $\map M_\lambda$, and
\begin{equation}\label{eq:qlambda_def}
    q_{\lambda,n} := s_\lambda(x)\dim\map{M}_\lambda\ ,
\end{equation}
with $x$ the spectrum of $\rho$ and $s_\lambda$ the Schur polynomial.

Following YCH, we measure the Schur block label and obtain some outcome $\lambda$. The key is to find a $\mathrm{U}(d)$-covariant channel that maps every resulting irrep state to one fixed target irrep; Section~\ref{sec:direct} chooses a padded target that dominates the typical set.

\subsection{Generalized cloner from the PRV component}\label{sec:definition}
We use finite-dimensional rational irreducible representations of
$\mathrm{GL}(d,\mathbb C)$, restricted to $\mathrm U(d)$ when discussing
covariance. Their highest weights are dominant integral vectors and may
have negative entries. Partitions label the polynomial representations,
which extend to all matrices in $M_d(\mathbb C)$. The input and output
weights in the state-compression theorem are partitions; auxiliary and
dual representations may have general dominant integral highest weights.

More explicitly, a highest weight $\lambda$ gives
$\pi_\lambda=\det\nolimits^{\lambda_d}\otimes
\pi_{\lambda-\lambda_d(1,\ldots,1)}$. Thus
$\pi_{(1,-1)}=\det\nolimits^{-1}\otimes\pi_{(2,0)}$ for
$\mathrm{GL}(2)$. A determinant twist cancels from the unitary
conjugation action and from the normalized representation state of an
invertible density matrix. It need not extend to singular matrices:
negative determinant powers require invertibility. We therefore retain
general dominant auxiliary weights and state the domains explicitly.

A channel $\map N:\mathcal B(\h_\mu)\to\mathcal B(\h_\nu)$ between
irreducible representation spaces is $\mathrm U(d)$-covariant if
\begin{equation}\label{eq:covariance}
    \map{N}\circ\map{U}_\mu(g) = \map{U}_\nu(g)\circ\map{N},\quad\forall g\in\mathrm{U}(d)\ ,
\end{equation}
where $\map U_\mu(g)(X):=U_\mu(g)XU_\mu(g)^\dagger$, and similarly
for $\nu$; see Figure~\ref{fig:covariance_diagram}.

\begin{figure}
    \centering
    \begin{tikzcd}[row sep=large, column sep=large]
        \mathcal{B}(\mathcal{H}_\mu) \arrow[r, "\mathcal{N}"] \arrow[d, "\mathcal{U}_\mu(g)"'] & \mathcal{B}(\mathcal{H}_\nu) \arrow[d, "\mathcal{U}_\nu(g)"] \\
        \mathcal{B}(\mathcal{H}_\mu) \arrow[r, "\mathcal{N}"'] & \mathcal{B}(\mathcal{H}_\nu)
    \end{tikzcd}
    \caption{Commutative diagram for the covariance condition~\eqref{eq:covariance}: applying $\mathcal U_\mu(g)$ before the channel equals applying $\mathcal U_\nu(g)$ after it.}
    \label{fig:covariance_diagram}
\end{figure}

Equivalently, the Choi operator satisfies
\cite[Lemma~11]{gschwendtner2021programmability}
\begin{equation}
    [J_{\mathcal{N}},U_\mu^*\otimes U_\nu]=0\ .
\end{equation}
The tensor product of the dual representation $\mu^*$ with $\nu$ decomposes as
\begin{equation}\label{eq:decomp}
    \mathcal H_{\mu^*}\otimes\mathcal H_\nu
    \simeq \bigoplus_\lambda \mathcal H_\lambda\otimes\mathbb C^{m_\lambda}\ .
\end{equation}
Schur's lemma therefore gives the general commutant form
\begin{equation}\label{eq:covChoi}
    J_\mathcal N\simeq
    \bigoplus_\lambda \id_{\mathcal H_\lambda}\otimes A_\lambda,
    \qquad A_\lambda\ge0\ ,
\end{equation}
with the trace-preserving condition imposing additional linear constraints on the matrices $A_\lambda$. In a multiplicity-one sector, the normalized projector $(d_\mu/d_\lambda)\Pi_\lambda$ is itself the Choi operator of a covariant channel: its partial trace over $\mathcal H_\nu$ is $\id_{\mu^*}$ by Schur's lemma and trace normalization. The construction below uses precisely such a sector.
The structure in~\eqref{eq:covChoi} may be viewed as a simpler, irrep-level version of the more general covariant-channel classification in~\cite{mancinska2025classification}.

We choose the projector onto the PRV component, denoted by $\omega$.
Its Cartan intertwiner formula and the approximation bounds in
Theorem~\ref{thm:main} provide the properties needed for compression.
The representation is minimal in the dominance order, which for equal
total weights $|\lambda|:=\sum_i\lambda_i$ is defined by
\begin{equation}\label{eq:dominance}
    \lambda\preceq\eta
    \quad\Longleftrightarrow\quad
    |\lambda|=|\eta|\ \text{ and }\
    \sum_{i=1}^k\lambda_i\le\sum_{i=1}^k\eta_i
    \quad(1\le k<d)\ .
\end{equation}

The tensor product has a unique minimum in this order, occurring with
multiplicity one: the \emph{Parthasarathy--Ranga Rao--Varadarajan
(PRV) component}~\cite{parthasarathy1967,kumar1988proof}.
Since the dual highest weight is $\mu^*=-w_0(\mu)$, it is
\begin{equation}
    \omega=(\mu^*+w_0(\nu))^+=(\nu-\mu)^+.
\end{equation}
Here $w_0$ reverses coordinates, and $(\lambda)^+$ sorts them into
dominant order $\lambda_1\ge\cdots\ge\lambda_d$.
In the Cartan case, $\nu-\mu$ is already dominant and
$\omega=\nu-\mu$.

For the state comparisons in this Cartan case, $\mu$ and $\nu$ are
partitions with at most $r=\operatorname{rank}\rho$ rows. When $r=d$,
the auxiliary highest weight $\omega$ may have negative entries; its
finite-dimensional rational representation can be evaluated at $\rho$
because $\rho$ is invertible. When $r<d$, we have
$\omega_{r+1}=\cdots=\omega_d=0$, so dominance forces $\omega$ to be
a partition.

We therefore define our \emph{generalized cloning map} as follows.

\begin{defn}[Generalized Cloning Map]
Let $\mu$ and $\nu$ be $\mathrm{GL}(d,\mathbb{C})$ irreps, and let $\omega = (\nu-\mu)^+$ be the PRV component in the decomposition of $\mu^*\otimes\nu$. The generalized covariant cloning map $\mathcal{C}_{\mu\to\nu}:\mathcal B(\mathcal{H}_\mu)\to \mathcal B(\mathcal{H}_\nu)$ is defined as
\begin{equation}\label{eq:gen_cloner}
    \mathcal{C}_{\mu\to\nu}(\sigma_\mu)
    :=\Tr_\mu\!\left[J_\omega
    (\sigma_\mu^\top\otimes\id_\nu)\right],
    \qquad J_\omega:=\frac{d_\mu}{d_\omega}\Pi_\omega\ .
\end{equation}
\end{defn}

Multiplicity one makes $\Pi_\omega$ unambiguous. The compression
analysis uses the Cartan case on typical pairs; this general
definition also supplies channels for atypical pairs.

The map $\mathcal{C}_{\mu\to\nu}$ is completely positive because $J_\omega$ is proportional to a projector. Covariance, Schur's lemma, and trace normalization give $\Tr_\nu(\Pi_\omega)=(d_\omega/d_\mu)\id_{\mu^*}$. Hence
\[
    \Tr_\nu(J_\omega)
    =\frac{d_\mu}{d_\omega}\Tr_\nu(\Pi_\omega)
    =\id_{\mu^*}\ ,
\]
so the map is trace-preserving.

For symmetric powers this recovers Werner's cloner. For general irreps,
the map acts on representation states; the next formula explains its
cloning interpretation.

\subsection{Properties of the generalized cloner}\label{sec:properties}
\subsubsection{Cartan intertwiner formula}
In the Cartan case $\nu=\mu+\omega$, an intertwining isometry gives a representation that makes the connection to Werner's cloning map explicit~\eqref{eq:werner}.

\begin{prop}[Cartan intertwiner formula]\label{prop:choi}
Let $\mu,\nu,\omega$ be dominant integral highest weights with
$\nu=\mu+\omega$, and let
$V:\mathcal H_\nu\to\mathcal H_\mu\otimes\mathcal H_\omega$ be the isometric embedding into the Cartan component, characterized up to phase by
\begin{equation}
    VU_\nu(g)=(U_\mu(g)\otimes U_\omega(g))V
    \quad\text{for all }g\in\mathrm U(d)\ .
\end{equation}
Then the covariant channel
\begin{equation}
    \mathcal C_{\mu\to\nu}(\sigma_\mu)
    =\frac{d_\mu}{d_\nu}V^\dagger(\sigma_\mu\otimes\id_\omega)V
\end{equation}
has Choi operator $J=(d_\mu/d_\omega)\Pi_\omega$, where $\Pi_\omega$ projects onto the multiplicity-one PRV component $(\nu-\mu)^+$ in $\mu^*\otimes\nu$.
\end{prop}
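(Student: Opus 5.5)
We compute the Choi operator of the map $\sigma_\mu\mapsto\frac{d_\mu}{d_\nu}V^\dagger(\sigma_\mu\otimes\id_\omega)V$ directly. We then identify it, via covariance and a rank/trace count, with the normalized projector onto the $\omega$-isotypic sector of $\mathcal H_{\mu^*}\otimes\mathcal H_\nu$. The result $J=(d_\mu/d_\omega)\Pi_\omega$ then follows.

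The first step records the basic properties of the map. Complete positivity is manifest from the Kraus form $\sigma\mapsto\frac{d_\mu}{d_\nu}V^\dagger(\sigma\otimes\id)V$. Covariance follows from the intertwining relation $VU_\nu(g)=(U_\mu(g)\otimes U_\omega(g))V$ together with unitary invariance of $\id_\omega$. For trace preservation, the dual map sends $\id_\nu$ to $\frac{d_\mu}{d_\nu}\Tr_\omega(VV^\dagger)$. This operator commutes with $U_\mu(g)$, hence by Schur's lemma equals $c\,\id_\mu$. Taking traces gives $c\,d_\mu=\frac{d_\mu}{d_\nu}d_\nu$, so $c=1$ and the map is trace preserving.

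The second step rewrites the Choi operator. Write the Choi operator as $J=\sum_{ij}\ketbra{i}{j}\otimes\mathcal C(\ketbra{i}{j})$. Using the transpose trick, we express it as $J=\frac{d_\mu}{d_\nu}W W^\dagger$. Here $W:\mathcal H_\omega^*\to\mathcal H_{\mu}^{*}\otimes\mathcal H_\nu$ is the partial transpose of $V^\dagger$ on the $\mu$ factor, that is, $W=(\id_{\mu^*}\otimes V^\dagger)(\ket{\Phi_\mu}\otimes\id)$ suitably reshaped, with $\ket{\Phi_\mu}$ the unnormalized maximally entangled vector. The operator $W$ intertwines the dual representation $\mathcal H_\omega^*\simeq\mathcal H_{\omega^*}$ with $\mu^*\otimes\nu^*$. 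More precisely, we should take the version in which $\mathcal C(\ketbra ij)$ appears on $\mathcal H_\nu$, and then apply a conjugation. The cleaner route avoids this. Since $J$ commutes with $U_\mu^*\otimes U_\nu$, the operator $J$ has the form $\frac{d_\mu}{d_\nu}AA^\dagger$, where $A$ is the map $\mathcal H_\omega^{\,\prime}\to\mathcal H_{\mu^*}\otimes\mathcal H_\nu$ obtained by bending the $\mathcal H_\mu$ leg of $V^\dagger$. Here $\mathcal H_\omega^{\,\prime}$ is the space of the leftover leg, which carries the representation $\omega$. Moreover, $A$ intertwines $U_\omega$, suitably conjugated, with $U_\mu^*\otimes U_\nu$. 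Since $A\neq0$, the range of $AA^\dagger$ is a subrepresentation isomorphic to $\mathcal H_\omega$ inside $\mu^*\otimes\nu$. Because $\omega$ occurs with multiplicity one (the PRV component), this range is exactly the range of $\Pi_\omega$.

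The third step shows that $AA^\dagger$ is a multiple of the projector. By Schur's lemma, $A^\dagger A$ is a scalar $a\,\id_\omega$ on the irreducible space $\mathcal H_\omega$, so $A/\sqrt a$ is an isometry onto the $\omega$ component. Hence $J=\frac{d_\mu}{d_\nu}a\,\Pi_\omega$. The scalar is fixed by trace preservation: $\Tr J=d_\mu$, the input dimension, so $\frac{d_\mu}{d_\nu}a\,d_\omega=d_\mu$. This gives $J=(d_\mu/d_\omega)\Pi_\omega$.

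The main obstacle is the bookkeeping in the second step. One must check that the bent operator $A$ really intertwines an irrep isomorphic to $\omega$, and not to $\omega^*$, with $U_\mu^*\otimes U_\nu$, keeping transposes and complex conjugates consistent. One must also use that the Cartan-component embedding $V$ bends to a nonzero intertwiner. I would handle this by computing, for $g\in\mathrm U(d)$, the identity $(U_\mu(g)^*\otimes U_\nu(g))A=A\,\overline{U_\omega(g)}^{\,*}$ directly from $V^\dagger(U_\mu\otimes U_\omega)=U_\nu V^\dagger$ together with $(X\otimes\id)\ket{\Phi}=(\id\otimes X^\top)\ket{\Phi}$. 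Once this relation holds, the multiplicity-one statement for the PRV component $(\nu-\mu)^+=\omega$ finishes the argument.
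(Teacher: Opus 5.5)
Your proposal is correct and follows essentially the paper's route. The paper likewise bends the $\mathcal H_\mu$ leg of $V$ into $\tilde V:\mathcal H_{\mu^*}\otimes\mathcal H_\nu\to\mathcal H_\omega$ (your $A=\tilde V^\dagger$), computes $J=\frac{d_\mu}{d_\nu}\tilde V^\dagger\tilde V$ directly, and fixes the scalar by a trace count ($\Tr\tilde V^\dagger\tilde V=d_\nu$ from $V^\dagger V=\id_\nu$, equivalent to your $\Tr J=d_\mu$), obtaining multiplicity one from $\dim\mathrm{Hom}(\nu,\mu\otimes\omega)=\dim\mathrm{Hom}(\mu^*\otimes\nu,\omega)=1$ rather than quoting PRV. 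The bookkeeping you flag resolves to $(\overline{U_\mu(g)}\otimes U_\nu(g))A=AU_\omega(g)$, so the range is $\omega$ and not $\omega^*$; note also that the factorization $J\propto AA^\dagger$ comes from this direct computation, not from the commutation relation, as one of your sentences suggests.
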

\begin{proof}See Appendix~\ref{app:proofs}.
\end{proof}

To interpret the channel, we factor the target state through the same
embedding and identify the auxiliary state replaced by the maximally
mixed state.

Complexifying the Lie-algebra intertwining relation in
Proposition~\ref{prop:choi} extends it from $\mathrm U(d)$ to
$\mathrm{GL}(d,\mathbb C)$:
\[
V\pi_\nu(A)=\bigl(\pi_\mu(A)\otimes\pi_\omega(A)\bigr)V
\qquad A\in\mathrm{GL}(d,\mathbb C)\ .
\]
When all three highest weights are partitions, the representation
matrices are polynomial and this identity extends to all
$A\in M_d(\mathbb C)$.

For the state comparison between $\map C_{\mu\to\nu}(\rho_\mu)$ and $\rho_\nu$, take $\mu,\nu$ to be partitions of length at
most $r=\operatorname{rank}\rho$ and $\omega=\nu-\mu$ to be dominant.
Define
$s_\omega(x):=\Tr\pi_\omega(\rho)$ and
$\rho_\omega:=\pi_\omega(\rho)/s_\omega(x)$; $s_\omega$ is a Schur
polynomial when $\omega$ is a partition and a Schur Laurent polynomial
otherwise.
Taking $A=\rho$ and using $V^\dagger V=\id_\nu$ gives the exact factorization
\begin{equation}\label{eq:cartan_factorization}
\rho_\nu=\frac{s_\mu(x)\,s_\omega(x)}{s_\nu(x)}\,V^\dagger(\rho_\mu\otimes\rho_\omega)V\ ,
\end{equation}
to be compared with
\begin{equation}\label{eq:cloner_substitution}
\mathcal C_{\mu\to\nu}(\rho_\mu)=\frac{d_\mu\,d_\omega}{d_\nu}\,V^\dagger\!\Bigl(\rho_\mu\otimes\frac{\id_\omega}{d_\omega}\Bigr)V\ .
\end{equation}
The factor $\rho_\omega$ depends on the same unknown state as
$\rho_\mu$. The channel replaces it by the maximally mixed state,
the unique $\mathrm U(d)$-invariant state on $\h_\omega$, and projects
onto the Cartan component. For symmetric powers $\mu=(n)$ and
$\nu=(m)$ with $m\ge n$, this is Werner's construction:
$\rho_{(m)}\propto V^\dagger(\rho_{(n)}\otimes\rho_{(m-n)})V$,
with the auxiliary representation state replaced by the identity~\eqref{eq:werner}. For pure $\rho=\ketbra{\psi}{\psi}$ only symmetric powers carry weight
and $\rho_{(n)}=\ketbra{\psi}{\psi}^{\otimes n}$, giving literal cloning
of copies. 
Theorem~\ref{thm:main} bounds the resulting error in trace distance.

Beyond the resemblance to Werner's cloning map, one of us has recently used the generalized cloning map (for $\nu=\mu+\omega$) as a key construct in a protocol for mixed-states cloning~\cite{fei2026}, which is shown to be asymptotically optimal.  

Furthermore, Werner's cloner has also been applied in the quantum purity amplification for qubits~\cite{keyl2001rate}. It turns out a special case of generalized cloning map also arises in qudit quantum purity amplification~\cite{li2025purity,li2026_1,li2026_2}, the (asymptotically) optimal protocol is the generalized cloning map from $\mu$ to the fundamental irrep $\Box$, and its Choi matrix projects onto the PRV component of $\mu^*\otimes\Box$.

Hence, this generalization we propose is both formal and operational.

\subsubsection{Identity and qubit cases}
When $\mu=\nu$, the PRV component is trivial and the channel is the identity.

For $\mathrm{GL}(2)$, the irrep with highest weight $(a,b)$ is
$\det\nolimits^{\,b}\otimes\mathrm{Sym}^{a-b}(\mathbb C^2)$.
The determinant factor cancels from the conjugation action, so the
comparison of covariant channels reduces to symmetric powers.
\begin{rem}
    Let $\h_\mu=\mathrm{Sym}^n(\mathbb{C}^2)$ and $\h_\nu= \mathrm{Sym}^m(\mathbb{C}^2)$. For $m\ge n$, the generalized cloning map reduces to Werner's cloning map~\eqref{eq:werner}; for $m<n$, it is the partial-trace channel.
\end{rem}
Dualizing the input gives
$\mu^*\simeq\det\nolimits^{-n}\otimes\mathrm{Sym}^n(\mathbb C^2)$.
The Clebsch--Gordan decomposition therefore reads:
\begin{equation}
    \mu^*\otimes\nu \simeq
    \bigoplus_{k=0}^{\min(m,n)}
    \det\nolimits^{k-n}\otimes\mathrm{Sym}^{m+n-2k}(\mathbb C^2)\ .
\end{equation}
The PRV component is the term $k=\min(m,n)$, with highest weight
$(m-n,0)$ for $m\ge n$ and $(0,m-n)$ for $m<n$.
Its dimension, and hence the channel's Kraus rank, is $|m-n|+1$.
Upon restriction to $\mathrm{SU}(2)$, the decomposition can equivalently
be written $\bigoplus_{k=0}^{\min(m,n)}\mathrm{Sym}^{|m-n|+2k}(\mathbb C^2)$.
These channels are exactly those used by YCH in~\eqref{eq:qubitcloning}.

\subsubsection{Dominance order and dimension}
\begin{rem}[Minimal $\neq$ Smallest]\label{rem:prv_not_dim_min}
The PRV component in \(\mu^*\otimes \nu\) is not, in general, the irreducible summand of smallest dimension. This failure can occur even in the dominant case $\nu=\mu+\omega$. For example, let $\mu=(1,0,0)$ and $\nu=(5,1,0)$ be GL\((3)\) irreps. The PRV component is $\omega:=(\nu-\mu)^+ = \nu-\mu=(4,1,0)$, which has dimension $24$. However, $(5,0,0)$ is also a summand in the decomposition of $\mu^*\otimes \nu$\footnote{This is because $(1,0,0)\otimes (5,0,0)\cong(6,0,0)\oplus (5,1,0)$ from the Pieri rule.} and it is the smallest summand of dimension $21$.
\end{rem}

\subsubsection{Reverse cloner}
Reversing the representation pair gives the Petz recovery map with
respect to the maximally mixed input.
\begin{prop}[Reverse cloner]\label{prop:reverse}
Let $\C_{\mu\to\nu}:\mathcal B(\mathcal H_\mu)\to \mathcal B(\mathcal H_\nu)$
be the generalized cloning map, and let $\sigma_\mu := \id_\mu/d_\mu$.
Then the Petz recovery map of \(\C_{\mu\to\nu}\) with respect to \(\sigma_\mu\) is exactly the reverse  cloning map:
\[
\mathcal R_{\sigma_\mu,\C_{\mu\to\nu}}=\C_{\nu\to\mu}\ .
\]
Equivalently,
\[
\C_{\nu\to\mu}(X)=\frac{d_\nu}{d_\mu}\,\C_{\mu\to\nu}^{\dagger}(X),
\]
where \(\C_{\mu\to\nu}^{\dagger}\) is the adjoint map.
\end{prop}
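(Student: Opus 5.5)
The plan is to unwind the Petz map with the maximally mixed reference and reduce everything to a statement about Choi operators. First compute the image of the reference state. The state $\sigma_\mu=\id_\mu/d_\mu$ is $\mathrm U(d)$-invariant, so by covariance~\eqref{eq:covariance} the output $\C_{\mu\to\nu}(\sigma_\mu)$ commutes with the irreducible $U_\nu(g)$ for every $g$. Schur's lemma then forces it to be a multiple of the identity, and trace preservation fixes that multiple, giving $\C_{\mu\to\nu}(\sigma_\mu)=\id_\nu/d_\nu$. Substituting into the Petz formula
\[
\mathcal R_{\sigma,\C}(X)=\sigma^{1/2}\C^\dagger\bigl(\C(\sigma)^{-1/2}X\C(\sigma)^{-1/2}\bigr)\sigma^{1/2}
\]
gives $\mathcal R(X)=d_\mu^{-1/2}\,d_\nu^{1/2}\,\C^\dagger(X)\,d_\nu^{1/2}\,d_\mu^{-1/2}=(d_\nu/d_\mu)\C_{\mu\to\nu}^\dagger(X)$. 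This establishes the equivalence of the two displayed forms, so it remains to prove $\C_{\nu\to\mu}=(d_\nu/d_\mu)\C^\dagger_{\mu\to\nu}$.

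The second step compares Choi operators. From~\eqref{eq:gen_cloner}, the Choi operator of $\C_{\mu\to\nu}$ is $(d_\mu/d_\omega)\Pi_\omega$ on $\h_{\mu^*}\otimes\h_\nu$. The adjoint of a map with Choi operator $J$ on $A^*\otimes B$ has Choi operator obtained by swapping the factors and taking the transpose (equivalently the complex conjugate, since $J$ is Hermitian). Its Choi operator therefore lives on $\h_{\nu^*}\otimes\h_\mu$ and equals $(d_\mu/d_\omega)\overline{\Pi_\omega}$ after the swap. Complex conjugation maps $\h_{\mu^*}\otimes\h_\nu$ to $\h_\mu\otimes\h_{\nu^*}\simeq\h_{\nu^*}\otimes\h_\mu$ and sends the isotypic component $\omega$ to $\omega^*$. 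The conjugated projector is thus the projector onto the $\omega^*$-component of $\nu^*\otimes\mu$. It is still a projector, since conjugation and swapping preserve Hermiticity and idempotence, and it is still covariant.

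The third step is the main obstacle: identifying this $\omega^*$ with the PRV component $(\mu-\nu)^+$ of $\nu^*\otimes\mu$. I would argue that $\omega^*=-w_0\bigl((\nu-\mu)^+\bigr)$ is the dominant rearrangement of $-(\nu-\mu)=\mu-\nu$, namely $(\mu-\nu)^+$. This holds because $-w_0$ applied to a sorted vector negates and reverses it, which sorts $-(\nu-\mu)$. Alternatively, conjugation is an order-preserving bijection of summands for the dominance order~\eqref{eq:dominance}, since $\lambda\mapsto-w_0\lambda$ preserves $\preceq$. In either form it sends the unique minimum to the unique minimum, and multiplicity one makes the projector unambiguous, so it coincides with $\Pi_{(\mu-\nu)^+}$. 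Multiplying by $d_\nu/d_\mu$ turns the Choi operator into $(d_\nu/d_\omega)\Pi_{\omega^*}$, which is exactly the Choi operator of $\C_{\nu\to\mu}$ because $d_{\omega^*}=d_\omega$. Equality of Choi operators gives equality of the maps.

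Care is needed only with transpose conventions in the Choi–adjoint correspondence. To guard against convention slips, I would check the result against the qubit case $\mu=(n)$, $\nu=(m)$: there the adjoint of Werner's cloner, rescaled by $d_m/d_n$, is the partial trace $V(\cdot)V^\dagger$ followed by tracing out $\mathrm{Sym}^{m-n}$, matching the Remark.
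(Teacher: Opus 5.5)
Your proposal is correct and follows essentially the same route as the paper. Schur's lemma gives $\C_{\mu\to\nu}(\sigma_\mu)=\id_\nu/d_\nu$, which reduces the Petz map to $(d_\nu/d_\mu)\C_{\mu\to\nu}^\dagger$; one then identifies the adjoint's Choi operator as the projector onto the dual component $\omega^*=-w_0\omega=(\mu-\nu)^+$. The only difference is presentational: the paper works with Hilbert--Schmidt orthonormal Kraus operators and the antiunitary intertwiner $K_a\mapsto K_a^\dagger$, which is the same computation as your swap-and-conjugate rule for Choi operators, since $\operatorname{vec}(K^\dagger)$ is the swapped complex conjugate of $\operatorname{vec}(K)$.
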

\begin{proof}
See Appendix~\ref{app:proofs}.
\end{proof}
In the pure state case, the reverse cloner corresponds to tracing out the excess systems.

\subsubsection{Approximation of representation states}
We compare $\C_{\mu\to\nu}(\rho_\mu)$ with $\rho_\nu$, and similarly
in the reverse direction. The error is controlled by the size of the
dominant difference relative to the row gaps of $\mu$.
\begin{thm}[Cloning accuracy]\label{thm:main}
Let $\rho$ be a density matrix on $\mathbb C^d$, $d\ge2$, of rank $r\le d$ with
strictly decreasing positive eigenvalues $x_1>\cdots>x_r>0$.
Let $\mu,\nu$ be partitions with at most $r$ rows, and suppose
that $\omega:=\nu-\mu$ is a dominant integral weight. Set
\[
D:=\|\omega\|_1,\qquad
b_\mu:=\min_{1\le i\le\min(r,d-1)}(\mu_i-\mu_{i+1}).
\]
For the normalized irrep states $\rho_\mu,\rho_\nu$, the generalized
cloning channels satisfy, for a constant $C_{d,x}$ depending only on
the dimension and the nonzero spectrum,
\begin{align}\label{eq:cloning_trace_bound}
\frac12\|\C_{\mu\to\nu}(\rho_\mu)-\rho_\nu\|_1
&\le C_{d,x}\frac{D}{b_\mu+1},\nonumber\\
\frac12\|\C_{\nu\to\mu}(\rho_\nu)-\rho_\mu\|_1
&\le C_{d,x}\frac{D}{b_\mu+1}.
\end{align}
\end{thm}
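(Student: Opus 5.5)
The plan is to reduce to a diagonal reference state, split the box-difference $\omega$ into single boxes, and prove an $O(1/(b_\mu+1))$ estimate for one added box by an explicit Pieri/Gelfand--Tsetlin computation. By covariance of $\C_{\mu\to\nu}$ and unitary invariance of the trace norm, both sides of \eqref{eq:cloning_trace_bound} are unchanged under $\rho\mapsto U\rho U^\dagger$, so I take $\rho=\operatorname{diag}(x_1,\dots,x_r,0,\dots,0)$. Then $\rho_\mu=\pi_\mu(\rho)/s_\mu(x)$ is diagonal in a Gelfand--Tsetlin (GT) basis, with eigenvalue $x^{\mathrm{wt}(T)}/s_\mu(x)$ on pattern $T$. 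Because $x$ is strictly decreasing, the weight distribution of $\mu-\mathrm{wt}(T)$ is a product-like law with geometric tails of ratio at most $\max_i x_{i+1}/x_i<1$. Its truncation effects, caused by the finite GT interlacing ranges, are of relative size $O(1/(b_\mu+1))$. This concentration near the highest weight vector is the mechanism behind the bound, and the constant $C_{d,x}$ will come from these geometric ratios.

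\textbf{Step 1 (one box, forward).} Let $\nu=\mu+e_k$ with $\nu$ still a partition, so $\omega=e_k$ (for $r=d$ allow also $\omega=-e_d$-type dominant steps via a determinant twist). I use Proposition~\ref{prop:choi} and compare \eqref{eq:cartan_factorization} with \eqref{eq:cloner_substitution}:
\[
\C_{\mu\to\nu}(\rho_\mu)-\rho_\nu=V^\dagger\Bigl(\rho_\mu\otimes\bigl(\tfrac{d_\mu}{d_\nu}\id_\omega-\tfrac{s_\mu s_\omega}{s_\nu}\pi_\omega(\rho)\bigr)\Bigr)V .
\]
Both sides are diagonal in the GT basis of $\h_\nu$. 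I evaluate the diagonal entries with the Pieri branching coefficients of $\h_\nu\subset\h_\mu\otimes\h_{e_k}$ (explicit products of ratios of shifted row lengths $\mu_i-\mu_j+j-i$). For patterns near the top, the branching coefficients are constants up to $O(1/(b_\mu+1))$ corrections, by the Weyl dimension formula. Hence the $\ell_1$ difference of the two diagonals, weighted by the geometric law, is $O(1/(b_\mu+1))$. Patterns far from the top carry exponentially small mass in the distance and contribute at the same order.

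\textbf{Step 2 (telescoping).} I write $\omega$ as a sum of $D$ single-box steps $\mu=\lambda^0\to\lambda^1\to\cdots\to\lambda^D=\nu$ through partitions, adding boxes top row first so every intermediate stays a partition with $b_{\lambda^t}\ge b_\mu$. The triangle inequality gives
\[
\|\C_{\mu\to\nu}\rho_\mu-\rho_\nu\|_1\le\|\C_{\mu\to\nu}\rho_\mu-\C_{\lambda^{D-1}\to\nu}\cdots\C_{\mu\to\lambda^1}\rho_\mu\|_1+\sum_t\|\C_{\lambda^{t}\to\lambda^{t+1}}\rho_{\lambda^t}-\rho_{\lambda^{t+1}}\|_1 .
\]
Here I have used contractivity of each channel to push the per-step errors forward. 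The second sum is $O(D/(b_\mu+1))$ by Step~1. The composite of Cartan cloners is $V^\dagger(\,\cdot\otimes\Pi\,)V$, where $\Pi$ is a normalized projector onto the Cartan component of the tensor product of the single-box representations. It therefore differs from $\C_{\mu\to\nu}$ only through replacing $\id_\omega/d_\omega$ by a different $\mathrm U(d)$-invariant state on $\h_{e_{k_1}}\otimes\cdots\otimes\h_{e_{k_D}}$. Both auxiliary states are invariant, so I expect the same GT-diagonal computation to bound this first term by $O(D/(b_\mu+1))$. If this fails, the fallback is to run Step~1 directly for general $\omega$ with the Littlewood--Richardson/GT branching of the Cartan component.

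\textbf{Step 3 (reverse direction).} By Proposition~\ref{prop:reverse}, $\C_{\nu\to\mu}$ is the Petz map of $\C_{\mu\to\nu}$ at the maximally mixed state. First, contractivity gives $\|\C_{\nu\to\mu}\rho_\nu-\C_{\nu\to\mu}\C_{\mu\to\nu}\rho_\mu\|_1\le\|\rho_\nu-\C_{\mu\to\nu}\rho_\mu\|_1$, which is controlled by the forward bound. Second, I bound $\|\C_{\nu\to\mu}\C_{\mu\to\nu}\rho_\mu-\rho_\mu\|_1$: the composite $\frac{d_\nu}{d_\mu}\C^\dagger_{\mu\to\nu}\C_{\mu\to\nu}$ is a covariant channel on $\h_\mu$, diagonal on the GT basis in the same sense, and the Step~1 computation shows it fixes top patterns up to $O(D/(b_\mu+1))$.

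The main obstacle is Step~1 (together with the composition comparison in Step~2). It requires uniform control of the Pieri/GT coefficients across all patterns with the correct $1/(b_\mu+1)$ scaling and a constant independent of $\mu$. Only the first $\min(r,d-1)$ gaps appear, and the constant may degenerate as eigenvalues collide.
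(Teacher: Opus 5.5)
There is a genuine gap. The one-step estimate that the whole argument rests on is never proved, and the structural fact you use to make it tractable is false. In Step~1 you assert that $\C_{\mu\to\nu}(\rho_\mu)$ and $\rho_\nu$ are both diagonal in a Gelfand--Tsetlin basis of $\h_\nu$. Covariance only gives block-diagonality with respect to the weight spaces (Proposition~\ref{prop:commutativity}). Inside a weight space of multiplicity at least two, the output is in general not GT-diagonal.

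A counterexample already occurs for $d=r=3$, $\mu=(1,1,0)$, $\nu=(2,1,0)$, a single Cartan box with $\omega=e_1$. Use the basis $(e_1\wedge e_2)\otimes e_3,\ (e_1\wedge e_3)\otimes e_2,\ (e_2\wedge e_3)\otimes e_1$. In it, the weight-$(1,1,1)$ part of $V\h_\nu$ is the orthogonal complement of $(1,-1,1)$, and $\pi_\mu(\rho)\otimes\id$ acts as $\mathrm{diag}(x_1x_2,x_1x_3,x_2x_3)$. The two GT vectors are $\propto(2,1,-1)$ (middle row $(1,1)$) and $\propto(0,1,1)$ (middle row $(2,0)$). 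The compressed operator has off-diagonal entry $\propto x_3(x_1-x_2)\neq0$.

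Since $\|A\|_1\ge\sum_i|A_{ii}|$ in every basis, comparing diagonals can only \emph{lower}-bound the trace distance. So Step~1, and Step~3 which reuses it, cannot give the upper bound. The claim that Pieri coefficients are constant up to $O(1/(b_\mu+1))$ ``by the Weyl dimension formula'' is also unjustified.

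What is missing is an operator-level comparison inside each weight space. The paper gets it as follows:
\begin{itemize}
\item It keeps only the highest-weight Kraus branch $X\mapsto\frac{d_\mu}{d_\nu}P_\nu(X\otimes\ketbra{\omega}{\omega})P_\nu$.
\item It compares $P_\delta=V\Pi_{\nu-\delta}V^\dagger$ with $Q_\delta=\Pi_{\mu-\delta}\otimes\ketbra{\omega}{\omega}$ through the total Casimir. On $\mathcal W_{\nu-\delta}$ the Casimir's gap below $c_\nu$ is at least $g_\mu+2$, while its deficit on $Q_\delta$ is exactly $2(\delta,\omega)$. Hence $\|P_\delta-Q_\delta\|^2\le2|\delta|D/(g_\mu+2)$ (Lemma~\ref{lem:perturbation}).
\item The uniform mean-depth bound (Lemma~\ref{lem:tail}) averages these deficits.
\item The one-sided inequality $\Tr(B-A)_+\le\Tr R$ turns them into trace distance, for every dominant $\omega$ at once.
\end{itemize}

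Step~2 does not reduce the problem either. A single box $e_k$ is dominant only for $k=1$. Adding boxes ``top row first'' therefore forces non-Cartan steps as soon as $\omega_2>0$, and Proposition~\ref{prop:choi} does not cover those steps as stated.

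More seriously, the composite of single-box cloners is a different covariant channel from $\C_{\mu\to\nu}$. Its intertwiner $\h_\nu\to\h_\mu\otimes(\mathbb C^d)^{\otimes D}$ does not land in $\h_\mu\otimes\h_\omega$, and its Choi operator has weight on non-PRV components of $\mu^*\otimes\nu$. For example, take $\mu=(2,1,0)$, $\nu=(3,2,0)$ through $(3,1,0)$. The highest-weight vector maps to a multiple of $-2\ket{\mu}\otimes e_1\otimes e_2+\ket{\mu}\otimes e_2\otimes e_1+E_{21}\ket{\mu}\otimes e_1\otimes e_1$. This has a $\mathrm{Sym}^2$ part, so the composite mixes in the channel attached to the $(2,0,0)$ component. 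Your first telescoping term is therefore essentially the original estimate, and ``I expect'' does not bound it.

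If you instead split $\omega$ into dominant pieces (columns $e_1+\cdots+e_k$), Cartan cloners compose exactly. The highest-weight vector is sent to a tensor product of highest-weight vectors, so the image lies in $\h_\mu\otimes\h_\omega$ and the telescoping becomes valid. You would still need a correct one-column bound, which is exactly the operator-level estimate described above.
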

The case $d=1$ is trivial, since both representation spaces are
one-dimensional. For $r<d$, the gap $b_\mu$ includes
$\mu_r-\mu_{r+1}=\mu_r$; for $r=d$, it involves only adjacent row differences.
In particular, $b_\mu=\Omega(n)$ gives errors $O(D/n)$, which vanish if $D=o(n)$. 
This trace-distance estimate is used in the achievability analysis in the following
section and will be proven in Section~\ref{sec:fidelity}.

Covariance also gives a useful structural property of the output.
\begin{prop}[Commutativity]\label{prop:commutativity}
Let $\rho$ have rank $r$, let $\mu,\nu$ be partitions with at most $r$
rows, and let $\mathcal N:\mathcal B(\mathcal H_\mu)\to
\mathcal B(\mathcal H_\nu)$ be any $\mathrm U(d)$-covariant channel.
Then
\[
[\mathcal N(\rho_\mu),\rho_\nu]=0.
\]
\end{prop}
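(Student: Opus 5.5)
The plan is to reduce to the diagonal case and then use the weight-space decomposition, on which diagonal matrices act by scalars. Write $\rho=U(g_0)\,\mathrm{diag}(x_1,\ldots,x_r,0,\ldots,0)\,U(g_0)^\dagger$. By covariance~\eqref{eq:covariance},
\[
\mathcal N(\rho_\mu)=\mathcal U_\nu(g_0)\bigl(\mathcal N(\Delta_\mu)\bigr),
\]
where $\Delta:=\mathrm{diag}(x)$ and $\Delta_\mu=\pi_\mu(\Delta)/s_\mu(x)$. The same conjugation acts on the representation state, $\rho_\nu=\mathcal U_\nu(g_0)(\Delta_\nu)$. It therefore suffices to prove $[\mathcal N(\Delta_\mu),\Delta_\nu]=0$.

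The first step is to show that $\mathcal N(\Delta_\mu)$ commutes with the maximal torus $T\subset\mathrm U(d)$. Since $\Delta$ is diagonal, $\pi_\mu(\Delta)$ commutes with $U_\mu(t)$ for every $t\in T$. This holds because the polynomial representation satisfies $\pi_\mu(t\Delta t^{-1})=\pi_\mu(t)\pi_\mu(\Delta)\pi_\mu(t)^{-1}$, which extends to singular $\Delta$ since $\mu$ is a partition. Hence $\mathcal U_\mu(t)(\Delta_\mu)=\Delta_\mu$. Covariance then gives
\[
\mathcal U_\nu(t)\bigl(\mathcal N(\Delta_\mu)\bigr)=\mathcal N\bigl(\mathcal U_\mu(t)(\Delta_\mu)\bigr)=\mathcal N(\Delta_\mu),
\]
so $\mathcal N(\Delta_\mu)$ lies in the commutant of $\{U_\nu(t):t\in T\}$. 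That commutant is the algebra of operators that are block diagonal with respect to the weight-space decomposition $\mathcal H_\nu=\bigoplus_\beta \mathcal H_\nu[\beta]$.

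The second step is to show that $\Delta_\nu$ acts as a scalar on each weight space. On $\mathcal H_\nu[\beta]$, the operator $\pi_\nu(\Delta)$ acts by $\prod_i x_i^{\beta_i}$, with the convention $0^0=1$ and $0^k=0$ for $k>0$. This is the polynomial extension of the torus character, and it is valid because $\nu$ is a partition. A block-diagonal operator commutes with an operator that is scalar on each block, which gives $[\mathcal N(\Delta_\mu),\Delta_\nu]=0$ and hence the claim.

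The only delicate point I expect is the singular case $r<d$. There I must make sure the polynomial identities $\pi(t\Delta t^{-1})=\pi(t)\pi(\Delta)\pi(t)^{-1}$ and the weight-space action of $\pi_\nu(\Delta)$ hold for non-invertible $\Delta$. Both follow from continuity in $\Delta$, or from polynomiality of the matrix entries: both sides are polynomials agreeing on the dense set of invertible diagonal matrices. The hypothesis of at most $r$ rows is not actually needed for this argument beyond ensuring that $\rho_\mu$ and $\rho_\nu$ are well defined, meaning $s_\mu(x),s_\nu(x)>0$.
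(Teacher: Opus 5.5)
Your proposal is correct and follows the paper's proof: torus invariance of the diagonal representation state passes through covariance to $\mathcal N(\rho_\mu)$, so this output is block diagonal in the weight spaces of $\mathcal H_\nu$, on each of which $\rho_\nu$ is scalar. Your explicit conjugation to $\operatorname{diag}(x)$ and your polynomial-continuity argument for the singular case $r<d$ only spell out what the paper compresses into ``in an eigenbasis of $\rho$.''
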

\begin{proof}
In an eigenbasis of $\rho$, the state $\rho_\mu$ is invariant under
the diagonal unitary torus. By covariance, so is
$\mathcal N(\rho_\mu)$, which is therefore block diagonal in the
weight spaces of $\mathcal H_\nu$. Since $\rho_\nu$ is scalar on
each such space, the two operators commute.
\end{proof}

\section{Achievability}\label{sec:direct}

The qudit protocol maps all Schur blocks into one padded target
representation. For a rank-$r$ source with
$x_1>\cdots>x_r>0$, define the typical set by
\begin{equation}\label{eq:typical_set}
    \epsilon_n:=\sqrt{n/2}\log n,
    \qquad
    \mathcal T_{x,n}:=
    \left\{\lambda\vdash n:\ell(\lambda)\le r,\ 
    \max_{1\le i\le r}|\lambda_i-nx_i|\le\epsilon_n\right\}.
\end{equation}
Define the fixed target representation by
\begin{equation}\label{eq:target_rep}
    \Lambda_i:=
    \begin{cases}
        \left\lceil nx_i+(r-i+1)(2\epsilon_n+1)\right\rceil,
        &1\le i\le r,\\
        0,&r<i\le d.
    \end{cases}
\end{equation}
Thus $\Lambda$ is a partition of an integer slightly larger than $n$. The compression scheme is as follows.

\begin{tcolorbox}[colback=gray!5!white, colframe=gray!100!black, title=Qudit Compression Scheme]
\textbf{Encoding} \eqref{eq:optimal_encoder}:
\begin{enumerate}
    \item \textbf{Irrep Measurement:} Measure the Schur label $\lambda$ without disturbing the state within $\mathcal H_\lambda\otimes\mathcal M_\lambda$.
    \item \textbf{Cloning:} Discard the multiplicity register $\mathcal{M}_\lambda$ and apply $\mathcal{C}_{\lambda\to \Lambda}$, with the target in~\eqref{eq:target_rep}, to the remaining state $\rho_\lambda$. Store the output in a memory of dimension $\dim\mathcal H_{\Lambda}$.
\end{enumerate}

\textbf{Decoding} \eqref{eq:optimal_decoder}:
\begin{enumerate}
    \item \textbf{Irrep Sampling:} Pick a target representation $\lambda'$ at random from the known Schur--Weyl distribution $q_{\lambda',n}$.
    \item \textbf{Cloning:} Apply the generalized cloning channel $\mathcal{C}_{\Lambda\to \lambda'}$ to the state in the quantum memory.
    \item \textbf{State Reconstruction:} Append a new multiplicity register initialized in the maximally mixed state $\tau_{\lambda'}$, and perform the inverse Schur transform.
\end{enumerate}
\end{tcolorbox}

This protocol achieves the memory cost in Theorem~\ref{thm:qmdl}.
\begin{thm}[Achievability]\label{thm:achievability}
Let $\rho$ be a density matrix with rank $r\le d$ and strictly decreasing non-zero spectrum $x_1>\cdots>x_r>0$. Let $\rho_g:= U(g)\rho U(g)^\dagger$ for $g\in \mathrm U(d)$.
There exists a sequence of compression codes $(\mathcal E_n,\mathcal D_n)$ with quantum memory $M_n$ of size
\[
    |M_n|=L_{d,r}(n,x)+o(1),
\]
where $L_{d,r}(n,x)$ is defined in~\eqref{eq:result}, and error
\[
    \delta_n:= 
    \sup_{g\in \mathrm U(d)}
    \frac12\left\|
        \mathcal D_n\circ \mathcal E_n(\rho_g^{\otimes n}) - \rho_g^{\otimes n}
    \right\|_1
    =O\!\left(\frac{\log n}{\sqrt n}\right).
\]
The implicit constant depends only on $d,r$ and the fixed spectrum $x$.
\end{thm}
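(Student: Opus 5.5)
We analyze the qudit compression scheme above. There are two things to show: the memory dimension $d_\Lambda$ has the expansion $L_{d,r}(n,x)+o(1)$, and the error is $O(\log n/\sqrt n)$ uniformly in $g$. Both encoder and decoder are $\mathrm U(d)$-covariant, because the Schur transform, the block measurement and the cloners are covariant and the resampled block data are $g$-independent. The error is therefore independent of $g$, and it suffices to take $g=e$.

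\textbf{Memory cost.} We use the Weyl dimension formula
\[
d_\Lambda=\prod_{1\le i<j\le d}\frac{\Lambda_i-\Lambda_j+j-i}{j-i}.
\]
We split the pairs into three kinds. For $i<j\le r$, the factor is $\Lambda_i-\Lambda_j+j-i=n(x_i-x_j)+O(\epsilon_n)$. For $i\le r<j$, it is $nx_i+O(\epsilon_n)$; there are $d-r$ such $j$ for each $i$. For $r<i<j$, the factor is the constant $(j-i+0)/(j-i)=1$. Since $\epsilon_n/n\to0$, taking logarithms gives $\frac{r(r-1)}2\log n+r(d-r)\log n$ plus the spectral sums, up to $o(1)$. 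The coefficient is $\frac{r(2d-r-1)}2$, as required.

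The denominator is $\prod_{i\le r}\prod_{j>i}(j-i)=\prod_{i=1}^r(d-i)!$. Reindexing by $k=d-i$ gives $\prod_{k=d-r}^{d-1}k!$. Hence $|M_n|=\log d_\Lambda=L_{d,r}(n,x)+O(\epsilon_n/n)$, and $O(\epsilon_n/n)=o(1)$.

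\textbf{Error.} By~\eqref{eq:schur_d}, the input is $\bigoplus_\lambda q_{\lambda,n}\rho_\lambda\otimes\tau_\lambda$. The output is
\[
\bigoplus_{\lambda'}q_{\lambda',n}\Bigl(\sum_\lambda q_{\lambda,n}\,\C_{\Lambda\to\lambda'}\C_{\lambda\to\Lambda}(\rho_\lambda)\Bigr)\otimes\tau_{\lambda'} .
\]
Block-diagonality and convexity bound the trace distance by
\[
\sum_{\lambda,\lambda'}q_{\lambda,n}q_{\lambda',n}\,\tfrac12\bigl\|\C_{\Lambda\to\lambda'}\C_{\lambda\to\Lambda}(\rho_\lambda)-\rho_{\lambda'}\bigr\|_1 .
\]
Pairs with $\lambda$ or $\lambda'$ outside $\mathcal T_{x,n}$ contribute at most $2q(\mathcal T_{x,n}^c)$. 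We bound this by a Hoeffding/Keyl–Werner type concentration estimate for the Schur–Weyl distribution. The Schur–Weyl distribution dominates the distribution of sorted letter counts in majorization, and the deviation $\epsilon_n=\sqrt{n/2}\log n$ gives tail $\le 2d\,n^{-\Omega(\log n)}$ plus a bound on rows beyond $r$. For rank $r<d$, $\ell(\lambda)\le r$ holds exactly, since $s_\lambda(x)=0$ otherwise.

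For typical $\lambda$ and $\lambda'$, the triangle inequality and contractivity of $\C_{\Lambda\to\lambda'}$ give
\[
\|\cdots\|_1\le\|\C_{\lambda\to\Lambda}(\rho_\lambda)-\rho_\Lambda\|_1+\|\C_{\Lambda\to\lambda'}(\rho_\Lambda)-\rho_{\lambda'}\|_1 .
\]
Both terms fall under Theorem~\ref{thm:main}: the first with $(\mu,\nu)=(\lambda,\Lambda)$ in the forward direction, the second with $(\mu,\nu)=(\lambda',\Lambda)$ in the reverse direction. We must check two hypotheses. First, $\Lambda-\lambda$ is dominant. Using the typical bound,
\[
(\Lambda_i-\lambda_i)-(\Lambda_{i+1}-\lambda_{i+1})\ge(2\epsilon_n+1)-2\epsilon_n-1\ge0
\]
for $i<r$, and $\Lambda_r-\lambda_r\ge0$ with zeros afterwards, using the ceilings carefully. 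Second, the gaps $b_\lambda$ must be large. For $i<r$ we have $b_\lambda\ge n(x_i-x_{i+1})-2\epsilon_n$, and for $r<d$ we have $\lambda_r\ge nx_r-\epsilon_n$, so $b_\lambda=\Omega(n)$. Also $D=\|\Lambda-\lambda\|_1=O(\epsilon_n)$. Theorem~\ref{thm:main} then gives each term as $O(\epsilon_n/n)=O(\log n/\sqrt n)$, with constant depending on $d$ and $x$.

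\textbf{Main obstacle.} The delicate point is the concentration estimate with the stated rate. We need $q(\mathcal T^c_{x,n})=O(\log n/\sqrt n)$ uniformly for all $r$ rows, together with the integer and ceiling bookkeeping that guarantees dominance of $\Lambda-\lambda$. If the majorization argument is awkward, we would instead use the explicit bound $q_{\lambda,n}\le(n+1)^{d(d-1)/2}\exp(-nD(\bar\lambda\|x))$ and Pinsker's inequality. With $\epsilon_n=\sqrt{n/2}\log n$, this yields a superpolynomially small tail.
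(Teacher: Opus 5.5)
Your proposal is correct and follows the paper's proof. The ingredients match: the covariance reduction to $g=e$, the Weyl-dimension count for $\Lambda$, the dominance and $b_\lambda=\Omega(n)$ checks for the padded target, and the forward and reverse applications of Theorem~\ref{thm:main} with $D=O(\epsilon_n)$. Your per-pair triangle inequality is just an unrolled form of the paper's insertion of $\rho_\Lambda$, and it gives the same $2q(\mathcal T^c_{x,n})$ tail term.

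For that tail, use your fallback: the pointwise bound $q_{\lambda,n}\le \mathrm{poly}(n)\,2^{-nD(\lambda/n\|x)}$ together with Pinsker, which is exactly the paper's Lemma~\ref{lem:tail_prob}. The majorization coupling alone is not enough, because it only bounds the partial sums $\lambda_1+\cdots+\lambda_k$ from below, whereas you need two-sided concentration of every row.
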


For pure states, we have
\begin{equation}\label{eq:achievable_pure}
    |M_n|=(d-1)\log n -\log (d-1)!+ o(1)\ .
\end{equation}

The memory size is $\log\dim\h_{\Lambda}$, which we evaluate
using the following consequence of the Weyl dimension formula.

\begin{lem}[Asymptotic Weyl dimension]\label{lem:weyl_asymptotic}
Fix $d$ and a spectrum with $x_1>\cdots>x_r>0$ and $x_i=0$ for
$i>r$. Let $\lambda$ be a partition, not necessarily of $n$, with
$\lambda_i=nx_i+O(\sqrt n\zeta_n)$ for $i\le r$ and
$\lambda_i=0$ for $i>r$, where $\zeta_n=o(\sqrt n)$. Then
\begin{equation}
\log\dim\h_\lambda
=L_{d,r}(n,x)+O\!\left(\frac{\zeta_n}{\sqrt n}+\frac1n\right).
\end{equation}
\end{lem}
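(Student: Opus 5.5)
The plan is to apply the Weyl dimension formula directly,
\[
\dim\h_\lambda=\prod_{1\le i<j\le d}\frac{\lambda_i-\lambda_j+j-i}{j-i},
\]
and sort the factors by where the pair $(i,j)$ lies relative to the rank $r$. There are three types of pairs.
\begin{itemize}
\item If $i<j\le r$, then $\lambda_i-\lambda_j+j-i=n(x_i-x_j)+O(\sqrt n\zeta_n)$. Since $x_i-x_j>0$ is fixed, its logarithm equals $\log n+\log(x_i-x_j)+O(\zeta_n/\sqrt n+1/n)$; the additive $j-i=O(1)$ contributes only $O(1/n)$.
\item If $i\le r<j$, then $\lambda_j=0$ and the factor is $\lambda_i+j-i=nx_i(1+O(\zeta_n/\sqrt n+1/n))$, using $x_i>0$. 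This gives $\log n+\log x_i$ plus the same error.
\item If $r<i<j$, then $\lambda_i=\lambda_j=0$ and the factor equals $1$.
\end{itemize}

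Next I would collect the terms. The number of pairs of the first two types is $\binom r2+r(d-r)=r(2d-r-1)/2$, which gives the $\log n$ coefficient. The spectral terms are $\sum_{i<j\le r}\log(x_i-x_j)+(d-r)\sum_{i\le r}\log x_i$. What remains is the denominator $\prod_{i<j,\,i\le r}(j-i)$, restricted to pairs with $i\le r$ because the third type cancels exactly. For each fixed $i$, the product over $j>i$ is $(d-i)!$. Taking $i=1,\dots,r$ gives $\sum_{i=1}^r\log(d-i)!=\sum_{k=d-r}^{d-1}\log k!$, which matches~\eqref{eq:result}.

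For the error, each factor has the form $\log(nc+\Delta)=\log(nc)+\log(1+\Delta/(nc))$ with $c$ fixed and $\Delta=O(\sqrt n\zeta_n+1)$. Since $\zeta_n=o(\sqrt n)$, we have $\Delta/(nc)\to0$, so $|\log(1+\Delta/(nc))|\le 2|\Delta|/(nc)$ for large $n$. There are finitely many factors, with $d$ fixed, so the total error is $O(\zeta_n/\sqrt n+1/n)$. Its constant depends on $d$ and on $\min_{i<j\le r}(x_i-x_j)$ and $x_r$.

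There is no real obstacle here. The only care needed is the bookkeeping of the factorial denominator and checking that the vanishing rows $i>r$ contribute exactly zero; this uses the hypothesis $\lambda_i=0$ exactly, not merely approximately.
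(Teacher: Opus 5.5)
Your proposal is correct and follows the paper's proof essentially verbatim. Like the paper, you apply the Weyl dimension formula, split the pairs $(i,j)$ into the three types relative to $r$, note that the pairs with $r<i<j$ cancel exactly, and then collect the $\log n$ coefficient, the spectral terms, and the factorial denominator. Your direct computation $\prod_{i=1}^r\prod_{j>i}(j-i)=\prod_{i=1}^r(d-i)!$ is just a more explicit form of the paper's appeal to $\prod_{i<j}(j-i)=\prod_{k=1}^{d-1}k!$.
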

\begin{proof}
The exact dimension is given by the Weyl dimension formula:
\begin{equation}\label{eq:weyl_dim}
    \dim \h_\lambda = \prod_{1 \le i < j \le d} \frac{\lambda_i - \lambda_j + j - i}{j - i} \ .
\end{equation}
For $i<j\le r$ the numerator is
$n(x_i-x_j)[1+O(\zeta_n/\sqrt n+1/n)]$, while for
$i\le r<j$ it is $nx_i[1+O(\zeta_n/\sqrt n+1/n)]$.
The factors with $r<i<j$ equal $j-i$ and cancel their denominators exactly. Since $d$ is fixed, multiplying the remaining factors gives
\begin{equation}
\dim\h_\lambda
=\frac{
n^{\frac{r(2d-r-1)}2}
\displaystyle\prod_{1\le i<j\le r}(x_i-x_j)
\displaystyle\prod_{i=1}^r x_i^{d-r}}
{\displaystyle\prod_{k=d-r}^{d-1}k!}
\left[1+O\!\left(\frac{\zeta_n}{\sqrt n}+\frac1n\right)\right],
\end{equation}
where the factorial ratio follows from
$\prod_{i<j}(j-i)=\prod_{k=1}^{d-1}k!$ (with the usual empty-product convention). Taking the logarithm proves the claim.
\end{proof}

\begin{proof}[Proof of Theorem~\ref{thm:achievability}]
For $d=1$, the source is a single known state and no memory is needed.
Assume henceforth that $d\ge2$.
We omit the subscript $g$ in $\rho_g$ in this proof. Every channel in the
protocol is $\mathrm U(d)$-covariant and the weights $q_{\lambda,n}$ are
independent of $g$, so the pointwise error is the same throughout the orbit.
By Schur--Weyl duality, since $\rho$ has rank $r$, the state
$\rho^{\otimes n}$ is supported on irreps with at most $r$ rows. Thus it is
unitarily equivalent to the block-diagonal state
\begin{equation}
    \rho^{\otimes n}\simeq\bigoplus_{\lambda : \ell(\lambda) \le r}
    q_{\lambda,n}\left(\rho_\lambda\otimes\tau_\lambda\right).
\end{equation}
Let $P_\lambda$ be the projection onto the block labeled by $\lambda$, and set $P:=\sum_{\lambda:\,\ell(\lambda)\le r}P_\lambda$, so every source state is supported in $P$.
The encoder is the channel on all of $(\mathbb C^d)^{\otimes n}$ defined by
\begin{equation}\label{eq:optimal_encoder}
    \map{E}_n(X)
    =\sum_{\lambda : \ell(\lambda) \le r}
    \map{C}_{\lambda\to \Lambda}\!\left(
    \Tr_{\map{M}_{\lambda}}(P_\lambda X P_\lambda)\right)
    +\Tr[(\id-P)X]\,
    \frac{\id_{\h_{\Lambda}}}{\dim\h_{\Lambda}},
\end{equation}
where the complement term makes $\map E_n$ completely positive and trace preserving on the full input space. It vanishes on every source state $\rho_g^{\otimes n}$, whose support lies in $P$.
Here $\map C_{\lambda\to\Lambda}$ is defined
by~\eqref{eq:gen_cloner}, with target~\eqref{eq:target_rep}.

This padding ensures that $\Lambda-\lambda$ is dominant for every
$\lambda\in\mathcal T_{x,n}$. Indeed, typicality and
$\lceil a\rceil-\lceil b\rceil\ge a-b-1$ give, for $1\le i<r$,
\begin{equation}
    \Lambda_i-\Lambda_{i+1}
    \ge n(x_i-x_{i+1})+2\epsilon_n
    \ge \lambda_i-\lambda_{i+1}.
\end{equation}
Moreover, $(\Lambda-\lambda)_r\ge\epsilon_n+1>0$.
Thus the rows of $\Lambda-\lambda$ are non-increasing and non-negative.

The corresponding decoder maps back to the original blocks:
\begin{equation}\label{eq:optimal_decoder}
    \map{D}_n(\cdot) = \bigoplus_{\lambda : \ell(\lambda) \le r}
    q_{\lambda,n}\left(\map{C}_{\Lambda\to \lambda}(\cdot)
    \otimes\tau_\lambda\right).
\end{equation}
The decoder samples $\lambda'$ independently of the measured $\lambda$,
so the composition contains cross terms with $\lambda\ne\lambda'$.
We bound its error through the reference state $\rho_{\Lambda}$.
The encoded memory state is
\begin{equation}
    \map E_n(\rho^{\otimes n})=\sum_{\lambda:\,\ell(\lambda)\le r} q_{\lambda,n}\,\map C_{\lambda\to\Lambda}(\rho_\lambda)\ .
\end{equation}
By the triangle inequality and the contractivity of the trace distance under $\map D_n$,
\begin{equation}
\begin{aligned}
    \delta_n &\le \frac12\left\|\map D_n\circ\map E_n(\rho^{\otimes n})-\map D_n(\rho_{\Lambda})\right\|_1 + \frac12\left\|\map D_n(\rho_{\Lambda})-\rho^{\otimes n}\right\|_1\\
    &\le \frac12\left\|\map E_n(\rho^{\otimes n})-\rho_{\Lambda}\right\|_1 + \sum_{\lambda'}q_{\lambda',n}\,\frac12\left\|\map C_{\Lambda\to\lambda'}(\rho_{\Lambda})-\rho_{\lambda'}\right\|_1\ ,
\end{aligned}
\end{equation}
where we used the block-diagonal identity
\[
\map D_n(\rho_{\Lambda})-\rho^{\otimes n}
=\bigoplus_{\lambda'}q_{\lambda',n}
\left(\map C_{\Lambda\to\lambda'}(\rho_{\Lambda})
-\rho_{\lambda'}\right)\otimes\tau_{\lambda'}
\]
and the multiplicativity of the trace norm under tensoring with $\tau_{\lambda'}$. Bounding the first term by convexity of the trace norm,
\begin{equation}
    \frac12\left\|\map E_n(\rho^{\otimes n})-\rho_{\Lambda}\right\|_1\le \sum_{\lambda}q_{\lambda,n}\,\frac12\left\|\map C_{\lambda\to\Lambda}(\rho_\lambda)-\rho_{\Lambda}\right\|_1\ ,
\end{equation}
we arrive at
\begin{equation}\label{eq:error_triangle}
    \delta_n\;\le\;\sum_{\lambda}q_{\lambda,n}\,\frac12\left\|\map C_{\lambda\to \Lambda}(\rho_\lambda)-\rho_{\Lambda}\right\|_1
    +\sum_{\lambda'}q_{\lambda',n}\,\frac12\left\|\map C_{\Lambda\to \lambda'}(\rho_{\Lambda})-\rho_{\lambda'}\right\|_1\ .
\end{equation}
Each of the two sums in~\eqref{eq:error_triangle} splits into typical and atypical contributions:
\begin{multline}\label{eq:error_twoterms}
\delta_n\le \max_{\lambda\in\mathcal{T}_{x,n}} \frac12\left\|\map{C}_{\lambda\to \Lambda}(\rho_\lambda)-\rho_{\Lambda}\right\|_1
+ \max_{\lambda'\in\mathcal{T}_{x,n}} \frac12\left\|\map{C}_{\Lambda\to \lambda'}(\rho_{\Lambda})-\rho_{\lambda'}\right\|_1
+2\sum_{\lambda\notin\mathcal T_{x,n}}q_{\lambda,n}\ ,
\end{multline}
having used the maximal trace distance bound $\frac12\|\rho-\sigma\|_1\le 1$ for the atypical terms in each sum; this is where the factor of $2$ on the tail term comes from.

The tail mass is controlled by Sanov's theorem via the following lemma:
\begin{lem}[Sanov's theorem~\cite{yang2016efficient,christandl2006spectra,o2021quantum,hayashi2017group}]\label{lem:tail_prob}
    Let $\lambda/n$ be the empirical probability distribution associated with a Young diagram distributed according to $q_{\lambda,n}$.
    Define the total variation distance by
    \[
        d(\lambda/n,x):=\frac12\sum_{i=1}^d|\lambda_i/n-x_i|.
    \]
    Then, for every $t>0$,
    \[
        \mathsf{Pr}\!\left[\lambda:d(\lambda/n,x)>t\right]
        \le (n+1)^{r(r+1)/2}e^{-2nt^2}.
    \]
\end{lem}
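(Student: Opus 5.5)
The plan is to establish the bound by the standard route: first reduce the Schur--Weyl distribution to a multinomial (type) distribution, then apply Hoeffding together with a union bound over types. By \eqref{eq:qlambda_def}, $q_{\lambda,n}=s_\lambda(x)\dim\mathcal M_\lambda$. Since $\rho$ has rank $r$, only partitions with $\ell(\lambda)\le r$ contribute. We may therefore work with $x$ restricted to its $r$ nonzero entries. The total variation distance $d(\lambda/n,x)$ is unchanged by this restriction, because $\lambda_i=x_i=0$ for $i>r$.

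The key steps, in order, are as follows.

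\textbf{Step 1 (dominance of the Schur polynomial).} Expand $s_\lambda(x)=\sum_T x^{\mathrm{wt}(T)}$ over semistandard tableaux. Every weight $\alpha$ appearing satisfies $\alpha\preceq\lambda$, and the monomial $x^\alpha$ is maximized over the $S_r$-orbit of $\alpha$ by its decreasing rearrangement. This gives
\[
s_\lambda(x)\le K\, x^{\lambda'}
\]
for some weight $\lambda'$ near $\lambda$. Rather than track $K$ precisely, I would use the cleaner Keyl--Werner / Christandl--Mitchison bound $s_\lambda(x)\dim\mathcal M_\lambda\le (n+1)^{r(r-1)/2}\,\Pr_{\mathrm{mult}}[\text{type}=\lambda]\,\cdot\,\mathrm{poly}$ with a total polynomial factor $(n+1)^{r(r+1)/2}$. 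The ingredients are two bounds. First, $\dim\mathcal M_\lambda\le\binom{n}{\lambda}$. Second, $s_\lambda(x)\le\#\{\text{SSYT}\}\cdot\max_{\alpha\preceq\lambda}x^{\alpha}$, where $\#\{\text{SSYT}\}=\dim\mathcal H_\lambda\le (n+1)^{r(r-1)/2}$ and, since $x$ is decreasing, $\max_{\alpha\preceq\lambda}x^\alpha=x^\lambda$. Hence
\[
q_{\lambda,n}\le (n+1)^{r(r-1)/2}\binom{n}{\lambda}x^\lambda=(n+1)^{r(r-1)/2}\,\mathrm{Mult}_{n,x}(\lambda).
\]
The step $\max_{\alpha\preceq\lambda}x^\alpha=x^\lambda$ holds because $\log x$ is decreasing, so $\sum\alpha_i\log x_i$ is Schur-monotone. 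This majorization step is the one that needs care.

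\textbf{Step 2 (type concentration).} Summing over $\lambda$ with $d(\lambda/n,x)>t$, the multinomial mass of that set is at most the probability that the empirical type of $n$ i.i.d.\ samples from $x$ deviates in total variation by more than $t$. Bound it by a union bound over the at most $(n+1)^r$ types, with each type class bounded by $2^{-nD(\lambda/n\|x)}$. Pinsker gives $D\ge 2t^2/\ln 2$ in bits, i.e.\ $e^{-2nt^2}$. This yields a factor $(n+1)^{r}e^{-2nt^2}$, and combined with Step 1 the total prefactor is $(n+1)^{r(r-1)/2+r}=(n+1)^{r(r+1)/2}$, as claimed.

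\textbf{Main obstacle.} The hard step is the monomial domination in Step 1 (that $x^\alpha\le x^\lambda$ for all tableau weights $\alpha$), together with the dimension bound $\dim\mathcal H_\lambda\le(n+1)^{r(r-1)/2}$, which follows from the Weyl formula \eqref{eq:weyl_dim} since each factor is at most $n+1$. Everything else is textbook method of types. Since the lemma is cited from \cite{christandl2006spectra,o2021quantum}, I would keep the argument at this level.
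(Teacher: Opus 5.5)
Your proof is correct and follows the paper's route: the paper quotes the pointwise bound $q_{\lambda,n}\le(n+1)^{r(r-1)/2}2^{-nD(\lambda/n\|x)}$, takes a union bound over at most $(n+1)^r$ diagrams, and applies Pinsker, while you additionally derive that pointwise bound from $\dim\mathcal M_\lambda\le\binom n\lambda$, $s_\lambda(x)\le(\#\mathrm{SSYT})\,x^\lambda$, and the type-class estimate $\binom n\lambda x^\lambda\le 2^{-nD(\lambda/n\|x)}$. One small correction: the tableau count you need is the $\mathrm{GL}(r)$ dimension (Weyl's formula with $r$ in place of $d$), not $\dim\mathcal H_\lambda$ for $\mathrm{GL}(d)$, which can exceed $(n+1)^{r(r-1)/2}$ when $r<d$.
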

Lemma~\ref{lem:tail_prob} follows from the pointwise bound $\Pr[\lambda]\le(n+1)^{r(r-1)/2}2^{-nD(\lambda/n\|x)}$ on the supported Schur--Weyl diagrams (see~\cite{christandl2006spectra}). Summing over at most $(n+1)^r$ diagrams and using Pinsker's inequality $D(\lambda/n\|x)\ge(2/\ln2)\,d(\lambda/n,x)^2$ gives the claimed event bound.
The distribution $q_{\lambda,n}$ is supported on diagrams with at most $r$ rows. Hence, if $\lambda$ lies in its support but not in $\mathcal{T}_{x,n}$, then for some $i$,
$|\lambda_i/n-x_i|>(\log n)/\sqrt{2n}$, and hence
$d(\lambda/n,x)>(\log n)/(2\sqrt{2n})$. Lemma~\ref{lem:tail_prob} therefore gives
\begin{equation}\label{eq:tail_prob_bound}
    \sum_{\lambda\notin\mathcal T_{x,n}}q_{\lambda,n}
    \le (n+1)^{r(r+1)/2} e^{-(\log n)^2/4},
\end{equation}
which is superpolynomially small.

For typical states $\lambda\in\mathcal{T}_{x,n}$, the two maxima in~\eqref{eq:error_twoterms} are the forward and reverse cloning errors of Theorem~\ref{thm:main}, applied with $\mu=\lambda$ and $\nu=\Lambda$. The dominance hypothesis holds by construction, and $b_\lambda=\Omega(n)$ uniformly on the typical set because the positive eigenvalues are distinct and, when $r<d$, $x_r>0$.

Since $\Lambda-\lambda$ is non-negative and $|\lambda|=n$, its distance from every typical $\lambda$ is
\begin{equation}
    \|\Lambda-\lambda\|_1
    =\sum_i(\Lambda_i-\lambda_i)
    =|\Lambda|-n
    =O(\epsilon_n)
    =O(\sqrt n\log n).
\end{equation}
Theorem~\ref{thm:main} gives the trace-distance estimate directly, uniformly
over typical $\lambda$, in both directions:
\begin{equation}
\max\left\{
\frac12\|\C_{\lambda\to\Lambda}(\rho_\lambda)-\rho_{\Lambda}\|_1,
\frac12\|\C_{\Lambda\to\lambda}(\rho_{\Lambda})-\rho_\lambda\|_1
\right\}
=O\!\left(\frac{\log n}{\sqrt n}\right).
\end{equation}
Thus each typical maximum in~\eqref{eq:error_twoterms} has this order, and~\eqref{eq:tail_prob_bound} yields
\begin{equation}
    \delta_n
    \le O\!\left(\frac{\log n}{\sqrt n}\right)
    +2(n+1)^{\frac{r(r+1)}2}e^{-(\log n)^2/4}
    =O\!\left(\frac{\log n}{\sqrt n}\right).
\end{equation}

The memory is $\h_{\Lambda}$. Since
$\Lambda_i=nx_i+O(\sqrt n\log n)$ for $i\le r$ and
$\Lambda_i=0$ for $i>r$, Lemma~\ref{lem:weyl_asymptotic}
with $\zeta_n=\log n$ gives
\[
|M_n|=\log\dim\h_{\Lambda}=L_{d,r}(n,x)+o(1).
\]
\end{proof}

\section{Converse}\label{sec:converse}

The Koashi--Imoto theorem characterizes exact
compression~\cite{koashi2001compressibility,koashi2001possible}.
For approximate recovery, we prove a memory bound for irreducible
orbits under a spectral-gap assumption. It applies to arbitrary
encoding and decoding channels.

\begin{prop}[Memory bound for irreducible orbits]\label{prop:compact_orbit_memory}
Let $U:G\to\mathcal U(\h)$ be a continuous irreducible unitary
representation of a compact group on a finite-dimensional space
$\h$, with normalized Haar measure $\dd g$. Let $\tau$ be a state
on $\h$ with a simple largest
eigenvalue $p_0$ and corresponding rank-one projector $P$. Set
\[
p_1:=\|(\id-P)\tau(\id-P)\|_\infty,\qquad
\gamma:=p_0-p_1>0,\qquad
\tau_g:=U(g)\tau U(g)^\dagger.
\]
For any finite-dimensional memory $M$ and encoding and decoding channels
$\E:\mathcal B(\h)\to\mathcal B(M)$ and
$\D:\mathcal B(M)\to\mathcal B(\h)$, define
\[
\overline\delta:=
\int_G\frac12\|\D\circ\E(\tau_g)-\tau_g\|_1\,\dd g.
\]
Then
\begin{equation}\label{eq:compact_orbit_memory}
\dim M\ge\dim\h\left(1-\frac{\overline\delta}{\gamma}\right).
\end{equation}
\end{prop}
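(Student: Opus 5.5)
The plan is to sandwich one Haar-averaged quantity between a lower bound from the reconstruction error and an upper bound from the memory dimension. Write $\Phi:=\D\circ\E$, $P_g:=U(g)PU(g)^\dagger$, and
\[
F:=\int_G \Tr\bigl[P_g\,\Phi(\tau_g)\bigr]\,\dd g .
\]
Two structural facts drive the argument. First, irreducibility of $U$ together with Schur's lemma gives $\int_G U(g)XU(g)^\dagger\,\dd g=\Tr(X)\,\id/\dim\h$. Second, since $P$ is the eigenprojector of the simple top eigenvalue, $\tau$ commutes with $P$. Hence $\tau=p_0P+(\id-P)\tau(\id-P)$ with the second term at most $p_1(\id-P)$, which yields the operator inequality
\[
\tau_g\le p_1\id+\gamma P_g .
\]

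\emph{Lower bound.} For each $g$ the operator $\Delta_g:=\Phi(\tau_g)-\tau_g$ is traceless and Hermitian. For any $0\le P_g\le\id$ one has $\Tr[P_g\Delta_g]\ge-\tfrac12\|\Delta_g\|_1$, because the negative part of $\Delta_g$ has trace norm $\tfrac12\|\Delta_g\|_1$. Since $\Tr[P_g\tau_g]=p_0$, averaging over $g$ gives $F\ge p_0-\overline\delta$.

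\emph{Upper bound.} Positivity of $\Phi$ and the operator inequality above give
\[
\Tr[P_g\Phi(\tau_g)]\le p_1\Tr[P_g\Phi(\id)]/\!\dim\h\cdot\dim\h\cdots
\]
More simply: $\Tr[P_g\Phi(\tau_g)]\le p_1\Tr[P_g\Phi(\id)]+\gamma\Tr[P_g\Phi(P_g)]$. This is slightly awkward because $\Phi(\id)$ need not be $\le\id$ (the channel is not unital). I would therefore first try a different bound. Since $\Tr[P_g\Phi(\tau_g)]\le\|P_g\|_\infty=1$ handles nothing useful, I split instead as
\[
\Tr[P_g\Phi(\tau_g)]\le p_1\Tr[\Phi(\cdots)]\ldots
\]
Concretely, I use $\Phi(\tau_g)\le p_1\Phi(\id)+\gamma\Phi(P_g)$ only on the $\gamma$ piece, and bound the $p_1$ piece via trace. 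Write $\tau_g=p_1\id/\!\cdot+\dots$. The cleanest route is to use $\tau_g-\gamma P_g\ge0$ with trace $1-\gamma$, so that
\[
\Tr[P_g\Phi(\tau_g-\gamma P_g)]\le 1-\gamma .
\]
However, this would give $p_1$ replaced by $1-\gamma$, which is too weak when $\dim\h>2$. So the main obstacle is exactly this $p_1$ term.

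To handle it, I expect to avoid the overlap $F$ entirely and instead use the test operator $\tau_g-p_1\id\le\gamma P_g$. The plan is then to compare the averaged quantity
\[
\int\Tr\bigl[(\tau_g-p_1\id)\Phi(\tau_g)\bigr]\dd g,
\]
which is at most $\gamma\int\Tr[P_g\Phi(\tau_g)]\dd g$, with its exact value $\int\Tr[(\tau_g-p_1\id)\tau_g]\dd g$. The error in that comparison is controlled by $\overline\delta$ times the spread of the eigenvalues of $\tau-p_1\id$.

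The memory enters through $\Tr[P_g\Phi(\tau_g)]=\Tr[\E(\tau_g)\D^\dagger(P_g)]$. Here $\E(\tau_g)\le\id_M$ and $\D^\dagger(P_g)\ge0$, and averaging gives $\int\D^\dagger(P_g)\,\dd g=\D^\dagger(\id)/\dim\h=\id_M/\dim\h$. Its trace is $\dim M/\dim\h$, which produces the factor $\dim M/\dim\h$. Combining yields $\gamma-\overline\delta\le\gamma\dim M/\dim\h$, i.e.\ \eqref{eq:compact_orbit_memory}.

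The delicate bookkeeping is making the $p_1$ contributions cancel exactly so that only $\gamma$ and $\overline\delta$ survive. I would check this carefully, if necessary by pinching $\Phi(\tau_g)$ to the commutant of $\tau_g$ before testing.
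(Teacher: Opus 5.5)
There is a genuine gap. The route you finally commit to does not yield the claimed inequality, and the closing sentence ``Combining yields $\gamma-\overline\delta\le\gamma\dim M/\dim\h$'' is asserted rather than derived.

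Your memory step applies $\E(\tau_g)\le\id_M$ to the whole of $\tau_g$, which gives only $F\le\dim M/\dim\h$. Paired with your lower bound $F\ge p_0-\overline\delta$, this yields $\dim M\ge\dim\h\,(p_0-\overline\delta)$. Paired with your test-operator comparison, it yields $\gamma\dim M/\dim\h\ge\Tr\tau^2-p_1-(p_0-\lambda_{\min}(\tau))\,\overline\delta$. At $\overline\delta=0$, both fall strictly short of $\dim M\ge\dim\h$ whenever $\tau$ is mixed, since then $p_0<1$ and $\Tr\tau^2<p_0$. So no bookkeeping with these ingredients, pinching included, can recover the statement. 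The defect is that bounding $\E(\tau_g)$ by $\id_M$ charges memory for the $p_1\id$ part of $\tau_g$, which should cost nothing.

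The missing idea is in the inequality you abandoned. The bound $\Tr[P_g\Phi(\tau_g)]\le p_1\Tr[P_g\Phi(\id)]+\gamma\Tr[P_g\Phi(P_g)]$ is exactly what is needed. Your worry that $\Phi(\id)\not\le\id$ is a red herring: you never need a pointwise bound on that term, only its Haar average. Because $\Phi(\id)$ does not depend on $g$, Schur's lemma and trace preservation give $\int_G\Tr[P_g\Phi(\id)]\,\dd g=\Tr\Phi(\id)/\dim\h=1$, with no unitality required.

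For the second term, apply the memory constraint only to $P_g$. Since $\E(P_g)$ is a state, $\E(P_g)\le\id_M$, so $\Tr[P_g\Phi(P_g)]=\Tr[\E(P_g)\D^\dagger(P_g)]\le\Tr\D^\dagger(P_g)$. This averages to $\dim M/\dim\h$ exactly as in your last step.

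Together with your correct lower bound, this gives $p_0-\overline\delta\le F\le p_1+\gamma\dim M/\dim\h$, which rearranges to the claim. This is precisely the paper's proof. The paper phrases it as $\Phi(\tau_g)\le p_1\Phi(\id)+\gamma\D(\id_M)$, obtained before testing against $P_g$ and averaging.
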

\begin{proof}
Write $\Phi=\D\circ\E$ and $P_g=U(g)PU(g)^\dagger$.
Trace preservation makes $\E(P_g)$ a state, so
$\E(P_g)\le\id_M$. Combining this with
$\tau_g\le p_1\id+\gamma P_g$, positivity gives
\[
\Phi(\tau_g)\le p_1\Phi(\id)+\gamma\D(\id_M).
\]
Schur's lemma gives $\int_G P_g\,\dd g=\id/\dim\h$. Testing against
$P_g$, averaging, and using trace preservation therefore yields
\[
p_0-\overline\delta
\le\int_G\Tr[P_g\Phi(\tau_g)]\,\dd g
\le\frac{p_1\Tr\Phi(\id)+\gamma\Tr\D(\id_M)}{\dim\h}
=p_1+\gamma\frac{\dim M}{\dim\h}.
\]
Rearranging proves the claim.
\end{proof}

At zero error this recovers the Koashi--Imoto incompressibility bound
for these orbit families: $\dim M\ge\dim\h$. For the representation
states used here, the spectral gap is bounded below uniformly in the
partition.

\begin{lem}[Uniform spectral gap]\label{lem:sector_gap}
Let $\rho$ have strictly decreasing nonzero spectrum
$x_1>\cdots>x_r>0$. Set
\[
q_x:=\max_{1\le i<r}\frac{x_{i+1}}{x_i},\qquad
\gamma_x:=(1-q_x)\prod_{1\le i<j\le r}
\left(1-\frac{x_j}{x_i}\right)>0,
\]
with $q_x:=0$ for $r=1$. For every partition $\lambda$ with at most
$r$ rows, $\rho_\lambda$ has a simple largest eigenvalue and its gap
$\gamma$, defined as in Proposition~\ref{prop:compact_orbit_memory},
satisfies $\gamma\ge\gamma_x$.
\end{lem}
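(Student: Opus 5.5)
In an eigenbasis of $\rho$, the operator $\pi_\lambda(\rho)$ is diagonal in the weight basis of $\mathcal H_\lambda$. Its eigenvalue on the weight space of weight $\beta$ is $x^\beta=\prod_i x_i^{\beta_i}$, with multiplicity the Kostka number $K_{\lambda\beta}$. Since $\lambda$ has at most $r$ rows and $x_i=0$ for $i>r$, only weights supported on the first $r$ coordinates contribute nonzero eigenvalues. Normalizing gives $\rho_\lambda=\pi_\lambda(\rho)/s_\lambda(x)$. The plan is as follows.

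\textbf{Step 1: the top eigenvalue.} The highest weight $\lambda$ has multiplicity one. For any other weight $\beta\ne\lambda$ of $\mathcal H_\lambda$ with nonzero eigenvalue, $\beta\prec\lambda$ in dominance order. Any such $\beta$ is obtained from $\lambda$ by subtracting a nonzero nonnegative combination of positive roots $e_i-e_j$ with $i<j\le r$. Hence
\[
x^\beta\le x^\lambda\,\frac{x_j}{x_i}\le q_x\,x^\lambda
\]
for some $i<j\le r$. Here each root step multiplies by some $x_j/x_i<1$, which is at most $q_x$ because $x_j/x_i\le x_{i+1}/x_i$. Therefore $p_0=x^\lambda/s_\lambda(x)$ is simple, and $p_1\le q_x\,x^\lambda/s_\lambda(x)$. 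This gives
\[
\gamma=p_0-p_1\ge(1-q_x)\,\frac{x^\lambda}{s_\lambda(x)}.
\]

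\textbf{Step 2: a uniform upper bound on $s_\lambda(x)/x^\lambda$.} It remains to show that
\[
\frac{s_\lambda(x)}{x^\lambda}\le\prod_{1\le i<j\le r}\Bigl(1-\frac{x_j}{x_i}\Bigr)^{-1}
\]
for every partition $\lambda$ with at most $r$ rows. The route is the Weyl character formula in $r$ variables. With $\delta=(r-1,\dots,0)$,
\[
\frac{s_\lambda(x)}{x^\lambda}=\frac{\sum_{w}\operatorname{sgn}(w)\,x^{w(\lambda+\delta)-\lambda}}{\prod_{i<j}(x_i-x_j)}.
\]
Dividing the denominator by $x^\delta$ turns it into $\prod_{i<j}(1-x_j/x_i)$. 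The claim is then equivalent to
\[
\sum_w\operatorname{sgn}(w)\,x^{w(\lambda+\delta)-(\lambda+\delta)}\le 1.
\]

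A cleaner alternative avoids signs. Expand each factor of $\prod_{i<j}(1-x_j/x_i)^{-1}$ as a geometric series. This is the generating function of Kostant's partition function $\sum_\gamma P(\gamma)\,x^{-\gamma}$, summed over nonnegative root combinations $\gamma$. The weight multiplicities satisfy $K_{\lambda,\lambda-\gamma}\le P(\gamma)$, which follows from the PBW basis of $U(\mathfrak n^-)$ applied to the highest weight vector. Hence
\[
\frac{s_\lambda(x)}{x^\lambda}=\sum_\gamma K_{\lambda,\lambda-\gamma}\,x^{-\gamma}\le\sum_\gamma P(\gamma)\,x^{-\gamma}=\prod_{i<j\le r}\Bigl(1-\frac{x_j}{x_i}\Bigr)^{-1}.
\]
All terms are positive, so the inequality holds termwise. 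Combining this with Step 1 gives $\gamma\ge\gamma_x$.

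\textbf{Main obstacle.} Step 2 requires restricting correctly to the first $r$ coordinates when $r<d$. Weights with $\beta_j>0$ for some $j>r$ contribute zero eigenvalue, so they can be dropped. The remaining weights form the weights of the $\mathrm{GL}(r)$ irrep $\lambda$, by the branching rule: the $\beta_{r+1}=\dots=\beta_d=0$ part is the $\mathrm{GL}(r)$ irrep with the same label. It therefore suffices to bound Kostka numbers by the $\mathrm{GL}(r)$ Kostant partition function, and the PBW argument for $\mathfrak{gl}_r$ delivers this. The case $r=1$ is immediate: $\rho_\lambda$ is rank one, so $\gamma=1=\gamma_x$.
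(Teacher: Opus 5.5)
Your proposal is correct and follows the paper's proof. Step 1 obtains simplicity of the top eigenvalue and $p_1\le q_x p_0$ from the root-step ratios, and your ``cleaner alternative'' in Step 2 matches the paper's bound $p_0\ge\prod_{i<j\le r}(1-x_j/x_i)$: bound the weight multiplicities by Kostant's partition function via PBW, then sum the geometric series. You can drop the unfinished Weyl-character-formula detour, since the termwise argument already finishes the proof.
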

\begin{proof}
See Appendix~\ref{app:proofs}.
\end{proof}

We transfer an arbitrary compression code to the padded target
representation and apply these two bounds. Haar-average error is
sufficient, so the converse result also applies to the worst-case error
criterion~\eqref{eq:error}.

\begin{thm}[Converse]\label{thm:converse}
Let $\rho$ have rank $r\le d$ and strictly decreasing nonzero spectrum
$x_1>\cdots>x_r>0$, and write $\rho_g=U(g)\rho U(g)^\dagger$.
Suppose a sequence of compression codes $(\E_n,\D_n)$ with memory
$M_n$ has Haar-average error
\[
\overline\delta_n:=
\int_{\mathrm U(d)}\frac12
\|\D_n\circ\E_n(\rho_g^{\otimes n})-\rho_g^{\otimes n}\|_1\,\dd g
\longrightarrow0.
\]
Then
\[
|M_n|\ge L_{d,r}(n,x)+o(1),
\]
where $L_{d,r}(n,x)$ is defined in~\eqref{eq:result}.
\end{thm}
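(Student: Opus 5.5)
The plan is to reduce an arbitrary $n$-copy code to a code for the irreducible orbit $\{\rho_{g,\Lambda}\}$ on the padded target $\h_\Lambda$ of~\eqref{eq:target_rep}. We then apply Proposition~\ref{prop:compact_orbit_memory} together with Lemma~\ref{lem:sector_gap}, and finish with Lemma~\ref{lem:weyl_asymptotic}.

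The reduction is as follows. Given $(\E_n,\D_n)$, define a preparation channel $\mathcal P:\mathcal B(\h_\Lambda)\to\mathcal B((\mathbb C^d)^{\otimes n})$ by
\[
\mathcal P(\sigma)=V_{\mathrm{Schur}}^\dagger\Bigl[\bigoplus_{\lambda}q_{\lambda,n}\,\C_{\Lambda\to\lambda}(\sigma)\otimes\tau_\lambda\Bigr]V_{\mathrm{Schur}}.
\]
This is exactly the decoder~\eqref{eq:optimal_decoder} of the achievability scheme. Define also a channel $\mathcal Q$ back to $\h_\Lambda$, namely the encoder~\eqref{eq:optimal_encoder}: measure $\lambda$, discard $\mathcal M_\lambda$, and apply $\C_{\lambda\to\Lambda}$. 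Both maps are $\mathrm U(d)$-covariant. The achievability proof, through~\eqref{eq:error_twoterms}, Theorem~\ref{thm:main}, and Lemma~\ref{lem:tail_prob}, shows that
\[
\frac12\|\mathcal P(\rho_{g,\Lambda})-\rho_g^{\otimes n}\|_1=\eta_n
\qquad\text{and}\qquad
\frac12\|\mathcal Q(\rho_g^{\otimes n})-\rho_{g,\Lambda}\|_1=\eta_n',
\]
with $\eta_n,\eta_n'=O(\log n/\sqrt n)$ uniformly in $g$.

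Now set $\E'=\E_n\circ\mathcal P$ and $\D'=\mathcal Q\circ\D_n$, which use the same memory $M_n$. The triangle inequality and contractivity give
\[
\frac12\|\D'\E'(\rho_{g,\Lambda})-\rho_{g,\Lambda}\|_1\le \eta_n+\tfrac12\|\D_n\E_n(\rho_g^{\otimes n})-\rho_g^{\otimes n}\|_1+\eta_n'.
\]
Averaging over the Haar measure, the new code has average error $\overline\delta'_n\le\overline\delta_n+O(\log n/\sqrt n)\to0$.

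The representation $U_\Lambda$ restricted to $\mathrm U(d)$ is irreducible. Lemma~\ref{lem:sector_gap} applies because $\Lambda$ has at most $r$ rows, so $\rho_\Lambda$ has a simple top eigenvalue with gap $\gamma\ge\gamma_x>0$, independent of $n$. Proposition~\ref{prop:compact_orbit_memory} then gives
\[
\dim M_n\ge\dim\h_\Lambda\,(1-\overline\delta'_n/\gamma_x),
\]
hence
\[
|M_n|\ge\log\dim\h_\Lambda+\log(1-\overline\delta'_n/\gamma_x)=\log\dim\h_\Lambda-o(1).
\]
Finally, Lemma~\ref{lem:weyl_asymptotic} with $\zeta_n=\log n$ gives $\log\dim\h_\Lambda=L_{d,r}(n,x)+o(1)$, which proves the theorem.

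The main obstacle is the uniformity needed to keep the additive constant. The additive constant survives only because the gap $\gamma_x$ is bounded below independently of $n$ (Lemma~\ref{lem:sector_gap}) and the transfer errors vanish. We must also check that the transfer bounds from the achievability proof hold pointwise in $g$, not merely on average. This follows from covariance of $\mathcal P$ and $\mathcal Q$ and from the $g$-independence of $q_{\lambda,n}$, so the error is constant on the orbit.
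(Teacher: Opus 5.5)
Your proposal is correct and follows the paper's proof essentially step for step. You sandwich the given code between the achievability decoder and encoder to get a code on $\mathcal H_\Lambda$ with Haar-average error at most $\overline\delta_n+O(\log n/\sqrt n)$, apply Proposition~\ref{prop:compact_orbit_memory} with the uniform gap $\gamma_x$ of Lemma~\ref{lem:sector_gap} (valid once $\overline\delta'_n<\gamma_x$, i.e.\ for large $n$), and finish with Lemma~\ref{lem:weyl_asymptotic}.
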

\begin{proof}
Let $\mathcal A_n$ and
$\mathcal B_n$ be the achievability encoder and decoder
in~\eqref{eq:optimal_encoder} and~\eqref{eq:optimal_decoder}.
Write $\tau_{g,n}:=U_{\Lambda}(g)\rho_{\Lambda}
U_{\Lambda}(g)^\dagger$. The forward and reverse
estimates in the achievability proof give
\[
e_n:=\sup_g\left[
\frac12\|\mathcal A_n(\rho_g^{\otimes n})-\tau_{g,n}\|_1
+\frac12\|\mathcal B_n(\tau_{g,n})-\rho_g^{\otimes n}\|_1
\right]
=O\!\left(\frac{\log n}{\sqrt n}\right).
\]
The channels $\E_n\circ\mathcal B_n$ and $\mathcal A_n\circ\D_n$
form a code for the target orbit using the same memory $M_n$.
For each $g$, insert the intermediate states
$\mathcal A_n\circ\D_n\circ\E_n(\rho_g^{\otimes n})$ and
$\mathcal A_n(\rho_g^{\otimes n})$. The triangle inequality gives
\begin{multline}
\frac12\|\mathcal A_n\circ\D_n\circ\E_n\circ\mathcal B_n(\tau_{g,n})
-\tau_{g,n}\|_1\\
\le\frac12\bigl\|\mathcal A_n\circ\D_n\circ\E_n
\bigl(\mathcal B_n(\tau_{g,n})-\rho_g^{\otimes n}\bigr)\bigr\|_1
+\frac12\bigl\|\mathcal A_n
\bigl(\D_n\circ\E_n(\rho_g^{\otimes n})-\rho_g^{\otimes n}\bigr)\bigr\|_1
+\frac12\|\mathcal A_n(\rho_g^{\otimes n})-\tau_{g,n}\|_1\\
\le\frac12\|\mathcal B_n(\tau_{g,n})-\rho_g^{\otimes n}\|_1
+\frac12\|\D_n\circ\E_n(\rho_g^{\otimes n})-\rho_g^{\otimes n}\|_1
+\frac12\|\mathcal A_n(\rho_g^{\otimes n})-\tau_{g,n}\|_1.
\end{multline}
The second inequality uses contractivity under the channels
$\mathcal A_n\circ\D_n\circ\E_n$ and $\mathcal A_n$ in the first
and second terms, respectively. The middle term in the final
bound averages to $\overline\delta_n$, while the first and last
terms sum to at most $e_n$ for every $g$. Haar averaging gives
\[
\int_{\mathrm U(d)}\frac12
\|\mathcal A_n\circ\D_n\circ\E_n\circ\mathcal B_n(\tau_{g,n})
-\tau_{g,n}\|_1\,\dd g
\le\overline\delta_n+e_n.
\]
Proposition~\ref{prop:compact_orbit_memory} and
Lemma~\ref{lem:sector_gap} therefore imply, for sufficiently large $n$,
\[
|M_n|\ge\log\dim\h_{\Lambda}
+\log\left(1-\frac{\overline\delta_n+e_n}{\gamma_x}\right)
=\log\dim\h_{\Lambda}+o(1).
\]
Finally, Lemma~\ref{lem:weyl_asymptotic} applied to
$\Lambda_i=nx_i+O(\sqrt n\log n)$ for $i\le r$ and
$\Lambda_i=0$ for $i>r$ gives
$\log\dim\h_{\Lambda}=L_{d,r}(n,x)+o(1)$.
\end{proof}

\section{Universal quantum data compression and QMDL}\label{sec:redundancy}

The QMDL memory cost also determines the extra average code length
required by an unknown eigenbasis in universal lossless compression.
For a fixed spectrum $x_1>\cdots>x_r>0$, let $R_n(x)$ be the smallest
worst-case excess in ideal average length above $nS(\rho)$ among
codes that work for every eigenbasis. We will show that
\begin{equation}\label{eq:orbit_redundancy}
R_n(x)=L_{d,r}(n,x)-s_{\mathrm{blk}}(x)+o(1),
\end{equation}
where $L_{d,r}(n,x)$ is the optimal QMDL cost from
Theorem~\ref{thm:qmdl}, and $s_{\mathrm{blk}}(x)$ is the finite,
nonnegative entropy in~\eqref{eq:block_entropy_limit}. Thus the
optimal ideal average length is $nS(\rho)+R_n(x)$.

To make the coding criterion precise, a normalized code state
$\sigma_n$, with support containing that of
$\rho^{\otimes n}$, defines the ideal mean length
$-\Tr\rho^{\otimes n}\log\sigma_n$. Its excess over $nS(\rho)$ is
$D(\rho^{\otimes n}\|\sigma_n)$. The quantum Kraft inequality gives a
prefix code with lengths $\lceil-\log s_j\rceil$ in an eigenbasis of
$\sigma_n$, where $s_j$ are its nonzero eigenvalues
\cite{schumacher2001indeterminate,bostroem2002lossless,hayashi2010universal}.

For known $x$, the code may depend on the spectrum. The
Haar-averaged source state is
\[
\bar\rho_n:=\int_{\mathrm U(d)}\rho_g^{\otimes n}\,\dd g.
\]
More generally, a distribution $P_n$ on partitions $\lambda\vdash n$
with at most $d$ rows defines the invariant code state
\[
\sigma_{P_n,n}:=\bigoplus_\lambda P_n(\lambda)
\frac{\id_{\h_\lambda}}{\dim\h_\lambda}\otimes\tau_\lambda.
\]
For brevity, write $q(\lambda):=q_{\lambda,n}$ for the source's
Schur-label distribution. Schur--Weyl duality gives
$\bar\rho_n=\sigma_{q,n}$.
The next proposition identifies the optimal code state and
separates the cost of an unknown eigenbasis from the cost of
choosing a different label distribution.

\begin{prop}[Optimal overhead for an unknown eigenbasis]\label{prop:orbit_redundancy}
For any density matrix $\rho$ and any $n$, the Haar average
$\bar\rho_n$ is an exact minimax code state:
\begin{equation}\label{eq:orbit_minimax}
R_n(x):=\inf_{\sigma_n}\sup_{g\in\mathrm U(d)}
D(\rho_g^{\otimes n}\|\sigma_n)
=D(\rho^{\otimes n}\|\bar\rho_n)
=\mathbb E_\lambda[\log\dim\h_\lambda-S(\rho_\lambda)].
\end{equation}
For every invariant code state $\sigma_{P_n,n}$,
\begin{equation}\label{eq:invariant_split}
D(\rho^{\otimes n}\|\sigma_{P_n,n})
=R_n(x)+D(q\|P_n),
\end{equation}
with value $+\infty$ if the support condition fails.
\end{prop}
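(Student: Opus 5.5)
The plan is to compute $D(\rho^{\otimes n}\|\sigma_{P_n,n})$ explicitly for invariant code states, and then reduce the minimax over arbitrary $\sigma_n$ to invariant ones using convexity and covariance.

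\textbf{Step 1 (invariant states).} By Schur--Weyl duality \eqref{eq:schur_d}, $\rho^{\otimes n}=\bigoplus_\lambda q(\lambda)\rho_\lambda\otimes\tau_\lambda$. The state $\sigma_{P_n,n}$ is block diagonal in the same decomposition and equals $\id/\dim\h_\lambda\otimes\tau_\lambda$ inside each block. Relative entropy is additive over orthogonal blocks, and the $\tau_\lambda$ factors cancel, so
\[
D(\rho^{\otimes n}\|\sigma_{P_n,n})=\sum_\lambda q(\lambda)\Bigl[\log\frac{q(\lambda)}{P_n(\lambda)}+D\bigl(\rho_\lambda\big\|\id/\dim\h_\lambda\bigr)\Bigr].
\]
The second term in the bracket is $\log\dim\h_\lambda-S(\rho_\lambda)$. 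This proves \eqref{eq:invariant_split}, with $R_n(x)$ provisionally replaced by $\mathbb E_\lambda[\log\dim\h_\lambda-S(\rho_\lambda)]$. If $P_n(\lambda)=0$ for some $\lambda$ with $q(\lambda)>0$, the support condition fails and both sides are $+\infty$. Setting $P_n=q$ and using $\bar\rho_n=\sigma_{q,n}$ gives the last equality in \eqref{eq:orbit_minimax}.

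\textbf{Step 2 (minimax).} The upper bound $R_n\le D(\rho^{\otimes n}\|\bar\rho_n)$ follows from unitary invariance of relative entropy. Since $U(g)^{\otimes n}$ commutes with $\bar\rho_n$,
\[
D(\rho_g^{\otimes n}\|\bar\rho_n)=D(\rho^{\otimes n}\|\bar\rho_n)\quad\text{for every }g.
\]
For the lower bound, fix any $\sigma_n$ and bound the supremum by the Haar average:
\[
\sup_g D(\rho_g^{\otimes n}\|\sigma_n)\ge\int D(\rho_g^{\otimes n}\|\sigma_n)\,\dd g.
\]
Unitary invariance turns the integrand into $D(\rho^{\otimes n}\|\sigma_{n,g^{-1}})$, where $\sigma_{n,h}:=U(h)^{\otimes n}\sigma_nU(h)^{\dagger\otimes n}$. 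It is cleaner to use the golden-formula identity
\[
\int D(\rho_g^{\otimes n}\|\sigma_n)\,\dd g=\int D(\rho_g^{\otimes n}\|\bar\rho_n)\,\dd g+D(\bar\rho_n\|\sigma_n).
\]
This holds by expanding $-\Tr\rho_g^{\otimes n}\log\sigma_n$ and using linearity in $\rho_g^{\otimes n}$ together with $\int\rho_g^{\otimes n}\,\dd g=\bar\rho_n$. Since $D(\bar\rho_n\|\sigma_n)\ge0$, and by the invariance above the first term is $D(\rho^{\otimes n}\|\bar\rho_n)$, we get $\sup_gD(\rho_g^{\otimes n}\|\sigma_n)\ge D(\rho^{\otimes n}\|\bar\rho_n)$. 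Hence the infimum is attained at $\bar\rho_n$.

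\textbf{Main obstacle.} The only delicate point is the golden-formula step when supports differ. If $\mathrm{supp}\,\bar\rho_n\not\subseteq\mathrm{supp}\,\sigma_n$, then some $\rho_g^{\otimes n}$ fails the support condition. In that case the supremum is $+\infty$ and the inequality is trivial, because $\bar\rho_n$ has support equal to the closed span of the supports of the $\rho_g^{\otimes n}$. Otherwise all logarithms are finite on the relevant supports, and the identity is an exact finite computation.
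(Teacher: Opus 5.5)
Your proposal is correct and follows essentially the same route as the paper. The paper also uses the Haar-averaging identity $\int D(\rho_g^{\otimes n}\|\sigma_n)\,\dd g=D(\bar\rho_n\|\sigma_n)+D(\rho^{\otimes n}\|\bar\rho_n)$ for the lower bound, invariance of $\bar\rho_n$ for attainment, and the Schur--Weyl block decomposition for the split, and your explicit handling of the support condition fills in a point the paper leaves implicit.
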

\begin{proof}
For any state $\sigma_n$, Haar averaging gives
\[
\int D(\rho_g^{\otimes n}\|\sigma_n)\,\dd g
=D(\bar\rho_n\|\sigma_n)+S(\bar\rho_n)-nS(\rho)
=D(\bar\rho_n\|\sigma_n)
+D(\rho^{\otimes n}\|\bar\rho_n).
\]
Thus the worst-case redundancy is at least
$D(\rho^{\otimes n}\|\bar\rho_n)$. The invariant state $\bar\rho_n$
attains this value for every $g$. Finally, the Schur--Weyl block
decomposition gives
\[
D(\rho^{\otimes n}\|\sigma_{P_n,n})
=D(q\|P_n)+\sum_\lambda q(\lambda)
[\log\dim\h_\lambda-S(\rho_\lambda)]=D(q\|P_n)+R_n(x) .
\]
\end{proof}

Haar averaging replaces each $\rho_\lambda$ by the maximally mixed
state on $\h_\lambda$. Its entropy therefore increases by
$\log\dim\h_\lambda-S(\rho_\lambda)$, the quantity averaged
in~\eqref{eq:orbit_minimax}. The average block entropy is already
included in the source entropy $nS(\rho)$: redundancy counts only
the additional uncertainty caused by the unknown eigenbasis.
QMDL instead measures the memory dimension needed to recover the
orbit, whose optimal logarithm is $L_{d,r}(n,x)+o(1)$.
The offset $s_{\mathrm{blk}}(x)$ is therefore the intrinsic mixedness
of a typical representation state. The following lemma evaluates
its limit and controls atypical blocks.

\begin{lem}[Block entropy]\label{lem:block_entropy}
Let $\rho$ have strictly decreasing nonzero spectrum
$x_1>\cdots>x_r>0$. The entropies $S(\rho_\lambda)$ are bounded
uniformly over all partitions $\lambda$ with at most $r$ rows.
For $r\ge2$, as
$g_\lambda:=\min_{i<r}(\lambda_i-\lambda_{i+1})\to\infty$,
\[
S(\rho_\lambda)\longrightarrow s_{\mathrm{blk}}(x),
\]
where
\begin{equation}\label{eq:block_entropy_limit}
s_{\mathrm{blk}}(x):=\sum_{1\le i<j\le r}
\left[-\log\left(1-\frac{x_j}{x_i}\right)
-\frac{x_j}{x_i-x_j}\log\frac{x_j}{x_i}\right].
\end{equation}
For $r=1$, every block entropy and the empty sum are zero.
\end{lem}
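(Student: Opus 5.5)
Throughout, $S$ denotes the Shannon entropy of the spectrum of $\rho_\lambda=\pi_\lambda(\rho)/s_\lambda(x)$, computed in an eigenbasis of $\rho$. In that basis $\pi_\lambda(\rho)$ is diagonal in a Gelfand--Tsetlin (or weight) basis of $\h_\lambda$. The eigenvalue on a weight vector of weight $\alpha$ is $x^\alpha=\prod_i x_i^{\alpha_i}$, and its multiplicity is the Kostka number $K_{\lambda\alpha}$. Since only $r$ eigenvalues are nonzero, only weights supported on the first $r$ coordinates contribute. Dividing by the top eigenvalue $x^\lambda$, write $\alpha=\lambda-\beta$, where $\beta=\sum_{i<j\le r} k_{ij}(e_i-e_j)$ is a sum of positive roots with $k_{ij}\ge0$. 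The normalized spectrum then consists of the values $\prod_{i<j}(x_j/x_i)^{k_{ij}}$ with multiplicity $K_{\lambda,\lambda-\beta}$, normalized by $s_\lambda(x)/x^\lambda$.

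\textbf{Uniform boundedness.} The key input is that $K_{\lambda,\lambda-\beta}\le P(\beta)$, where $P$ is Kostant's partition function for $\mathrm{GL}(r)$, i.e.\ the number of ways to write $\beta$ as a nonnegative combination of positive roots. This bound follows from the Verma-module surjection onto $\h_\lambda$: the weight multiplicities of the Verma module are exactly $P(\beta)$. The unnormalized spectrum is therefore dominated, multiplicities included, by the spectrum of the Fock-type product state
\[
\textstyle\bigotimes_{i<j}\sum_{k\ge0}(x_j/x_i)^k\ketbra{k}{k}.
\]
This product state has partition function $Z=\prod_{i<j}(1-x_j/x_i)^{-1}$, and the top eigenvalue $1$ always appears. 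The ratio $s_\lambda(x)/x^\lambda$ therefore lies in $[1,Z]$. For entropy, use $-\sum p\log p=\log N+N^{-1}\sum w\log(1/w)$, where the $w$ are the unnormalized weights, each at most $1$, and $N=\sum w\in[1,Z]$. The term $w\log(1/w)$ is nonnegative and monotone in the multiplicity, so the sum is bounded by the corresponding sum for the product state. This gives $S(\rho_\lambda)\le\log Z+\sum_{i<j}H_{ij}$, where $H_{ij}$ are the entropies of the geometric factors. The bound is uniform in $\lambda$, which settles the first claim. For $r=1$, $\rho_\lambda$ is pure because the only surviving weight is $\lambda$ itself, so its entropy is zero.

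\textbf{Limit.} For each fixed $\beta$, $K_{\lambda,\lambda-\beta}=P(\beta)$ once all gaps $\lambda_i-\lambda_{i+1}$ exceed $|\beta|$. This is the classical stabilization, obtained from the Weyl character formula: the other Weyl group terms in Kostant's multiplicity formula need shifts of size comparable to the gaps. Hence, as $g_\lambda\to\infty$, the normalized spectrum of $\rho_\lambda$ converges pointwise to that of the product of independent geometric distributions with ratios $x_j/x_i$. The domination above, together with the geometric summability of $w\log(1/w)$ over $\beta$, permits dominated convergence in both $N$ and the entropy sum. It follows that $S(\rho_\lambda)\to\sum_{i<j}H(\mathrm{Geom}(x_j/x_i))$. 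A geometric distribution with ratio $t$ has entropy $-\log(1-t)-\frac{t}{1-t}\log t$. Substituting $t=x_j/x_i$ gives $\frac{t}{1-t}=\frac{x_j}{x_i-x_j}$, which reproduces \eqref{eq:block_entropy_limit}.

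\textbf{Main obstacle.} The delicate step is the multiplicity comparison with Kostant's function. One needs both the uniform upper bound and exact stabilization for fixed $\beta$ when the gaps are large. When $r<d$, one must also check that only the $r$-row part matters, which holds because weights with nonzero entries beyond $r$ carry eigenvalue zero. I would obtain the upper bound from the PBW basis of $U(\mathfrak n^-)$ applied to the highest weight vector. I would obtain stabilization from Kostant's alternating formula, noting that nonidentity Weyl terms vanish for $|\beta|<g_\lambda$.
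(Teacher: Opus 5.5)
Your proposal is correct and follows the paper's proof essentially step for step: the Verma-module bound $m_\lambda(\delta)\le\mathsf P_r(\delta)$ dominates $Z_\lambda$ and $B_\lambda$ by the convergent product sums $Z_\infty,B_\infty$, and Kostant's formula gives $m_\lambda(\delta)=\mathsf P_r(\delta)$ at depths up to $g_\lambda$. Dominated convergence (the paper's tail argument) then yields $\log Z_\infty+B_\infty/Z_\infty=\sum_{i<j}H_{ij}$. One small bookkeeping slip: your domination argument gives $S(\rho_\lambda)\le\log Z+B_\infty$, where $B_\infty=\sum_\beta \mathsf P_r(\beta)x^{-\beta}\log x^{\beta}=Z\bigl(\sum_{i<j}H_{ij}-\log Z\bigr)$, not $\log Z+\sum_{i<j}H_{ij}$; since this is still finite and independent of $\lambda$, the uniform bound stands.
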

\begin{proof}
See Appendix~\ref{app:proofs}.
\end{proof}

Each summand in~\eqref{eq:block_entropy_limit} is the entropy of a
geometric distribution with ratio $x_j/x_i$.

On the typical Schur diagrams, the supported row gaps grow linearly
with $n$ when $r\ge2$. Lemmas~\ref{lem:block_entropy}
and~\ref{lem:weyl_asymptotic} therefore give uniform limits for
the block entropy and log dimension. The atypical contribution
is $o(1)$ by~\eqref{eq:tail_prob_bound}, the uniform entropy bound,
and $\log\dim\h_\lambda\le\binom d2\log(n+1)$. Consequently,
\[
\mathbb E_\lambda[S(\rho_\lambda)]=s_{\mathrm{blk}}(x)+o(1),
\qquad
\mathbb E_\lambda[\log\dim\h_\lambda]=L_{d,r}(n,x)+o(1).
\]
Substitution in~\eqref{eq:orbit_minimax} proves the claimed
relation~\eqref{eq:orbit_redundancy} between QMDL and
coding overhead.

When the spectrum is also unknown, the exact
split~\eqref{eq:invariant_split} adds the classical penalty
$D(q\|P_n)$. Appendix~\ref{app:unknown_spectrum} evaluates this
additional cost for Hayashi's universal codes.

\section{Trace-distance bounds for generalized cloning}\label{sec:fidelity}

We now prove Theorem~\ref{thm:main}. We retain the highest-weight Kraus
branch of each channel and use the Casimir gap to bound its deficit
in each weight block. The mean depth is bounded uniformly over all
representations, so averaging these deficits gives a finite error
bound without truncating the states. The remaining factors are a
scalar eigenvalue ratio and the ratio of Weyl dimensions.
Figure~\ref{fig:cloning_proof} shows how these estimates combine
for the forward channel, using the weight projectors $P_\delta,Q_\delta$
defined in Section~\ref{sec:prelim}.
By covariance it suffices to take $\rho=\operatorname{diag}
(x_1,\ldots,x_r,0,\ldots,0)$ throughout this section.

\subsection{Technical preliminaries}\label{sec:prelim}

We retain the assumptions and notation $\mu,\nu,\omega,D,b_\mu$ of
Theorem~\ref{thm:main}; in particular, $\nu=\mu+\omega$ and
$D=\|\omega\|_1=\sum_i|\omega_i|$. Recall that
$d_\lambda=\dim\mathcal H_\lambda$. For operators, $\|\cdot\|$ denotes
the operator norm and $\|\cdot\|_1$ the trace norm;
$X\ge Y$ means that $X-Y$ is positive semidefinite. For Hermitian
$X$, its positive part $X_+$ is obtained by replacing each negative
eigenvalue by zero.

We use the standard basis $e_i$ and the Euclidean
inner product on weights~\cite{fulton1991representation}.
The positive roots of the first $s$ coordinates are
$\Phi_s^+:=\{e_i-e_j:1\le i<j\le s\}$, and the simple roots are
$\alpha_i=e_i-e_{i+1}$. The Weyl group $S_d$ acts by permuting
coordinates. Write
\[
Q_+:=\sum_{i=1}^{d-1}\mathbb Z_{\ge0}\alpha_i,\qquad
Q_+^{(r)}:=\sum_{i=1}^{r-1}\mathbb Z_{\ge0}\alpha_i.
\]
For $\delta=\sum_i c_i\alpha_i$, its depth is $|\delta|:=\sum_i c_i$.
Thus $|e_i-e_j|=j-i$ for $i<j$. This depth is distinct from the
coordinate $\ell^1$ norm used in $D$. Here $\delta$ denotes a weight
offset; the compression error is denoted by $\delta_n$.

Every weight of $\mathcal H_\lambda$ is of the form
$\lambda-\delta$ with $\delta\in Q_+$. Let $\Pi_{\lambda-\delta}$
project onto this weight space, with multiplicity
$m_\lambda(\delta):=\operatorname{rank}\Pi_{\lambda-\delta}$.
We take the multiplicity and projector to be zero when the weight
is absent. The highest-weight space is one-dimensional; write
$\ket{\lambda}$ for a unit vector spanning it.

The Cartan generator $H_i$ and root generator $E_{e_i-e_j}$ are the
represented matrix units $E_{ii}$ and $E_{ij}$, respectively.
On a vector of weight $w$, $H_i$ acts by $w_i$, while $E_\alpha$
maps it to weight $w+\alpha$ or to zero. In particular,
$E_\alpha\ket{\lambda}=0$ for $\alpha>0$.
We use the invariant inner product for which $H_i^\dagger=H_i$
and $E_\alpha^\dagger=E_{-\alpha}$.

Let $\mathsf P_s(\delta)$ count the expressions of $\delta$ as a
nonnegative integer combination of roots in $\Phi_s^+$, with value
zero when no such expression exists. This is Kostant's partition
function. Ordered products of lowering operators applied to
$\ket{\lambda}$ span the representation, giving the Verma-module bound
$m_\lambda(\delta)\le\mathsf P_d(\delta)$.
For $\delta\in Q_+^{(r)}$, positive roots outside the
first $r$ coordinates cannot occur in a sum with all other
simple-root coefficients zero. Hence
\begin{equation}\label{eq:verma_multiplicity_bound}
m_\lambda(\delta)\le\mathsf P_d(\delta)=\mathsf P_r(\delta),
\qquad \delta\in Q_+^{(r)}.
\end{equation}

\begin{lem}[Multiplicity monotonicity]\label{lem:kostka}
Let $\mu,\omega$ be dominant integral highest weights of
$\mathrm{GL}(d,\mathbb C)$. Then
$m_\mu(\delta)\le m_{\mu+\omega}(\delta)$ for every $\delta\in Q_+$.
\end{lem}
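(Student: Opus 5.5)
The plan is to realize $\mathcal H_\mu$ inside $\mathcal H_{\mu+\omega}$ in a weight-preserving way, up to the shift by $\omega$. Lemma~\ref{lem:kostka} then follows by comparing ranks of weight spaces.

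Let $\ket{\omega}$ be a highest-weight vector of $\mathcal H_\omega$. By Proposition~\ref{prop:choi}, $V:\mathcal H_{\mu+\omega}\to\mathcal H_\mu\otimes\mathcal H_\omega$ embeds onto the Cartan component, which is generated by $\ket{\mu}\otimes\ket{\omega}$. Consider the linear map
\[
T:\mathcal H_\mu\to\mathcal H_\mu\otimes\mathcal H_\omega,\qquad T\ket{v}:=\ket{v}\otimes\ket{\omega}.
\]
$T$ is injective and sends weight $\mu-\delta$ to weight $\mu+\omega-\delta$. It suffices to show that $\Pi_{\mathrm{Cartan}}\circ T$ remains injective on each weight space, where $\Pi_{\mathrm{Cartan}}=VV^\dagger$. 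This composite is weight preserving, so injectivity gives
\[
m_\mu(\delta)\le m_{\mu+\omega}(\delta).
\]

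A cleaner route avoids the projector and uses a dual surjection. Choose the nonzero $\mathrm{U}(d)$-equivariant map
\[
\phi=(\id_\mu\otimes\bra{\omega_{\rm low}})\circ V:\mathcal H_{\mu+\omega}\to\mathcal H_\mu ,
\]
where $\bra{\omega_{\rm low}}$ pairs with the lowest-weight vector of $\mathcal H_\omega$. This is not equivariant in general, however, so instead I will argue at the level of enveloping algebras.

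The Cartan component is the cyclic module
\[
U(\mathfrak n^-)\bigl(\ket{\mu}\otimes\ket{\omega}\bigr).
\]
Hence $\mathcal H_\mu\cong U(\mathfrak n^-)\ket{\mu}$ is a quotient of the Verma module $M(\mu)$ by the submodule generated by the vectors $f_i^{\mu_i-\mu_{i+1}+1}\ket{\mu}$. I will compare annihilators. Define the restriction map
\[
\psi:U(\mathfrak n^-)\bigl(\ket{\mu}\otimes\ket{\omega}\bigr)\longrightarrow\mathcal H_\mu,\qquad
X\bigl(\ket{\mu}\otimes\ket{\omega}\bigr)\longmapsto X\ket{\mu}.
\]
Well-definedness requires that whenever $X\bigl(\ket{\mu}\otimes\ket{\omega}\bigr)=0$ we also have $X\ket{\mu}=0$. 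This is the key step. For $X$ of weight $-\delta$, apply the coproduct:
\[
X\bigl(\ket{\mu}\otimes\ket{\omega}\bigr)=X\ket{\mu}\otimes\ket{\omega}+\sum(\text{terms with second factor of weight}<\omega).
\]
Applying $\id\otimes\bra{\omega}$ extracts the component $X\ket{\mu}$. Since the second factor of every other term has weight strictly below $\omega$, it is orthogonal to $\ket{\omega}$, so those terms vanish under this pairing. Therefore $X\bigl(\ket{\mu}\otimes\ket{\omega}\bigr)=0$ forces $X\ket{\mu}=0$. Thus $\psi$ equals $(\id\otimes\bra{\omega})$ restricted to the Cartan component.

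It remains to show $\psi$ is surjective. Every vector of $\mathcal H_\mu$ has the form $X\ket{\mu}$, and
\[
\psi\bigl(X(\ket{\mu}\otimes\ket{\omega})\bigr)=X\ket{\mu}.
\]
The map $\psi$ shifts weights by $-\omega$: it sends weight $\mu+\omega-\delta$ to weight $\mu-\delta$. A weight-shifting surjection gives the rank inequality on each weight space, and the Cartan component is isomorphic to $\mathcal H_{\mu+\omega}$. This yields $m_\mu(\delta)\le m_{\mu+\omega}(\delta)$.

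The main obstacle is the orthogonality bookkeeping in the coproduct expansion. The remaining terms must have second tensor factor of weight strictly below $\omega$, which holds because each lowering operator in $\mathfrak n^-$ either acts on the first factor or lowers the weight of the second. Weight spaces of distinct weights are orthogonal for the invariant inner product. Dominance of $\mu$ and $\omega$, with possibly negative entries, enters only through the existence of the irreps and the Cartan component.
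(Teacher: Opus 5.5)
Your final argument is correct, but it takes a different route from the paper.

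\textbf{The paper's route.} The paper argues combinatorially. Weight multiplicities count Gelfand--Tsetlin patterns, and the shift $m_{i,j}\mapsto m_{i,j}+\omega_j$ injects patterns with top row $\mu$ and weight $\mu-\delta$ into patterns with top row $\mu+\omega$ and weight $\mu+\omega-\delta$. Interlacing survives because $\omega_j-\omega_{j+1}\ge0$.

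\textbf{Your route.} You show that contraction with the highest-weight vector, $\id_\mu\otimes\bra{\omega}$, restricted to the Cartan component, maps each weight space of weight $\mu+\omega-\delta$ onto the weight space of $\mathcal H_\mu$ of weight $\mu-\delta$. The key identity $(\id\otimes\bra{\omega})\,X(\ket{\mu}\otimes\ket{\omega})=X\ket{\mu}$ for $X\in U(\mathfrak n^-)$ is best stated as follows. Since $E_{-\alpha}^\dagger=E_\alpha$ kills $\ket{\omega}$, the functional $\bra{\omega}$ annihilates $\mathfrak n^-\mathcal H_\omega$. Hence $\id\otimes\bra{\omega}$ is an $\mathfrak n^-$-module map sending $\ket{\mu}\otimes\ket{\omega}$ to $\ket{\mu}$. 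Defined this way, $\psi$ is simply a restriction of a linear map, so the well-definedness discussion and the coproduct bookkeeping become unnecessary.

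\textbf{Relation to the paper's operators.} Your $\psi$ is $J^\dagger$ restricted to $\operatorname{ran}VV^\dagger$, where $J$ is the isometry $v\mapsto v\otimes\ket{\omega}$ from the paper's reverse-channel step. So your surjectivity says exactly that $P_\delta$ is injective on $\operatorname{ran}Q_\delta$. That is the adjoint of the injectivity of $\Pi_{\mathrm{Cartan}}\circ T$ you proposed in your first paragraph, since $T=J$. Your opening reduction was therefore sound, and your final argument proves it in dual form.

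\textbf{What each route buys.} Your argument is type-independent: it works verbatim for any reductive Lie algebra and dominant $\mu,\omega$, including weights with negative entries. It also ties the lemma to the operators $P_\delta,Q_\delta$ used in the perturbation bound. The paper's argument is more elementary, gives an explicit injection of basis patterns, and is easy to formalize, but it relies on the type-A Gelfand--Tsetlin model.

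\textbf{Cleanup.} In a write-up, delete the abandoned detour through the non-equivariant lowest-weight map $\phi$. Drop the Verma-quotient remark, which is not used. Either prove or remove the unproved injectivity claim, noting that it follows from your final argument by taking adjoints.
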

\begin{proof}
Weight multiplicities count integral Gelfand--Tsetlin patterns
$(m_{i,j})_{1\le j\le i\le d}$, with top row $m_{d,j}=\mu_j$ and
interlacing inequalities
$m_{i+1,j}\ge m_{i,j}\ge m_{i+1,j+1}$.
Their weight is the vector of successive row-sum differences.
The map $m_{i,j}\mapsto m_{i,j}+\omega_j$ is injective, changes the
top row from $\mu$ to $\mu+\omega$, and shifts the weight by $\omega$.
It preserves interlacing: one inequality is unchanged, while the
other gains $\omega_j-\omega_{j+1}\ge0$. Thus it maps patterns counted
by $m_\mu(\delta)$ into those counted by $m_{\mu+\omega}(\delta)$.
\end{proof}

For $\rho=\operatorname{diag}(x_1,\ldots,x_r,0,\ldots,0)$ and a
partition $\lambda$ with at most $r$ rows,
\[
\rho_\lambda
=\sum_{\delta\in Q_+^{(r)}}
p_\lambda(\delta)\Pi_{\lambda-\delta},
\qquad
p_\lambda(\delta):=\frac{x^{\lambda-\delta}}{s_\lambda(x)},
\qquad x^w:=\prod_{i=1}^r x_i^{w_i}.
\]
Only these offsets contribute: both $\lambda$ and each supported weight
have zero coordinates after $r$, so their difference uses only
$\alpha_1,\ldots,\alpha_{r-1}$. The scalar $p_\lambda(\delta)$ is
defined even when the weight is absent; on each nonzero weight space
it is the eigenvalue of $\rho_\lambda$. The block probabilities are
$p_\lambda(\delta)m_\lambda(\delta)$ and sum to one. The normalizer
$s_\lambda(x)=\Tr\pi_\lambda(\rho)$ is the Schur polynomial
\cite{macdonald1995symmetric}. All sums over offsets below are over
$Q_+^{(r)}$, unless stated otherwise.

For $r\ge2$, set
\[
g_\lambda:=\min_{1\le i<r}(\lambda_i-\lambda_{i+1}).
\]
This measures the gaps within the first $r$ rows. In contrast,
$b_\mu$ from Theorem~\ref{thm:main} also includes the gap to the zero
row when $r<d$:
\[
b_\mu=\begin{cases}
\min\{g_\mu,\mu_r\},&2\le r<d,\\
g_\mu,&r=d.
\end{cases}
\]
Thus $b_\mu\le g_\mu$, while dominance of $\omega$ gives
$g_\nu\ge g_\mu$. For $r=1$, $b_\mu=\mu_1$, the only supported
offset is $\delta=0$, and no $g_\lambda$ is needed.

On $\mathcal H_\mu\otimes\mathcal H_\omega$, each Lie-algebra
generator acts as $X^{(\mu)}\otimes\id_\omega+
\id_\mu\otimes X^{(\omega)}$, so tensor-product weights add.
The vector $\ket{\mu}\otimes\ket{\omega}$ spans the one-dimensional
weight space of weight $\nu=\mu+\omega$ and generates the multiplicity-one
Cartan component. Choose the phase of the isometry $V$ in
Proposition~\ref{prop:choi} so that
$V\ket{\nu}=\ket{\mu}\otimes\ket{\omega}$.
Every other irreducible constituent has highest weight
$\chi=\nu-\eta$ with $0\ne\eta\in Q_+$.

For each supported offset $\delta$, define two projectors
on $\mathcal H_\mu\otimes\mathcal H_\omega$:
\begin{equation}\label{eq:proj_mu_omega}
P_\delta:=V\Pi_{\nu-\delta}V^\dagger,\qquad
Q_\delta:=\Pi_{\mu-\delta}\otimes\ketbra{\omega}{\omega}.
\end{equation}
The projector $P_\delta$ selects the embedded target weight space,
while $Q_\delta$ selects the input weight space with the auxiliary
system fixed to its highest weight. Both are supported on the full
tensor-product weight space $\mathcal W_{\nu-\delta}$ of weight
$\nu-\delta$. Since $V$ intertwines the Cartan action, $P_\delta$
is the restriction of $VV^\dagger$ to this space. For $Q_\delta$,
fixing the auxiliary weight to $\omega$ forces the input weight
to be $\mu-\delta$.

The quadratic Casimir uses the generators of the ambient
$\mathrm{GL}(d)$, even when the supported offsets lie in
$Q_+^{(r)}$:
\[
C_2=\sum_i H_i^2+
\sum_{\alpha>0}(E_{-\alpha}E_\alpha+E_\alpha E_{-\alpha}).
\]
It acts on $\mathcal H_\lambda$ by
$c_\lambda=(\lambda,\lambda+2\varrho)$, where
$\varrho=\frac12\sum_{\alpha\in\Phi_d^+}\alpha$ and
$(\varrho,\alpha_i)=1$. On the tensor product, the total Casimir
therefore acts by $c_\chi$ on each constituent $\mathcal H_\chi$.

\subsection{Proof}\label{sec:proof}

\begin{figure}[!t]
\centering
\begin{minipage}{\linewidth}
\centering\normalsize
\[
\begin{aligned}
\frac12\|\C_{\mu\to\nu}(\rho_\mu)-\rho_\nu\|_1
&\le\underbrace{\Tr\bigl(V\rho_\nu V^\dagger-\map K_\omega(\rho_\mu)\bigr)_+}_{
    \substack{\text{Retaining the highest-weight}\\
    \text{Kraus operator }\eqref{eq:kraus_proj}}}
\\[1em]
&=\sum_\delta\Tr\!\Bigl[
p_\nu(\delta)P_\delta
-\underbrace{p_\mu(\delta)}_{
    \substack{\ge p_\nu(\delta)\\\text{Lemma~\ref{lem:ratio}}}}
\map K_\omega(\Pi_{\mu-\delta})\Bigr]_+
\\[1em]
&\le\underbrace{\left(1-\frac{d_\mu}{d_\nu}\right)
+\frac{d_\mu}{d_\nu}\sum_\delta p_\nu(\delta)
\Tr\!\left[P_\delta(\id-Q_\delta)P_\delta\right]}_{
    \text{Insert the Kraus formula }\eqref{eq:kraus_proj}}
\\[1em]
&\le
{\color{teal!75!black}
\underbrace{\left(1-\frac{d_\mu}{d_\nu}\right)}_{
    \substack{\text{Dimension ratio}\\\text{Lemma~\ref{lem:dim_ratio}}}}}
+{\color{blue!65!black}
\underbrace{\frac{2D}{g_\mu+2}}_{
    \substack{\text{Projector estimate}\\
    \text{Lemmas~\ref{lem:shallow_multiplicities}--\ref{lem:perturbation}}}}}
\,{\color{violet!75!black}
\underbrace{\sum_\delta|\delta|\,p_\nu(\delta)m_\nu(\delta)}_{
    \substack{\text{Mean depth}\\\text{Lemma~\ref{lem:tail}}}}}
\\[1.5em]
&\le {\color{teal!75!black}\binom d2\frac{D}{b_\mu+1}}
+{\color{blue!65!black}\frac{2D}{g_\mu+2}}
\,{\color{violet!75!black}K(x)}
\le C_{d,x}\frac{D}{b_\mu+1}.
\end{aligned}
\]
\end{minipage}
\caption{Contributions to the forward trace-distance bound. The reverse-channel proof is analogous.}
\label{fig:cloning_proof}
\end{figure}

\emph{Stage 1: the highest-weight Kraus branch.}
Using the Cartan intertwiner
$V:\mathcal H_\nu\to\mathcal H_\mu\otimes\mathcal H_\omega$ from
Proposition~\ref{prop:choi}, embed the forward channel as
\begin{equation}\label{eq:embedded_channel}
    V\C_{\mu\to\nu}(X)V^\dagger
    =
    \frac{d_\mu}{d_\nu}P_\nu(X\otimes \id_\omega)P_\nu ,
\end{equation}
where \(P_\nu=VV^\dagger\in\mathcal B(\h_\mu\otimes\h_\omega)\) is the projector onto the embedded copy of \(\mathcal H_\nu\).
The embedded target state is
\begin{equation}\label{eq:embedded_target}
    \widetilde{\rho}_\nu
    :=
    V\rho_\nu V^\dagger
    =
    \sum_\delta p_\nu(\delta)P_\delta .
\end{equation}
Since the trace norm is invariant under a common isometric embedding,
\begin{equation}\label{eq:isometric_invariance_forward}
    \|\C_{\mu\to\nu}(\rho_\mu)-\rho_\nu\|_1
    =\|V\C_{\mu\to\nu}(\rho_\mu)V^\dagger-\widetilde{\rho}_\nu\|_1.
\end{equation}
The highest-weight Kraus operator is
\begin{equation}\label{eq:kraus_proj}
\mathcal K_\omega(X)
    :=
    \frac{d_\mu}{d_\nu}P_\nu(X\otimes \ketbra{\omega}{\omega})P_\nu .
\end{equation}

Since $\id_\omega\ge\ketbra{\omega}{\omega}$, positivity gives
\begin{equation}\label{eq:forward_monotonicity}
    V\C_{\mu\to\nu}(\rho_\mu)V^\dagger
    \ge\mathcal K_\omega(\rho_\mu)
    =\sum_\delta p_\mu(\delta)
       \mathcal K_\omega(\Pi_{\mu-\delta}).
\end{equation}

The omitted branches form a positive operator of trace
$1-\Tr\mathcal K_\omega(\rho_\mu)$. The triangle inequality and
isometric invariance therefore give
\begin{equation}\label{eq:branch_trace_comparison}
\frac12\|\C_{\mu\to\nu}(\rho_\mu)-\rho_\nu\|_1
\le\frac12\|\mathcal K_\omega(\rho_\mu)-\widetilde\rho_\nu\|_1
+\frac{1-\Tr\mathcal K_\omega(\rho_\mu)}2=\Tr(\widetilde\rho_\nu-\mathcal K_\omega(\rho_\mu))_+.
\end{equation}

Each $\mathcal K_\omega(\Pi_{\mu-\delta})$ is supported on the total-weight
sector $\nu-\delta$, so the branch can be compared with the target
block by block.

\emph{Stage 2: estimating the projectors.} We first identify the
dimensions of the shallow weight spaces, then use the spectral gap of
the total Casimir to compare their projectors.

\begin{lem}[Shallow weight multiplicities]\label{lem:shallow_multiplicities}
Let $\lambda$ be a dominant integral highest weight of
$\mathrm{GL}(d,\mathbb C)$ and fix $2\le r\le d$. Set
$g_\lambda:=\min_{1\le i<r}(\lambda_i-\lambda_{i+1})$.
For $\delta\in Q_+^{(r)}$ with $|\delta|\le g_\lambda$,
\[
m_\lambda(\delta)=\mathsf P_r(\delta),
\]
with $\mathsf P_r$ as defined in Section~\ref{sec:prelim}.
\end{lem}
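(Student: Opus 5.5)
The upper bound is already available: the Verma-module estimate \eqref{eq:verma_multiplicity_bound} gives $m_\lambda(\delta)\le\mathsf P_d(\delta)=\mathsf P_r(\delta)$ for $\delta\in Q_+^{(r)}$. So it remains to show $m_\lambda(\delta)\ge\mathsf P_r(\delta)$ when $|\delta|\le g_\lambda$.

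For the lower bound I would use the Weyl character formula, in the form of Kostant's multiplicity formula
\[
m_\lambda(\delta)=\sum_{w\in S_d}\operatorname{sgn}(w)\,\mathsf P_d\bigl(w(\lambda+\varrho)-(\lambda+\varrho)+\delta\bigr).
\]
The $w=\mathrm{id}$ term is $\mathsf P_d(\delta)=\mathsf P_r(\delta)$. I would then show that every other term vanishes when $|\delta|\le g_\lambda$ and $\delta\in Q_+^{(r)}$.

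For $w\ne\mathrm{id}$, write $(\lambda+\varrho)-w(\lambda+\varrho)=\beta_w\in Q_+$. The standard fact is that $\beta_w$ is a nonnegative combination of the positive roots in the inversion set of $w$, each weighted by $(\lambda+\varrho,\alpha^\vee)$. A nonzero term requires $\delta-\beta_w\in Q_+$. It also requires the simple-root coefficients of $\delta-\beta_w$ beyond index $r-1$ to vanish, since $\delta$ has none there and $\mathsf P_d$ needs nonnegative coefficients. This forces $\beta_w\in Q_+^{(r)}$. I expect that in turn forces $w$ to permute only the first $r$ coordinates. The reason is that any inversion involving an index $>r$ contributes a positive coefficient on some $\alpha_k$ with $k\ge r$, because all coefficients in $\beta_w$ are nonnegative and cannot cancel.

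For $w\in S_r\setminus\{\mathrm{id}\}$, $\beta_w$ has depth at least that of a simple reflection $s_i$ with $i<r$. Indeed, $\beta_w\ge\beta_{s_i}$ coefficientwise for some $i$ with $s_i\le w$ in Bruhat order, or more directly $|\beta_w|\ge$ the smallest $(\lambda+\varrho,\alpha_i)$. The depth of $\beta_{s_i}$ is $(\lambda+\varrho,\alpha_i)=\lambda_i-\lambda_{i+1}+1\ge g_\lambda+1>|\delta|$. Hence $|\delta-\beta_w|<0$, so $\delta-\beta_w\notin Q_+$ and the term is zero.

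The main obstacle is making rigorous the claim that $|\beta_w|>g_\lambda$ for every nonidentity $w$. I would argue it directly: $\beta_w=\sum_k\bigl((\lambda+\varrho)_k-(\lambda+\varrho)_{w^{-1}(k)}\bigr)e_k$. Its depth is $\sum_{j}$ of the partial sums $\sum_{k\le j}(\cdot)$, and each partial sum is nonnegative because $\lambda+\varrho$ is strictly decreasing. For $w\ne\mathrm{id}$, at least one partial sum is strictly positive. Take the smallest $j$ where $w$ moves $\{1,\dots,j\}$ off itself. That partial sum is then at least $(\lambda+\varrho)_j-(\lambda+\varrho)_{j+1}=\lambda_j-\lambda_{j+1}+1$. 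When $w$ moves indices past $r$, the coefficient at index $\ge r$ is positive, which already rules the term out by the support argument above. Together these steps give $m_\lambda(\delta)=\mathsf P_r(\delta)$.
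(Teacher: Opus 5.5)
Your proposal is correct and is essentially the paper's own argument: Kostant's multiplicity formula, with every nonidentity term killed at the first index $j$ where $w^{-1}\{1,\dots,j\}\ne\{1,\dots,j\}$, since the $\alpha_j$-coefficient of $(\lambda+\varrho)-w(\lambda+\varrho)$ is at least $\lambda_j-\lambda_{j+1}+1$; this exceeds $|\delta|\ge c_j$ when $j<r$, and is positive against $c_j=0$ when $j\ge r$. The differences are cosmetic. You compare depths rather than the single coefficient $c_j$ in the case $j<r$, and you split off $w\notin S_r$ separately. Your side remark that each inversion root is weighted by $(\lambda+\varrho,\alpha^\vee)$ is not accurate in general, but your final partial-sum argument does not rely on it.
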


\begin{proof}
See Appendix~\ref{app:proofs}.
\end{proof}

\begin{lem}[Highest-weight subspace perturbation]\label{lem:perturbation}
Let $\mu,\nu$ be partitions with at most $r$ rows and let
$\omega=\nu-\mu$ be a dominant integral highest weight of
$\mathrm{GL}(d,\mathbb C)$.
For $\delta\in Q_+^{(r)}$, let $P_\delta,Q_\delta$ be
the projectors in~\eqref{eq:proj_mu_omega}.
Set $D:=\|\omega\|_1$. For $r\ge2$, set
$g_\mu:=\min_{1\le i<r}(\mu_i-\mu_{i+1})$ and assume
$|\delta|\le g_\mu$. Then
\[
\operatorname{rank}P_\delta
=\operatorname{rank}Q_\delta
=m_\mu(\delta)
\]
and, for $r\ge2$,
\begin{equation}\label{eq:projector_casimir_bound}
    \|P_\delta-Q_\delta\|
    \le\sqrt{\frac{2(\delta,\omega)}{g_\mu+2}}
    \le\sqrt{\frac{2|\delta|D}{g_\mu+2}}\ .
\end{equation}
When $r=1$, only $\delta=0$ occurs and the two projectors coincide.
\end{lem}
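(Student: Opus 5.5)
The plan is to prove the rank identity by comparing multiplicities, and the norm bound by a variational argument with the total quadratic Casimir on the weight space $\mathcal W_{\nu-\delta}$. For $r=1$ the only offset is $\delta=0$. There $\mathcal W_\nu$ is spanned by $\ket{\mu}\otimes\ket{\omega}=V\ket{\nu}$, so $P_0=Q_0$, and we assume $r\ge2$ from now on.

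\emph{Ranks.} By definition $\operatorname{rank}Q_\delta=m_\mu(\delta)$, and since $V$ is an isometry intertwining the Cartan action, $\operatorname{rank}P_\delta=m_\nu(\delta)$. Dominance of $\omega$ gives $g_\nu\ge g_\mu\ge|\delta|$. Lemma~\ref{lem:shallow_multiplicities} therefore applies to both $\mu$ and $\nu$ and gives $m_\mu(\delta)=\mathsf P_r(\delta)=m_\nu(\delta)$. (Lemma~\ref{lem:kostka} gives one inequality independently.) Because the two projectors have equal rank, $\|P_\delta-Q_\delta\|=\|(\id-P_\delta)Q_\delta\|$, and it suffices to bound $\|(\id-P_\delta)v\|$ for unit vectors $v=u\otimes\ket{\omega}$ with $u$ of weight $\mu-\delta$.

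\emph{Casimir expectation.} Write the total Casimir as
\[
C_2=C_2^{(\mu)}\otimes\id+\id\otimes C_2^{(\omega)}+2\sum_iH_i\otimes H_i+2\sum_{\alpha>0}\bigl(E_{-\alpha}\otimes E_\alpha+E_\alpha\otimes E_{-\alpha}\bigr).
\]
We have $E_\alpha\ket{\omega}=0$ for $\alpha>0$, and $E_{-\alpha}\ket{\omega}\perp\ket{\omega}$. Hence both root cross terms have zero expectation in $v$, and
\[
\bra{v}C_2\ket{v}=c_\mu+c_\omega+2(\mu-\delta,\omega)=c_\nu-2(\delta,\omega),
\]
using $c_\nu=c_\mu+c_\omega+2(\mu,\omega)$.

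Decompose $\mathcal W_{\nu-\delta}$ into its intersections with the isotypic constituents $\chi=\nu-\eta$. The range of $P_\delta$ is exactly the Cartan piece, with Casimir value $c_\nu$. Every other constituent meeting $\mathcal W_{\nu-\delta}$ has $0\ne\eta$ and $\delta-\eta\in Q_+$, so $\eta\in Q_+^{(r)}$ with $1\le|\eta|\le|\delta|\le g_\mu$. If we show $c_\nu-c_\chi\ge g_\mu+2$ for all such $\chi$, then
\[
2(\delta,\omega)=c_\nu-\bra{v}C_2\ket{v}\ge(g_\mu+2)\|(\id-P_\delta)v\|^2,
\]
which is the first inequality in~\eqref{eq:projector_casimir_bound}. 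The second follows by writing $\delta=\sum_ic_i\alpha_i$:
\[
(\delta,\omega)=\sum_ic_i(\omega_i-\omega_{i+1})\le|\delta|\max_i(\omega_i-\omega_{i+1})\le|\delta|D .
\]

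\emph{Casimir gap (main obstacle).} We need
\[
c_\nu-c_{\nu-\eta}=2(\nu+\varrho,\eta)-(\eta,\eta)\ge g_\mu+2 .
\]
With $\eta=\sum_{i<r}c_i\alpha_i$ we get $(\nu+\varrho,\eta)=\sum_ic_i(\nu_i-\nu_{i+1}+1)\ge|\eta|(g_\mu+1)$. I would then bound $(\eta,\eta)\le2|\eta|^2$, which holds since $\|\eta\|_2\le\sum_i c_i\|\alpha_i\|_2$. This gives
\[
c_\nu-c_\chi\ge2|\eta|(g_\mu+1-|\eta|)\ge2g_\mu
\]
on $1\le|\eta|\le g_\mu$, because the expression is concave in $|\eta|$ and attains its minimum $2g_\mu$ at both endpoints. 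Then $2g_\mu\ge g_\mu+2$ whenever $g_\mu\ge2$.

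The small cases need separate checks. If $g_\mu=0$, only $\delta=0$ is allowed and $P_0=Q_0$. If $g_\mu=1$, the offsets $\delta$ with $|\delta|\le1$ are $0$ and simple roots $\alpha_i$. The only competing constituent has $\eta=\alpha_i$, with gap $2(\nu_i-\nu_{i+1})\ge2g_\nu$. I need $2g_\nu\ge3$; if $g_\nu=1$ as well, then $\omega_i=\omega_{i+1}$, so $(\delta,\omega)=0$ and the bound is trivial anyway. I expect this gap estimate, and in particular getting the sharp constant $g_\mu+2$ uniformly, to be the delicate step. If the crude bound $(\eta,\eta)\le2|\eta|^2$ turns out insufficient, I would refine it using the dominance of $\chi$ itself.
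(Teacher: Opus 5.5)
Your proof is correct and follows the paper's structure: equal ranks from Lemma~\ref{lem:shallow_multiplicities} applied to $\mu$ and $\nu$, the Casimir value $c_\nu-2(\delta,\omega)$ on $\operatorname{ran}Q_\delta$, a Casimir gap on $\mathcal W_{\nu-\delta}$, and the equal-rank identity $\|P_\delta-Q_\delta\|=\|(\id-P_\delta)Q_\delta\|$. The one genuine difference is the gap step.

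The paper writes $c_\nu-c_\chi=(\nu+\chi+2\varrho,\eta)=\sum_{i<r}b_i[(\nu_i-\nu_{i+1})+(\chi_i-\chi_{i+1})+2]$. It then uses dominance of the constituent $\chi$ to drop $\chi_i-\chi_{i+1}\ge0$. This gives $c_\nu-c_\chi\ge(g_\mu+2)|\eta|$ in one line, with no casework and no use of $|\eta|\le g_\mu$.

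Your form $2(\nu+\varrho,\eta)-(\eta,\eta)$, with the crude bound $(\eta,\eta)\le2|\eta|^2$, avoids dominance of $\chi$. The price is that you need $|\eta|\le|\delta|\le g_\mu$ to reach $2g_\mu$, and you must treat $g_\mu\in\{0,1\}$ separately.

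In the $g_\mu=1$ case, tidy the split. For $\delta=\alpha_i$, what matters is the specific gap $\nu_i-\nu_{i+1}$, not $g_\nu$.
\begin{itemize}
\item If $\nu_i-\nu_{i+1}\ge2$, the gap is at least $4$.
\item If $\nu_i-\nu_{i+1}=1$, then $\omega_i=\omega_{i+1}$ and $(\delta,\omega)=0$. The claimed bound then reads $\|P_\delta-Q_\delta\|\le0$, which is not trivial, since it asserts $P_\delta=Q_\delta$. It does follow from your own variational inequality with any positive gap (here the gap is $2$). More directly, $\nu-\alpha_i$ is then not dominant, so $\mathcal W_{\nu-\alpha_i}$ lies entirely in the Cartan component, giving $Q_\delta\le P_\delta$ and hence equality by equal ranks.
\end{itemize}
The refinement you mention at the end, using dominance of $\chi$, is exactly the paper's route and removes this casework.
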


\begin{proof}
If $|\delta|D=0$, the two projectors coincide.
Otherwise $r\ge2$ and $|\delta|\le g_\mu$ by hypothesis.
Work on the total-weight space $\mathcal W_{\nu-\delta}$.
Since $\omega$ is dominant, $g_\nu\ge g_\mu$. Lemma~\ref{lem:shallow_multiplicities}
therefore gives
\[
\operatorname{rank}P_\delta
=m_\nu(\delta)=\mathsf P_r(\delta)=m_\mu(\delta)
=\operatorname{rank}Q_\delta.
\]

Restrict the total quadratic Casimir $C_2$ to this weight space.
A constituent $\mathcal H_\chi$ of $\mathcal H_\mu\otimes\mathcal H_\omega$
that meets $\mathcal W_{\nu-\delta}$ has highest weight $\chi=\nu-\eta$ with
$\eta,\delta-\eta\in Q_+$. Thus $\eta$ is supported on the first
$r-1$ simple roots. For $\chi\ne\nu$, write
$\eta=\sum_{i<r}b_i\alpha_i$ with $\sum_i b_i=|\eta|\ge1$.
Dominance of $\chi$ and the supported row gaps of $\nu$ give
\begin{align}\label{eq:casimir_slice_gap}
c_\nu-c_\chi
&=(\nu+\chi+2\varrho,\eta)\nonumber\\
&=\sum_{i<r}b_i
  \bigl[(\nu_i-\nu_{i+1})+(\chi_i-\chi_{i+1})+2\bigr]\nonumber\\
&\ge(g_\mu+2)|\eta|\ge g_\mu+2.
\end{align}
The Cartan component occurs once, so
\[
c_\nu\id-C_2\ge(g_\mu+2)(\id-P_\delta).
\]

It remains to evaluate the Casimir deficit on $Q_\delta$. In the tensor-product
Casimir, the cross terms are
\[
2\sum_i H_i^{(\mu)}\otimes H_i^{(\omega)}
+2\sum_{\alpha>0}
\bigl(E_{-\alpha}^{(\mu)}\otimes E_\alpha^{(\omega)}
     +E_\alpha^{(\mu)}\otimes E_{-\alpha}^{(\omega)}\bigr).
\]
The root-exchange terms vanish when compressed by $Q_\delta$, because the
auxiliary register lies in its highest-weight line. The Cartan term is
$2(\mu-\delta,\omega)Q_\delta$. Using
$c_\nu=c_\mu+c_\omega+2(\mu,\omega)$, we obtain
\begin{equation}\label{eq:casimir_deficit_compression}
Q_\delta(c_\nu\id-C_2)Q_\delta=2(\delta,\omega)Q_\delta.
\end{equation}
Consequently,
\[
\|(\id-P_\delta)Q_\delta\|^2
=\|Q_\delta(\id-P_\delta)Q_\delta\|
\le\frac{2(\delta,\omega)}{g_\mu+2}.
\]
The projectors have equal rank, so
$\|P_\delta-Q_\delta\|=\|(\id-P_\delta)Q_\delta\|$.\footnote{For
projectors $P,Q$, the identity
$(P-Q)^2=P(\id-Q)P+(\id-P)Q(\id-P)$ is block diagonal with respect
to $\operatorname{ran}P\oplus\ker P$, and therefore
$\|P-Q\|=\max\{\|(\id-Q)P\|,\|(\id-P)Q\|\}$.
When the ranks agree, $PQP$ on $\operatorname{ran}P$ and $QPQ$ on
$\operatorname{ran}Q$ have the same eigenvalues: they are the squared singular
values of $PQ:\operatorname{ran}Q\to\operatorname{ran}P$.
Thus $P(\id-Q)P$ and $Q(\id-P)Q$ have equal norms, so the two
terms in the maximum coincide.}
Finally,
$0\le(\delta,\omega)\le|\delta|D$ gives~\eqref{eq:projector_casimir_bound}.
\end{proof}

\emph{Stage 3: averaging the block bounds.}
The uniformly bounded mean depth controls both the averaged
projector deficits and the tail mass needed to estimate the
eigenvalue ratio.

\begin{lem}[Uniform mean depth]\label{lem:tail}
Let \(\rho\) be a density matrix of rank \(r\le d\) with strictly decreasing nonzero spectrum \(x_1>\cdots>x_r>0\).
Let $\mu$ be a partition with at most $r$ rows. With $q_x$ as in
Lemma~\ref{lem:sector_gap}, including $q_x=0$ for $r=1$, set
\[
N:=|\Phi_r^+|=\binom r2,\qquad
K(x):=\frac{Nq_x}{(1-q_x)^{N+1}}.
\]
Then, uniformly over $\mu$,
\begin{equation}\label{eq:uniform_mean_depth}
\sum_{\delta}|\delta|\,p_\mu(\delta)m_\mu(\delta)
\le K(x).
\end{equation}
For $r=1$, both the mean depth and $K(x)$ are zero.
\end{lem}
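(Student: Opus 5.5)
Since $p_\mu(\delta)=x^{\mu-\delta}/s_\mu(x)$, the plan is to write the mean depth as a ratio. Factor out $x^\mu$ and put $y_i:=x_{i+1}/x_i\le q_x<1$ for $i<r$. Then $x^{-\delta}=\prod_{i<r}y_i^{c_i}$ when $\delta=\sum_{i<r}c_i\alpha_i$; this holds because $x^{-\alpha_i}=x_{i+1}/x_i$. Hence
\[
\sum_\delta|\delta|\,p_\mu(\delta)m_\mu(\delta)
=\frac{\sum_\delta|\delta|\,m_\mu(\delta)\,y^{\delta}}{\sum_\delta m_\mu(\delta)\,y^{\delta}},
\qquad y^\delta:=\prod_{i<r}y_i^{c_i}.
\]
For $r=1$ only $\delta=0$ occurs and both sides vanish, so assume $r\ge2$.

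The denominator is at least $1$, because the $\delta=0$ term is $m_\mu(0)=1$. For the numerator I will use the Verma bound~\eqref{eq:verma_multiplicity_bound}, $m_\mu(\delta)\le\mathsf P_r(\delta)$, which holds uniformly in $\mu$. Each positive root $e_i-e_j\in\Phi_r^+$ contributes the monomial $\prod_{i\le k<j}y_k$, of degree $j-i$ in the depth grading. This gives the generating-function identity
\[
\sum_\delta \mathsf P_r(\delta)\,y^\delta t^{|\delta|}
=\prod_{\alpha\in\Phi_r^+}\frac{1}{1-t^{|\alpha|}y^\alpha}.
\]
Differentiating in $t$ at $t=1$ yields
\[
\sum_\delta|\delta|\,\mathsf P_r(\delta)\,y^\delta
=\Bigl(\sum_{\alpha}\frac{|\alpha|\,y^\alpha}{1-y^\alpha}\Bigr)\prod_{\alpha}\frac1{1-y^\alpha}.
\]

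The remaining step is to bound this by $K(x)=Nq_x/(1-q_x)^{N+1}$. Here $y^\alpha\le q_x^{|\alpha|}\le q_x$, so each factor $1/(1-y^\alpha)\le 1/(1-q_x)$ and the product is at most $(1-q_x)^{-N}$. The sum needs more care, since $|\alpha|$ can exceed $1$. I use $|\alpha|q^{|\alpha|}/(1-q^{|\alpha|})\le q/(1-q)$, which follows from $k q^{k}\le q(1+q+\cdots+q^{k-1})=q(1-q^k)/(1-q)$; each of the $N$ terms is then at most $q_x/(1-q_x)$. Combining these bounds gives the claimed $K(x)$.

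The main obstacle is the denominator. One might try to cancel it against the product of geometric factors, but $m_\mu$ can be much smaller than $\mathsf P_r$ for small row gaps. Bounding the denominator simply by $1$ avoids this and is uniform in $\mu$, at the cost of the extra $(1-q_x)^{-N}$ factor, which is exactly what $K(x)$ allows. Everything else is routine generating-function manipulation.
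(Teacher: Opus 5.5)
Your proof is correct and follows essentially the same route as the paper. Both arguments bound the normalizer by its highest-weight term ($s_\mu(x)\ge x^\mu$), dominate $m_\mu$ by Kostant's partition function $\mathsf P_r$, and differentiate the positive-root generating function. The only difference is bookkeeping: the paper first collapses to one variable via $p_\mu(\delta)\le q_x^{|\delta|}$ and $[z^t]\prod_\alpha(1-z^{|\alpha|})^{-1}\le\binom{t+N-1}{N-1}$, whereas you keep the ratios $x_j/x_i$ exactly and relax factor by factor at the end, which gives a slightly sharper intermediate bound and the same constant $K(x)$.
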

\begin{proof}
For $r=1$, only the highest-weight line is supported, so the mean
depth vanishes. Assume $r\ge2$. Since
$s_\mu(x)\ge x^\mu$ and
$x^{-\delta}=\prod_{i<r}(x_{i+1}/x_i)^{c_i}$ for
$\delta=\sum_{i<r}c_i\alpha_i$,
\[
p_\mu(\delta)\le q_x^{|\delta|}.
\]
The multiplicity bound~\eqref{eq:verma_multiplicity_bound} gives
\[
\sum_{|\delta|=t}m_\mu(\delta)
\le[z^t]\prod_{\alpha\in\Phi_r^+}(1-z^{|\alpha|})^{-1}
\le[z^t](1-z)^{-N}
=\binom{t+N-1}{N-1}.
\]
Here $[z^t]$ denotes the coefficient of $z^t$.
Differentiating $(1-q_x)^{-N}$ now gives
\[
\sum_\delta|\delta|\,p_\mu(\delta)m_\mu(\delta)
\le\sum_{t\ge0}t\binom{t+N-1}{N-1}q_x^t
=\frac{Nq_x}{(1-q_x)^{N+1}}=K(x).
\]
\end{proof}

The eigenvalue ratio $p_\nu(\delta)/p_\mu(\delta)$ is independent
of $\delta$. Its deviation from one is controlled by the weight
sectors where the multiplicities can differ.

\begin{lem}[Eigenvalue ratio]\label{lem:ratio}
Let $\mu,\nu$ be partitions with at most $r$ rows and suppose
$\omega:=\nu-\mu$ is dominant. For a strictly decreasing positive
spectrum $x_1>\cdots>x_r>0$, set
\[
Z(x):=\frac{x^\omega s_\mu(x)}{s_\nu(x)}
=\frac{p_\nu(\delta)}{p_\mu(\delta)}.
\]
If $r=1$, then $Z(x)=1$. If $r\ge2$, then
\[
0\le1-Z(x)\le\frac{K(x)}{g_\mu+1}.
\]
\end{lem}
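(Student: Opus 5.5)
The plan is to write $s_\nu(x)$ and $s_\mu(x)$ as weight sums and compare them term by term. First, the identity $p_\nu(\delta)/p_\mu(\delta)=x^{\nu-\delta}s_\mu(x)/(x^{\mu-\delta}s_\nu(x))=x^\omega s_\mu(x)/s_\nu(x)$ holds for every $\delta$, so $Z(x)$ is well defined. For $r=1$ the only supported offset is $\delta=0$, so $s_\lambda(x)=x_1^{\lambda_1}$ and $Z=1$. For $r\ge2$, expand over offsets in $Q_+^{(r)}$:
\[
s_\mu(x)=x^\mu\sum_\delta m_\mu(\delta)x^{-\delta},\qquad
s_\nu(x)=x^\nu\sum_\delta m_\nu(\delta)x^{-\delta}.
\]
Hence
\[
Z(x)=\frac{\sum_\delta m_\mu(\delta)x^{-\delta}}{\sum_\delta m_\nu(\delta)x^{-\delta}}.
\]
Lemma~\ref{lem:kostka} gives $m_\mu(\delta)\le m_\nu(\delta)$ termwise, so $Z\le1$, and positivity gives $Z\ge0$.

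For the upper bound on $1-Z$, write
\[
1-Z=\frac{\sum_\delta[m_\nu(\delta)-m_\mu(\delta)]x^{-\delta}}{\sum_\delta m_\nu(\delta)x^{-\delta}}
=\sum_\delta p_\nu(\delta)\,[m_\nu(\delta)-m_\mu(\delta)],
\]
using $p_\nu(\delta)=x^{-\delta}/\sum_{\delta'}m_\nu(\delta')x^{-\delta'}$. By Lemma~\ref{lem:shallow_multiplicities}, applied to both $\mu$ and $\nu$ (note $g_\nu\ge g_\mu$ by dominance of $\omega$), $m_\mu(\delta)=m_\nu(\delta)=\mathsf P_r(\delta)$ whenever $|\delta|\le g_\mu$. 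So only offsets with $|\delta|\ge g_\mu+1$ contribute, and each such term is at most $p_\nu(\delta)m_\nu(\delta)$. This gives
\[
1-Z\le\sum_{|\delta|\ge g_\mu+1}p_\nu(\delta)m_\nu(\delta)
\le\frac{1}{g_\mu+1}\sum_\delta|\delta|\,p_\nu(\delta)m_\nu(\delta)\le\frac{K(x)}{g_\mu+1}.
\]
The second inequality is Markov's inequality, and the third is Lemma~\ref{lem:tail} applied to the partition $\nu$, which has at most $r$ rows.

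There is no real obstacle. The only care needed is bookkeeping: the multiplicity identity must be used for both $\mu$ and $\nu$ through $g_\nu\ge g_\mu$, and the mean-depth bound must be invoked for $\nu$ rather than $\mu$. Both are uniform over partitions, so the constant $K(x)$ is unaffected.
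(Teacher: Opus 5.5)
Your proposal is correct and follows the paper's proof essentially step for step. Both arguments use the identity $1-Z(x)=\sum_\delta p_\nu(\delta)[m_\nu(\delta)-m_\mu(\delta)]$ together with Lemma~\ref{lem:kostka} for nonnegativity, then Lemma~\ref{lem:shallow_multiplicities} (via $g_\nu\ge g_\mu$) to remove depths $|\delta|\le g_\mu$, and finally Markov's inequality with Lemma~\ref{lem:tail} applied to $\nu$. The only difference is that you derive the identity explicitly from the weight expansions of $s_\mu$ and $s_\nu$, whereas the paper compresses that step into the word ``normalization.''
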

\begin{proof}
Normalization and Lemma~\ref{lem:kostka} give
\[
1-Z(x)
=\sum_\delta p_\nu(\delta)
\bigl[m_\nu(\delta)-m_\mu(\delta)\bigr]\ge0.
\]
For $r\ge2$, the multiplicities agree at depths $|\delta|\le g_\mu$
by Lemma~\ref{lem:shallow_multiplicities}, since $g_\nu\ge g_\mu$.
Since $0\le m_\nu(\delta)-m_\mu(\delta)\le m_\nu(\delta)$,
the remaining sum satisfies
\[
1-Z(x)
=\sum_{|\delta|>g_\mu}p_\nu(\delta)
\bigl[m_\nu(\delta)-m_\mu(\delta)\bigr]
\le\sum_{|\delta|>g_\mu}p_\nu(\delta)m_\nu(\delta).
\]
Both the depth and $g_\mu$ are integers, so $|\delta|>g_\mu$
implies $|\delta|\ge g_\mu+1$. Therefore
\begin{align*}
\sum_{|\delta|>g_\mu}p_\nu(\delta)m_\nu(\delta)
&\le\frac{1}{g_\mu+1}
\sum_{|\delta|>g_\mu}|\delta|\,p_\nu(\delta)m_\nu(\delta)\\
&\le\frac{1}{g_\mu+1}
\sum_\delta|\delta|\,p_\nu(\delta)m_\nu(\delta)
\le\frac{K(x)}{g_\mu+1},
\end{align*}
where the last step is Lemma~\ref{lem:tail},
Eq.~\eqref{eq:uniform_mean_depth}, applied to $\nu$.
For $r=1$, only $\delta=0$ occurs and both multiplicities are one.
\end{proof}

The other scalar factor is the ratio of representation dimensions.

\begin{lem}[Weyl dimension ratio]\label{lem:dim_ratio}
Under the partition and dominance assumptions of
Theorem~\ref{thm:main},
\[
0\le1-\frac{d_\mu}{d_\nu}
\le\sum_{i<j}
\frac{\omega_i-\omega_j}{\mu_i-\mu_j+j-i}
\le\binom d2\frac{D}{b_\mu+1}.
\]
\end{lem}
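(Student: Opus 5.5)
We work directly from the Weyl dimension formula~\eqref{eq:weyl_dim}. Write $h_{ij}:=\mu_i-\mu_j+j-i>0$ for $i<j$. Since $\nu=\mu+\omega$, the corresponding factor for $\nu$ is $h_{ij}+(\omega_i-\omega_j)$. Dominance of $\omega$ gives $\omega_i-\omega_j\ge0$, so
\[
\frac{d_\mu}{d_\nu}=\prod_{i<j}\frac{h_{ij}}{h_{ij}+\omega_i-\omega_j}=\prod_{i<j}(1-t_{ij}),
\qquad t_{ij}:=\frac{\omega_i-\omega_j}{h_{ij}+\omega_i-\omega_j}\in[0,1).
\]
Every factor lies in $(0,1]$, which immediately gives $0\le 1-d_\mu/d_\nu$.

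For the middle inequality we use the elementary bound $1-\prod_k(1-t_k)\le\sum_k t_k$ for $t_k\in[0,1]$, proved by induction on the number of factors. We then apply $t_{ij}\le(\omega_i-\omega_j)/h_{ij}$, which holds because the added term in the denominator is nonnegative. This yields
\[
1-\frac{d_\mu}{d_\nu}\le\sum_{i<j}\frac{\omega_i-\omega_j}{\mu_i-\mu_j+j-i}.
\]

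For the last inequality we bound each summand by $D/(b_\mu+1)$. The numerator satisfies $\omega_i-\omega_j\le|\omega_i|+|\omega_j|\le D$. For the denominator we consider three cases according to where $i,j$ sit relative to $r$.

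\emph{Case $i<j\le r$.} Then $\mu_i-\mu_j\ge \mu_i-\mu_{i+1}\ge b_\mu$, because $i\le\min(r,d-1)$. With $j-i\ge1$ this gives $h_{ij}\ge b_\mu+1$.

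\emph{Case $i\le r<j$.} This requires $r<d$, and then $\mu_j=0$. So $\mu_i-\mu_j=\mu_i\ge\mu_r=\mu_r-\mu_{r+1}$. This gap is included in $b_\mu$ since $r\le\min(r,d-1)$. Hence again $h_{ij}\ge b_\mu+1$.

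\emph{Case $r<i<j$.} Here $\mu$ and $\nu$ vanish in both rows, so $\omega_i=\omega_j=0$ and the summand is zero.

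Summing over at most $\binom d2$ pairs gives the bound $\binom d2 D/(b_\mu+1)$. The only point needing care is the middle case, namely checking that the definition of $b_\mu$ includes the gap $\mu_r-\mu_{r+1}$ exactly when $r<d$. When $r=d$ that case is empty, so the argument is consistent with the convention stated after Theorem~\ref{thm:main}.
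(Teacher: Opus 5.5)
Your proposal is correct and follows the paper's proof essentially step by step: you write the Weyl dimension ratio as $\prod_{i<j}(1-t_{ij})$, apply $1-\prod_k(1-t_k)\le\sum_k t_k$, replace the $\nu$-denominators by the smaller $\mu$-denominators, and then bound each numerator by $D$ and each relevant denominator by $b_\mu+1$. Your three-case analysis of the denominators spells out in detail what the paper states in one line. In particular, it checks that the gap $\mu_r-\mu_{r+1}=\mu_r$ is included in $b_\mu$ exactly when $r<d$.
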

\begin{proof}
The Weyl dimension formula gives
\[
\frac{d_\mu}{d_\nu}
=\prod_{i<j}\left(1-
\frac{\omega_i-\omega_j}{\nu_i-\nu_j+j-i}\right).
\]
Each fraction lies in $[0,1)$. Using
$1-\prod_k(1-a_k)\le\sum_k a_k$ for $a_k\in[0,1]$ yields
\[
1-\frac{d_\mu}{d_\nu}
\le\sum_{i<j}\frac{\omega_i-\omega_j}{\nu_i-\nu_j+j-i}
\le\sum_{i<j}\frac{\omega_i-\omega_j}{\mu_i-\mu_j+j-i}.
\]
For $i>r$, the numerator is zero. For $i\le r$, it is at most $D$,
while the denominator is at least $b_\mu+1$.
\end{proof}

We are now ready to complete the proof of Theorem~\ref{thm:main}.

\begin{proof}[Proof of Theorem~\ref{thm:main}]
If $D=0$, then $\omega=0$ and both channels are the identity by
Propositions~\ref{prop:choi} and~\ref{prop:reverse}. Hence assume
$D\ge1$, as $D$ is an integer. We first bound the projector deficits
at every depth, then average them and compare the two channel outputs
with their target states. All sums below run over
$\delta\in Q_+^{(r)}$; absent weight spaces have zero
projector and multiplicity, as in Section~\ref{sec:prelim}.

\emph{Projector deficits and their averages.}
For $r\ge2$, set
\[
e_\delta:=\min\left\{1,\frac{2|\delta|D}{g_\mu+2}\right\}.
\]
At depths $|\delta|\le g_\mu$, Lemma~\ref{lem:perturbation}, together
with the general bound $\|P_\delta-Q_\delta\|\le1$ for
projectors, gives $\|P_\delta-Q_\delta\|^2\le e_\delta$.
At greater depths, integrality gives $|\delta|\ge g_\mu+1$, so
\[
\frac{2|\delta|D}{g_\mu+2}
\ge\frac{2(g_\mu+1)}{g_\mu+2}\ge1.
\]
Thus $e_\delta=1$ and the same projector bound holds without any
assumption on the ranks of the two weight spaces. For $r=1$, only
$\delta=0$ contributes; Lemma~\ref{lem:perturbation} gives
$P_0=Q_0$, and we set $e_0=0$.

The norm estimate controls the positive operator measuring the part
of $P_\delta$ outside $Q_\delta$:
\[
\|P_\delta(\id-Q_\delta)P_\delta\|
=\|(\id-Q_\delta)P_\delta\|^2
\le\|P_\delta-Q_\delta\|^2\le e_\delta.
\]
Since this operator is supported on $P_\delta$, it is at most
$e_\delta P_\delta$. Interchanging the projectors gives the analogous
bound on $Q_\delta$, and hence
\begin{equation}\label{eq:weight_block_bound}
P_\delta Q_\delta P_\delta\ge(1-e_\delta)P_\delta,\qquad
Q_\delta P_\delta Q_\delta\ge(1-e_\delta)Q_\delta.
\end{equation}
The cap $e_\delta\le1$ ensures that both lower bounds have
nonnegative coefficients. By~\eqref{eq:proj_mu_omega},
$\Tr P_\delta=m_\nu(\delta)$ and
$\Tr Q_\delta=m_\mu(\delta)$. For $r\ge2$, the corresponding
averaged trace deficits are controlled by
\begin{equation}\label{eq:averaged_projector_deficit}
M_\lambda:=\sum_\delta e_\delta p_\lambda(\delta)m_\lambda(\delta)
\le\frac{2D}{g_\mu+2}
\sum_\delta|\delta|\,p_\lambda(\delta)m_\lambda(\delta)
\le\frac{2D K(x)}{g_\mu+2},
\qquad \lambda\in\{\mu,\nu\},
\end{equation}
where the last inequality is Lemma~\ref{lem:tail},
Eq.~\eqref{eq:uniform_mean_depth}. Its constant is uniform over both
partitions. For $r=1$, define $M_\mu=M_\nu=0$ instead.

Equation~\eqref{eq:branch_trace_comparison} reduces the forward error
to the trace-distance term for the retained branch together with its
missing trace. We will control both terms at once by constructing
a positive remainder $R_{\mathrm f}$ such that
\[
\mathcal K_\omega(\rho_\mu)\ge\widetilde\rho_\nu-R_{\mathrm f}.
\]
The following elementary inequality shows that $\Tr R_{\mathrm f}$
then bounds the entire right-hand side of
\eqref{eq:branch_trace_comparison}. Let
$A\ge0$ have $\Tr A\le1$, let $B$ be a state, and suppose
$A\ge B-R$ for some $R\ge0$. Since
$\Tr(B-A)=1-\Tr A$, the positive-part identity and its variational
formula give
\begin{equation}\label{eq:positive_deficit_trace}
\frac{\|A-B\|_1+1-\Tr A}{2}
=\Tr(B-A)_+
=\max_{0\le L\le\id}\Tr[L(B-A)]\le\Tr R.
\end{equation}
Indeed, $B-A\le R$ implies
$\Tr[L(B-A)]\le\Tr(LR)\le\Tr R$ for every such $L$.
For normalized $A$, the left-hand side is the usual trace distance;
for a retained Kraus branch, it also includes the missing trace.

\emph{Forward channel.}
Lemma~\ref{lem:ratio} gives $p_\nu(\delta)=Z(x)p_\mu(\delta)$
with $0<Z(x)\le1$, so $p_\mu(\delta)\ge p_\nu(\delta)$.
Lemma~\ref{lem:dim_ratio} gives $0<d_\mu/d_\nu\le1$.
Since $Q_\delta$ lies in the total-weight sector $\nu-\delta$,
Eqs.~\eqref{eq:kraus_proj} and~\eqref{eq:proj_mu_omega} give
\[
\mathcal K_\omega(\Pi_{\mu-\delta})
=\frac{d_\mu}{d_\nu}P_\delta Q_\delta P_\delta.
\]
To construct the required remainder, use positivity of each term
and the eigenvalue comparison to obtain
\begin{align*}
\mathcal K_\omega(\rho_\mu)
=\frac{d_\mu}{d_\nu}\sum_\delta p_\mu(\delta)P_\delta Q_\delta P_\delta
\ge\frac{d_\mu}{d_\nu}\sum_\delta p_\nu(\delta)P_\delta Q_\delta P_\delta
=\widetilde\rho_\nu-R_{\mathrm f},
\end{align*}
where we used the target decomposition~\eqref{eq:embedded_target} and set
\[
R_{\mathrm f}:=\left(1-\frac{d_\mu}{d_\nu}\right)\widetilde\rho_\nu
+\frac{d_\mu}{d_\nu}\sum_\delta p_\nu(\delta)
P_\delta(\id-Q_\delta)P_\delta\ge0.
\]
These two terms account for the dimension ratio and the mismatch
between weight spaces, respectively. Taking traces and using
\eqref{eq:weight_block_bound} gives
\begin{align*}
\Tr R_{\mathrm f}
&=1-\frac{d_\mu}{d_\nu}
+\frac{d_\mu}{d_\nu}\sum_\delta p_\nu(\delta)
\Tr\bigl[P_\delta(\id-Q_\delta)P_\delta\bigr]\\
&\le1-\frac{d_\mu}{d_\nu}
+\frac{d_\mu}{d_\nu}\sum_\delta
 e_\delta p_\nu(\delta)m_\nu(\delta)
\le1-\frac{d_\mu}{d_\nu}+M_\nu.
\end{align*}
By~\eqref{eq:forward_monotonicity}, the retained branch has trace at
most one. We may therefore apply~\eqref{eq:positive_deficit_trace}
with $A=\mathcal K_\omega(\rho_\mu)$, $B=\widetilde\rho_\nu$,
and $R=R_{\mathrm f}$. The two terms on its left are exactly the
upper bound for the full channel in~\eqref{eq:branch_trace_comparison},
including the trace of the omitted branches. Consequently,
\begin{equation}\label{eq:forward_finite_deficit}
\frac12\|\C_{\mu\to\nu}(\rho_\mu)-\rho_\nu\|_1
\le\Tr R_{\mathrm f}
\le1-\frac{d_\mu}{d_\nu}+M_\nu.
\end{equation}

\emph{Reverse channel.}
Taking the adjoint in Proposition~\ref{prop:choi} and using
Proposition~\ref{prop:reverse} cancels the dimension factors:
\[
\C_{\nu\to\mu}(X)
=\frac{d_\nu}{d_\mu}\C_{\mu\to\nu}^{\dagger}(X)
=\Tr_\omega(VXV^\dagger).
\]
Define the isometry $Jv:=v\otimes\ket{\omega}$.
The partial trace is a sum of positive matrix elements over an
orthonormal auxiliary basis containing $\ket{\omega}$. Retaining
that one term gives
$\C_{\nu\to\mu}(\rho_\nu)\ge J^\dagger\widetilde\rho_\nu J$.
Moreover, in the total-weight sector $\nu-\delta$, fixing the auxiliary
weight to $\omega$ forces the input weight to be $\mu-\delta$.
Thus~\eqref{eq:proj_mu_omega} implies
\[
P_\delta J=P_\delta Q_\delta J,\qquad
J^\dagger Q_\delta J=\Pi_{\mu-\delta}.
\]
Using these identities, the decomposition~\eqref{eq:embedded_target},
and the second bound in~\eqref{eq:weight_block_bound}, we obtain
\begin{align*}
\C_{\nu\to\mu}(\rho_\nu)
\ge\sum_\delta p_\nu(\delta)
J^\dagger Q_\delta P_\delta Q_\delta J
\ge\sum_\delta p_\nu(\delta)(1-e_\delta)\Pi_{\mu-\delta}
=\sum_\delta Z(x)(1-e_\delta)
 p_\mu(\delta)\Pi_{\mu-\delta},
\end{align*}
where the last equality is Lemma~\ref{lem:ratio}.
Since $0<Z(x)\le1$ and $e_\delta\ge0$,
$Z(x)(1-e_\delta)\ge Z(x)-e_\delta$. The output is therefore at least
$\rho_\mu-R_{\mathrm r}$, where
\[
R_{\mathrm r}:=\sum_\delta(1-Z(x)+e_\delta)
 p_\mu(\delta)\Pi_{\mu-\delta}\ge0.
\]
The reverse output is a state, so applying~\eqref{eq:positive_deficit_trace}
and using $\sum_\delta p_\mu(\delta)m_\mu(\delta)=1$ yields
\begin{equation}\label{eq:reverse_finite_deficit}
\frac12\|\C_{\nu\to\mu}(\rho_\nu)-\rho_\mu\|_1
\le\Tr R_{\mathrm r}=1-Z(x)+M_\mu.
\end{equation}

\emph{Collecting the estimates.}
For $r\ge2$, substitute~\eqref{eq:averaged_projector_deficit} into
\eqref{eq:forward_finite_deficit}--\eqref{eq:reverse_finite_deficit}.
Lemma~\ref{lem:dim_ratio} bounds the forward dimension loss, while
Lemma~\ref{lem:ratio} bounds the reverse eigenvalue loss. Using
$b_\mu\le g_\mu$ and $D\ge1$ gives explicitly
\begin{align*}
\frac12\|\C_{\mu\to\nu}(\rho_\mu)-\rho_\nu\|_1
&\le\binom d2\frac{D}{b_\mu+1}
+\frac{2D K(x)}{g_\mu+2}
\le\left[\binom d2+2K(x)\right]\frac{D}{b_\mu+1},\\
\frac12\|\C_{\nu\to\mu}(\rho_\nu)-\rho_\mu\|_1
&\le\frac{K(x)}{g_\mu+1}
+\frac{2D K(x)}{g_\mu+2}
\le3K(x)\frac{D}{b_\mu+1}.
\end{align*}
For $r=1$, Lemma~\ref{lem:ratio} gives $Z(x)=1$, and
$M_\mu=M_\nu=0$. Equations~\eqref{eq:forward_finite_deficit} and
\eqref{eq:reverse_finite_deficit}, together with
Lemma~\ref{lem:dim_ratio}, then give the forward bound
$\binom d2 D/(b_\mu+1)$ and zero reverse error.
Thus in all cases one may take
$C_{d,x}=\binom d2+3K(x)$, with $K(x)$ as in
Lemma~\ref{lem:tail} (and $K(x)=0$ for $r=1$).
\end{proof}

\section*{Acknowledgments}
    We would like to thank Daniel Bump, Jiani Fei and Cynthia Yan for discussions and insights about covariant maps.
    YY acknowledges the support of the National Natural Science Foundation of China via the Excellent Young Scientists Fund (Hong Kong and Macau) Project 12322516 and the support of Research Grants Council (RGC) of Hong Kong SAR via the NSFC/RGC Joint Research Scheme N\_HKU7107/24. This material is based upon work supported by the U.S. Department of Energy, Office of Science, Office of High Energy and Nuclear Physics, under award number DE-SC0026551. The work of AM is supported by the Simons Foundation Grant No. 12574.
    
\section*{Generative AI usage}

The authors developed the connection between QMDL and free entropy
and the use of the PRV channel to extend YCH compression to qudits.
Generative AI assisted with proof exploration, mathematical revisions,
and language editing. Work with ChatGPT-o1, o3, and o4-mini developed
an initial strategy based on $\mathrm{GL}(d)$ Clebsch--Gordan
coefficients. Gemini 3 and 3.1 later suggested a projector-perturbation
argument using the Davis--Kahan theorem. The final proof of
Lemma~\ref{lem:perturbation} uses a direct Casimir argument suggested
by ChatGPT 6, yielding the finite bound in Theorem~\ref{thm:main}.
ChatGPT 6 and Fable 5.1 also assisted with manuscript review.
The authors reviewed and revised all AI-assisted material and take
responsibility for the mathematical arguments and the manuscript's content.

\appendix

\section{Proofs}\label{app:proofs}
\begin{proof}[Proof of Proposition~\ref{prop:choi}]
Let
\(
\ket{\Omega}=\sum_K \ket{\mu^*,K^*}\otimes\ket{\mu,K}
\)
be the unnormalized maximally entangled vector. By definition,
\(
J=(\mathrm{id}\otimes\mathcal C_{\mu\to\nu})(\ketbra{\Omega}{\Omega}),
\)
so using
\[
\mathcal C_{\mu\to\nu}(\sigma)
=\frac{d_\mu}{d_\nu}\,V^\dagger(\sigma\otimes\id_\omega)V,
\]
we get
\[
J=\frac{d_\mu}{d_\nu}\sum_{K,L}\sum_W
\ketbra{\mu^*,K^*}{\mu^*,L^*}
\otimes
\Bigl(
V^\dagger \ketbra{\mu,K;\omega,W}{\mu,L;\omega,W} V
\Bigr)\ .
\]

Define a linear map
\(
\tilde V:\mathcal H_{\mu^*}\otimes\mathcal H_\nu\to \mathcal H_\omega
\) by
\[
\tilde V\bigl(\ket{\mu^*,K^*}\otimes\ket{\nu,N}\bigr)
:=
(\bra{\mu,K}\otimes\id_\omega)V\ket{\nu,N}\ .
\]
Then its matrix elements satisfy
\[
\bra{\mu^*,K^*;\nu,N}\tilde V^\dagger \tilde V\ket{\mu^*,L^*;\nu,N'}
=
\sum_W
\bra{\nu,N}V^\dagger\ket{\mu,K;\omega,W}
\bra{\mu,L;\omega,W}V\ket{\nu,N'}.
\]
Comparing with the expression above for \(J\), we obtain
\[
J=\frac{d_\mu}{d_\nu}\,\tilde V^\dagger \tilde V.
\]

Contracting the $\mu$ factor of the intertwiner $V$ gives the intertwiner
\(
\tilde V:\mu^*\otimes\nu\to\omega.
\)
Hence \(\tilde V^\dagger\tilde V\) commutes with the representation \(\mu^*\otimes\nu\), and its support lies in the \(\omega\)-isotypic component.

The Cartan component has multiplicity one, so
\[
\dim\operatorname{Hom}_{U(d)}(\nu,\mu\otimes\omega)
=\dim\operatorname{Hom}_{U(d)}(\mu^*\otimes\nu,\omega)=1.
\]
Therefore \(\omega\) occurs with multiplicity one in \(\mu^*\otimes\nu\). If \(\Pi_\omega\) denotes the projector onto that unique copy, then $\tilde V^\dagger\tilde V$ is proportional to $\Pi_\omega$.
Its trace is
\[
\Tr(\tilde V^\dagger\tilde V)
=
\sum_{N,K}\left\|
(\bra{\mu,K}\otimes\id_\omega)V\ket{\nu,N}
\right\|^2
=
\sum_N \|V\ket{\nu,N}\|^2
=
\Tr(V^\dagger V)
=
d_\nu,
\]
since \(V\) is an isometry. Since $\Tr\Pi_\omega=d_\omega$, this gives
\[
\tilde V^\dagger\tilde V=\frac{d_\nu}{d_\omega}\Pi_\omega.
\]
Substituting back into the Choi formula gives
\[
J=\frac{d_\mu}{d_\nu}\tilde V^\dagger\tilde V
=\frac{d_\mu}{d_\omega}\Pi_\omega.
\]

Since $(\nu-\mu)^+=\omega$, this is the required PRV Choi operator.
\end{proof}

\begin{proof}[Proof of Proposition~\ref{prop:reverse}]
Covariance makes $\C_{\mu\to\nu}(\sigma_\mu)$ invariant.
Since $U_\nu$ is irreducible, Schur's lemma and trace preservation give
\[
\C_{\mu\to\nu}(\sigma_\mu)=\frac{\id_\nu}{d_\nu}.
\]

Therefore the Petz map with respect to \(\sigma_\mu\) is
\[
\mathcal R_{\sigma_\mu,\C_{\mu\to\nu}}(X)
=
\frac1{d_\mu}\,\C_{\mu\to\nu}^{\dagger}(d_\nu X)
=
\frac{d_\nu}{d_\mu}\,\C_{\mu\to\nu}^{\dagger}(X).
\]
So it remains to identify the scaled adjoint with the reverse cloning channel.

Let
\(
\operatorname{vec}:\operatorname{Hom}(\mathcal H_\mu,\mathcal H_\nu)\longrightarrow \mathcal H_\mu^*\otimes \mathcal H_\nu
\)
be the canonical vectorization isometry. Choose Hilbert--Schmidt
orthonormal operators $K_a\in\operatorname{Hom}(\mathcal H_\mu,\mathcal H_\nu)$,
$1\le a\le d_\omega$, whose vectorizations form an orthonormal basis
for the range of $\Pi_\omega$.
Then
\[
J_\omega
=
\frac{d_\mu}{d_\omega}
\sum_{a=1}^{d_\omega}
\ketbra{\operatorname{vec}(K_a)}{\operatorname{vec}(K_a)},
\]
and hence
\[
\C_{\mu\to\nu}(X)
=
\frac{d_\mu}{d_\omega}
\sum_{a=1}^{d_\omega}
K_a X K_a^\dagger.
\]
Consequently,
\[
\frac{d_\nu}{d_\mu}\,\C_{\mu\to\nu}^{\dagger}(Y)
=
\frac{d_\nu}{d_\omega}
\sum_{a=1}^{d_\omega}
K_a^\dagger Y K_a.
\]

Now the span
\(
\operatorname{span}\{K_a\}
\subset \operatorname{Hom}(\mathcal H_\mu,\mathcal H_\nu)
\)
carries the irrep \(\omega\) under the action
\[
K\longmapsto U_\nu(g) K U_\mu(g)^\dagger.
\]
Taking Hilbert--Schmidt adjoints defines an antiunitary intertwiner
\[
K\longmapsto K^\dagger
\]
from \(\operatorname{Hom}(\mathcal H_\mu,\mathcal H_\nu)\) to \(\operatorname{Hom}(\mathcal H_\nu,\mathcal H_\mu)\). Therefore
\(
\operatorname{span}\{K_a^\dagger\}
\subset \operatorname{Hom}(\mathcal H_\nu,\mathcal H_\mu)
\)
carries the dual irrep \(\omega^*\). Since
\[
\omega^*=(-w_0\omega)=(\mu-\nu)^+,
\]
this is exactly the PRV component for the reversed pair \((\nu,\mu)\).

The operators $K_a^\dagger$ remain Hilbert--Schmidt orthonormal,
so the scaled adjoint has Choi operator
$(d_\nu/d_{\omega^*})\Pi_{\omega^*}$, with
$d_{\omega^*}=d_\omega$. This is the defining Choi operator of
$\C_{\nu\to\mu}$, proving
\[
\frac{d_\nu}{d_\mu}\,\C_{\mu\to\nu}^{\dagger}
=
\C_{\nu\to\mu}.
\]
Combining this with the Petz formula above proves the claim.
\end{proof}

\begin{proof}[Proof of Lemma~\ref{lem:sector_gap}]
For $r=1$, $\rho_\lambda$ is pure and $\gamma=\gamma_x=1$.
Assume $r\ge2$ and use the weight notation of
Section~\ref{sec:prelim}. The nonzero spectrum agrees with that of
the corresponding $\mathrm{GL}(r)$ representation. The highest-weight
line has eigenvalue $p_0=x^\lambda/s_\lambda(x)$. Every other
nonzero eigenvalue, at weight $\lambda-\delta$, divided by $p_0$ is
\[
x^{-\delta}
=\prod_{i=1}^{r-1}\left(\frac{x_{i+1}}{x_i}\right)^{c_i}
\le q_x,
\qquad
\delta=\sum_{i=1}^{r-1}c_i\alpha_i,\quad
c_i\in\mathbb Z_{\ge0},\quad \sum_i c_i>0.
\]
Since $q_x<1$ and the highest-weight space is one-dimensional,
the largest eigenvalue is simple and $p_1\le q_xp_0$.

It remains to bound $p_0$ below uniformly in $\lambda$.
Expanding the Schur polynomial over weight spaces and applying
\eqref{eq:verma_multiplicity_bound} term by term gives
\[
p_0^{-1}=\frac{s_\lambda(x)}{x^\lambda}
=\sum_{\delta\in Q_+^{(r)}}m_\lambda(\delta)x^{-\delta}
\le\sum_{\delta\in Q_+^{(r)}}\mathsf P_r(\delta)x^{-\delta}.
\]
Here all terms are nonnegative. By definition,
$\mathsf P_r(\delta)$ counts the choices of nonnegative integers
$k_{ij}$ satisfying
\[
\delta=\sum_{1\le i<j\le r}k_{ij}(e_i-e_j).
\]
For each such choice, the definition $x^w=\prod_{\ell=1}^r x_\ell^{w_\ell}$
gives $x^{-(e_i-e_j)}=x_j/x_i$, and hence
\[
x^{-\delta}=\prod_{1\le i<j\le r}
\left(\frac{x_j}{x_i}\right)^{k_{ij}}.
\]
Thus summing over all offsets and their decompositions is equivalent
to summing independently over every $k_{ij}$, yielding
\[
\sum_{\delta\in Q_+^{(r)}}\mathsf P_r(\delta)x^{-\delta}
=\prod_{1\le i<j\le r}\sum_{k=0}^{\infty}
\left(\frac{x_j}{x_i}\right)^k
=\prod_{1\le i<j\le r}\left(1-\frac{x_j}{x_i}\right)^{-1}.
\]
The geometric series converge because $x_j/x_i<1$ for $i<j$.
Taking reciprocals gives
$p_0\ge\prod_{i<j\le r}(1-x_j/x_i)$, and hence
\[
\gamma=p_0-p_1\ge(1-q_x)p_0
\ge(1-q_x)\prod_{1\le i<j\le r}
\left(1-\frac{x_j}{x_i}\right)=\gamma_x.
\]
\end{proof}

\begin{proof}[Proof of Lemma~\ref{lem:block_entropy}]
Assume $r\ge2$ and use the weight notation of
Section~\ref{sec:prelim}. All sums below are over
$\delta\in Q_+^{(r)}$. Set
\[
Z_\lambda:=\sum_\delta m_\lambda(\delta)x^{-\delta},\qquad
B_\lambda:=\sum_\delta m_\lambda(\delta)x^{-\delta}\log x^\delta.
\]
The nonzero eigenvalues of $\rho_\lambda$ are $x^{-\delta}/Z_\lambda$,
with multiplicity $m_\lambda(\delta)$, so
\[
S(\rho_\lambda)=\log Z_\lambda+B_\lambda/Z_\lambda.
\]
The Verma-module bound~\eqref{eq:verma_multiplicity_bound} gives
$m_\lambda(\delta)\le\mathsf P_r(\delta)$, where
$\mathsf P_r$ is the positive-root partition function.
Replacing $m_\lambda$ by $\mathsf P_r$ defines the convergent sums
$Z_\infty,B_\infty$. Their product formulas are
\[
Z_\infty=\prod_{i<j\le r}\left(1-\frac{x_j}{x_i}\right)^{-1},
\qquad
\frac{B_\infty}{Z_\infty}
=\sum_{i<j\le r}\frac{x_j}{x_i-x_j}\log\frac{x_i}{x_j}.
\]
Both formulas follow by summing the independent nonnegative
occupation numbers of the positive roots. Since
$1\le Z_\lambda\le Z_\infty$ and $0\le B_\lambda\le B_\infty$,
the entropies are uniformly bounded.
By Lemma~\ref{lem:shallow_multiplicities},
$m_\lambda(\delta)=\mathsf P_r(\delta)$ whenever $|\delta|\le g_\lambda$.
The omitted tails of the two convergent positive sums tend to zero
as $g_\lambda\to\infty$. Hence $Z_\lambda\to Z_\infty$ and
$B_\lambda\to B_\infty$, giving~\eqref{eq:block_entropy_limit}.
The case $r=1$ is immediate because $\rho_\lambda$ is pure.
\end{proof}

\begin{proof}[Proof of Lemma~\ref{lem:shallow_multiplicities}]
We show that at these depths all nonidentity terms in Kostant's
multiplicity formula vanish, so the multiplicity is exactly the
number of positive-root partitions of $\delta$. The formula
\cite{kostant1959multiplicity} reads
\[
m_\lambda(\delta)
=\sum_{w\in S_d}\operatorname{sgn}(w)\,
\mathsf P_d\bigl(\delta-[\lambda+\varrho-w(\lambda+\varrho)]\bigr).
\]
Recall that $\mathsf P_d$ is zero outside $Q_+$: its argument must
have nonnegative coefficients in the simple-root basis.
Write $\delta=\sum_{i=1}^{d-1}c_i\alpha_i$. By hypothesis,
$c_i\ge0$, $c_i=0$ for $i\ge r$, and
$c_i\le|\delta|\le g_\lambda$ for every $i$.

Fix $w\ne\mathrm{id}$, and let $k\le d-1$ be the first coordinate
moved by $w$. Then $w^{-1}(i)=i$ for $i<k$, while $w^{-1}(k)>k$.
For a vector with zero coordinate sum, the coefficient of
$\alpha_k=e_k-e_{k+1}$ in its simple-root expansion is the sum
of its first $k$ coordinates. Thus the $\alpha_k$ coefficient of
$\lambda+\varrho-w(\lambda+\varrho)$ is
\begin{align*}
\sum_{i=1}^k\bigl[(\lambda+\varrho)_i
 -(w(\lambda+\varrho))_i\bigr]
&=(\lambda+\varrho)_k-(\lambda+\varrho)_{w^{-1}(k)}\\
&\ge(\lambda+\varrho)_k-(\lambda+\varrho)_{k+1}=\lambda_k-\lambda_{k+1}+1.
\end{align*}
Here dominance of $\lambda$ and
$\varrho_i-\varrho_{i+1}=1$ make $\lambda+\varrho$ strictly decreasing,
which justifies the inequality.
If $k<r$, the last expression is at least $g_\lambda+1>c_k$.
If $k\ge r$, it is at least one, whereas $c_k=0$.
In either case, subtracting this vector from $\delta$ produces a
negative $\alpha_k$ coefficient. The argument of $\mathsf P_d$
is therefore outside $Q_+$, and the term indexed by $w$ vanishes.
Only $w=\mathrm{id}$ remains, giving
$m_\lambda(\delta)=\mathsf P_d(\delta)$.

Finally, each positive root has the expansion
$e_i-e_j=\alpha_i+\cdots+\alpha_{j-1}$. A root with $j>r$
would contribute a positive coefficient to some $\alpha_k$ with
$k\ge r$. Such contributions cannot cancel in a nonnegative sum
of positive roots, while all these coefficients of $\delta$ are zero.
Thus every root partition of $\delta$ uses only roots in
$\Phi_r^+$, so $\mathsf P_d(\delta)=\mathsf P_r(\delta)$.
\end{proof}

\section{Additional cost of an unknown spectrum}\label{app:unknown_spectrum}

We use the notation of Section~\ref{sec:redundancy}. When the spectrum is unknown, the code's weights $P_n$ must be
chosen independently of $x$. Equation~\eqref{eq:invariant_split}
identifies the resulting penalty as the classical relative
entropy $D(q\|P_n)$; the eigenbasis contribution remains
$R_n(x)$. We now evaluate this additional cost for Hayashi's
universal codes.

Assume full rank and $x_1>\cdots>x_d>0$. Write\footnote{Our
$v(x)$ is the permutation expression in
\cite[Eq.~(10)]{hayashi2010universal}, following
\cite{matsumoto2007universal}:
\[
v(x)=\frac{1}{\Delta(x)}
\sum_{\pi\in S_d}\operatorname{sgn}(\pi)
\left(\prod_i x_i^{d-\pi(i)}\right)
\log\left(\prod_i x_i^{d-\pi(i)}\right).
\]
Indeed, applying $\sum_i x_i\log x_i\,\partial_{x_i}$ to the product
and determinant expansions of
$\Delta(x)=\det[x_i^{d-j}]_{i,j=1}^d$ gives the two forms.}
\[
\Delta(x):=\prod_{i<j}(x_i-x_j),\qquad
v(x):=\sum_{i<j}\frac{x_i\log x_i-x_j\log x_j}{x_i-x_j}.
\]
Equation~\eqref{eq:block_entropy_limit} gives
$s_{\mathrm{blk}}(x)=v(x)-\log\Delta(x)$, so
\begin{equation}\label{eq:orbit_redundancy_fullrank}
R_n(x)=\frac{d(d-1)}2\log n+2\log\Delta(x)-v(x)
-\sum_{k=1}^{d-1}\log k!+o(1).
\end{equation}

Hayashi's uniform code $\sigma_{U,n}=\sigma_{P_{U,n},n}$ assigns equal
weight to each of the $\sim n^{d-1}/[d!(d-1)!]$ Young diagrams
\cite{hayashi2010universal}. Define
\begin{equation}\label{eq:hayashi_C}
C(x):=2\log\Delta(x)-\frac12\sum_i\log x_i-v(x).
\end{equation}
With the Weyl denominator
$\prod_{i<j}(j-i)=\prod_{k=1}^{d-1}k!$, he obtained the expansion
\begin{align}\label{eq:hayashi_pointwise}
D(\rho^{\otimes n}\|\sigma_{U,n})
={}&\frac{d^2-1}{2}\log n+C(x)-\frac{d-1}{2}\log(2\pi e)\nonumber\\
&-\sum_{k=1}^{d-1}\log k!-\log[d!(d-1)!]+o(1).
\end{align}
Subtracting~\eqref{eq:orbit_redundancy_fullrank} in the exact
decomposition~\eqref{eq:invariant_split} gives
\begin{equation}\label{eq:spectral_redundancy_uniform}
D(q\|P_{U,n})
=\frac{d-1}{2}\log n-\frac12\sum_i\log x_i
-\frac{d-1}{2}\log(2\pi e)-\log[d!(d-1)!]+o(1).
\end{equation}
The extra leading cost is therefore $\frac{d-1}{2}\log n$, as in
classical coding with $d-1$ unknown spectral
parameters~\cite{clarke1990information}. Its constant contains
the multinomial Fisher determinant and the normalization of the
uniform spectral prior.

The uniform code is one choice of prior. Optimizing the asymptotic
constant over codes for all fixed full-rank states gives Hayashi's
minimax value~\cite{hayashi2010universal}:
\begin{align}\label{eq:hayashi_minimax}
&\inf_{\{\sigma_n\}}\sup_{\rho>0}\lim_{n\to\infty}
\left[D(\rho^{\otimes n}\|\sigma_n)-\frac{d^2-1}{2}\log n\right]\nonumber\\
&\qquad=-\frac{d-1}{2}\log(2\pi e)
-\sum_{k=1}^{d-1}\log k!+\log\int 2^{C(x)}\,\dd x.
\end{align}
The infimum is over code sequences for which the displayed limit
exists in the extended real numbers for every fixed full-rank state
$\rho$.
The integral is over ordered positive spectra with $\sum_i x_i=1$
and $\dd x=\dd x_1\cdots\dd x_{d-1}$. An asymptotically optimizing
code is the continuous mixture
\[
\sigma_{J,n}:=\int J(x)\,\overline{\rho(x)^{\otimes n}}\,\dd x,
\qquad
J(x):=\frac{2^{C(x)}}{\int2^{C(y)}\,\dd y},
\]
where $\rho(x)=\operatorname{diag}(x)$ and the overline denotes
Haar averaging. This is the quantum analogue of the Jeffreys-prior
construction~\cite{clarke1990information,clarke1994jeffreys}.
Changing the prior changes the classical term
in~\eqref{eq:invariant_split}. For both codes, the leading overhead is
\[
\frac{d^2-1}{2}\log n
=\underbrace{\frac{d(d-1)}2\log n}_{\text{unknown eigenbasis}}
+\underbrace{\frac{d-1}2\log n}_{\text{unknown spectrum}}.
\]
QMDL determines the first contribution through its additive
constant, with the block-entropy correction
in~\eqref{eq:orbit_redundancy}; universal compression must also
pay for the unknown spectrum.

\bibliographystyle{utphys}
\bibliography{free}

\end{document}